\documentclass[11pt]{article}
\usepackage[margin=1in]{geometry}

\usepackage{amsmath, amsthm, mathtools, dsfont, mathdots}
\usepackage{newtxtext}
\usepackage{newtxmath}
\usepackage{multirow}
\usepackage{booktabs} 
\renewcommand{\paragraph}[1]{\medskip\noindent\textbf{#1}}

\usepackage[hidelinks]{hyperref}
\makeatletter
\pdfstringdefDisableCommands{\let\HyPsd@CatcodeWarning\@gobble}
\makeatother

\usepackage{wrapfig}
\usepackage{makecell} 

\usepackage{xspace}
\usepackage{graphicx}

\usepackage{booktabs}
\usepackage{multirow}

\usepackage[linesnumbered,ruled]{algorithm2e}
\usepackage[noend]{algpseudocode}

\usepackage{array}
\usepackage{color}

\newcommand{\OPT}{\texttt{OPT}\xspace}

\newtheorem{assumption}{Assumption}
\newtheorem{definition}{Definition}

\newtheorem{theorem}{Theorem}

\newtheorem{lemma}{Lemma}
\newtheorem{proposition}{Proposition}

\DeclareMathOperator*{\argmax}{arg\,max}
\usepackage{mathtools}
\usepackage{comment}

\newcommand{\opt}{\texttt{OPT}\xspace}
\newcommand{\alg}{\texttt{ALG}\xspace}
\newcommand{\algfrac}{\texttt{ALG-FRAC}\xspace}

\newcommand{\alphathm}{1 + \ln\left(\frac{U}{L}\right)}

\newcommand{\cvar}{\ensuremath{\mathsf{CVaR}_{\delta}}\xspace}

\newcommand{\crcvar}{\ensuremath{\textsf{CVaR}_{\delta}\textsf{-CR}}\xspace}

\newcommand{\alphaFSP}{\alpha_{\delta}^{\textsf{SP}}\xspace}
\newcommand{\alphaFDP}{\alpha_{\delta}^{\textsf{DP}}\xspace}
\newcommand{\alphaDDP}{\alpha_{\delta}^{\Delta\textsf{-DP}}\xspace}



\newcommand{\oksrisk}{\texorpdfstring{\textsc{ROS-$(\delta,\Delta)$}}{kSelection-(delta,Delta)}\xspace}

\newcommand{\algname}{\textsf{cPPM}-$\boldsymbol{\phi}$\xspace}

\usepackage[commandnameprefix=always]{changes}

\usepackage{soul}
\sethlcolor{yellow}

\setlength{\marginparwidth}{1.8cm}

\usepackage{enumitem}
\usepackage{subcaption} 

\author{
    Hossein Nekouyan\thanks{University of Alberta. Email: \texttt{nekouyan@ualberta.ca}}\\
    \and
    Bo Sun\thanks{University of Ottawa. Email:
    \texttt{bo.sun@uottawa.ca}}\\
    \and 
    Raouf Boutaba\thanks{University of Waterloo. Email:
    \texttt{rboutaba@uwaterloo.ca}}\\
    \and
    Xiaoqi Tan\thanks{University of Alberta. 
    Email: \texttt{xiaoqi.tan@ualberta.ca}}
}

\begin{document}

\title{Risk-Sensitive Online Selection with\\ Bounded Adaptivity}

\maketitle

\begin{abstract}
Designing randomized online algorithms that perform reliably not only in expectation but also under unfavorable realizations of randomness is a fundamental challenge in online decision-making. In this paper, we study this challenge in online adversarial selection, where a decision maker allocates $k$ units of a resource to sequentially arriving buyers through posted prices. We focus on two intertwined considerations that are often overlooked simultaneously: tail-risk sensitivity and bounded adaptivity, where tail risk is measured using conditional value-at-risk (CVaR) and bounded adaptivity limits the number of allowable policy updates over time. Our main contribution is a correlated posted-price mechanism that uses a single random seed to coordinate pricing decisions across time. This correlation induces a monotonic ordering of pricing profiles across sample paths, improving lower-tail performance while respecting the adaptivity constraint. More broadly, our results highlight correlation as a mechanism for controlling tail risk in randomized online algorithms. Using this framework, we derive competitive guarantees for several regimes of the problem under both static and dynamic pricing. Our analysis develops a risk-sensitive randomized online primal-dual framework tailored to CVaR objectives and reveals a systematic trade-off between allowable adaptivity, risk sensitivity, and competitive performance. Experiments on real airline pricing data further illustrate the empirical impact of correlated pricing on welfare concentration and tail behavior.

\end{abstract}

\section{Introduction}

Randomization is a powerful technique for improving the \emph{expected} performance of algorithms. However, the performance of a randomized algorithm can vary substantially across different realizations of its internal randomness, and algorithms with strong expected guarantees may still exhibit poor behavior on unfavorable sample paths. Developing randomized algorithms that perform reliably under such tail realizations is therefore a fundamental challenge in algorithmic decision-making, particularly in sequential, combinatorial, and learning-based settings where randomization plays a central role. Consequently, risk-sensitive objectives, which penalize poor tail outcomes rather than optimizing solely for expected performance, have recently attracted significant attention in machine learning and theoretical computer science (e.g.,~\cite{ni2024risk,wang2023near,submodular,simchilevi2023stochastic,bandits2}).

In this paper, we study the design of risk-sensitive online algorithms for online resource allocation, specifically online selection problems. Our work is motivated by two intertwined considerations that are often studied in isolation: tail-risk sensitivity and bounded adaptivity. First, while randomized online algorithms often achieve strong expected competitive guarantees for social welfare or revenue, their tail behavior remains significantly less understood. In online resource allocation, poor lower-tail performance of randomized algorithms is a recognized yet underexplored issue~\cite{Dinitz2024}. In this work, we adopt Conditional Value-at-Risk (CVaR) as our primary performance metric to ensure robustness under unfavorable sample paths. Second, we seek to constrain arbitrary adaptivity in posted-price mechanisms (PPMs), which are widely used in online resource allocation settings ranging from combinatorial auctions~\cite{golrezaei2014} and prophet inequalities~\cite{prophetlimited,prophetK} to online adversarial selection~\cite{sun2024static,jazi2025posted}. While dynamically changing prices can improve competitive performance, unconstrained adaptivity may induce de facto price discrimination over time and lead to high operational costs~\cite{pricechangeCost,priceDiscri2017}. We therefore ask whether competitive randomized PPMs can remain effective while satisfying a price-change cap~$\Delta$, which limits the total number of allowable policy updates.

\begin{figure}
    \centering
    \includegraphics[width = 0.95\linewidth]{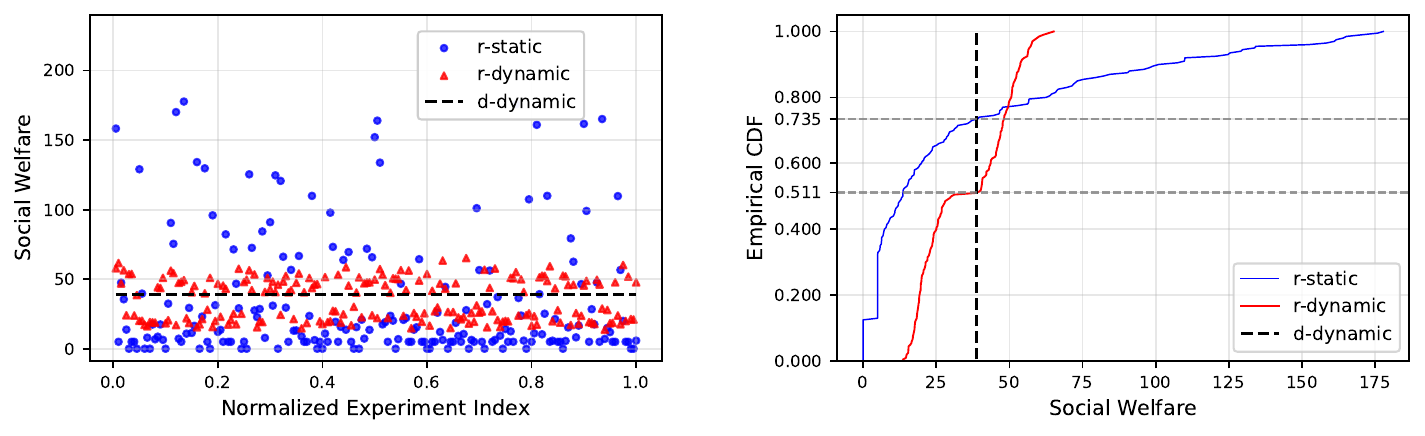}    
    \caption{Performance comparison of the \textsf{r-static}, \textsf{r-dynamic}, and \textsf{d-dynamic} algorithms on an instance of the \oksrisk problem. The \textsf{r-static} algorithm from~\cite{sun2024static} employs a single randomized price, the \textsf{r-dynamic} algorithm from~\cite{jazi2025posted} uses $k$ independent random seeds to generate $k$ randomized dynamically increasing prices, and the \textsf{d-dynamic} algorithm from~\cite{tan2023threshold} uses $k$ deterministic dynamically changing prices. The left plot shows performance over $10^4$ independent runs, and the right plot shows the empirical CDF.}
    \label{fig:intro} 
\end{figure}

In this work, we introduce \oksrisk, a variant of the online adversarial $k$-selection problem in which the objective is to optimize risk-sensitive social welfare under a price-change cap~$\Delta$. Specifically, we evaluate algorithms using \cvar of the total social welfare, where \cvar measures the expected performance over the worst $\delta$-fraction of sample paths. Existing works have made initial progress on risk-sensitive online algorithms and limited-adaptivity pricing, but largely treat these challenges separately. For example, recent work on limited adaptivity~\cite{sun2024static} focuses exclusively on static pricing ($\Delta = 0$). While such randomized static schemes achieve optimal expected guarantees and eliminate price discrimination, they often exhibit poor tail behavior. As illustrated in Figure~\ref{fig:intro}, the performance of \textsf{r-static} frequently falls below deterministic benchmarks and may even yield zero welfare with non-negligible probability. Conversely, dynamic randomized pricing schemes~\cite{jazi2025posted} can improve tail performance, but rely on essentially unconstrained adaptivity.

Our work takes a unified view of risk sensitivity and bounded adaptivity through the lens of correlation. In particular, we show how carefully correlating randomized pricing decisions across time can substantially improve lower-tail performance while respecting adaptivity constraints. This leads to a framework that reveals a fundamental trade-off between tail risk, allowable adaptivity, and competitive performance.

\begin{table}[t]
\centering
\footnotesize 
\setlength{\tabcolsep}{4pt}
\resizebox{\linewidth}{!}{%
\begin{tabular}{@{}l c c c c@{}}
\toprule
 & $ \delta $ &
$ \Delta = 0 $ & $ \Delta = k - 1 $ &
$ 1 \le \Delta \le k - 2$ \\
\midrule
\multirow{2}{*}{\textsc{\oksrisk}}
  & $\delta = 1$        
    & \checkmark~Optimal~\cite{sun2024static} 
    & \checkmark~Optimal~(\textbf{Theorem~\ref{thm:limited:price:change:optimality}}) 
    & \checkmark~Optimal~(\textbf{Theorem~\ref{thm:limited:price:change:optimality}}) \\
  & $\delta \in (0,1)$  
    & \checkmark~Optimal~(\textbf{Theorem~\ref{thm:design:cvar:static}})         
    & \checkmark~Optimal as $k \!\to\infty$~(\textbf{Theorem~\ref{thm:k:cvar:design:phi}}) 
    & Best-known~(\textbf{Theorem~\ref{thm:general:cvar:design:phi}}) \\
\addlinespace
\textsc{OSCC}  
  & $\delta = 1$      
    & \checkmark~Optimal~\cite{sun2024static} 
    & \checkmark~Optimal as $k \!\to\infty$~\cite{jazi2025posted}   
    & No known results \\
\addlinespace
\textsc{kSearch}
  & $\delta \in (0,1]$ 
    & Best-known~\cite{ChristiansonRisk2024} 
    & Optimal for $\delta = 1$,  as $k \!\to\infty$~\cite{kmaxSearch}  
    & No known results \\
\addlinespace
\textsc{IID-Prophet} 
  & $\delta = 1$ 
    & \checkmark~Optimal~\cite{prophetStatic} 
    & \checkmark~Optimal~\cite{prophetK} 
    & Best-known~\cite{prophetlimited} \\
\bottomrule
\end{tabular}
\par}%
\vspace{+0.2cm}
\caption{Summary of results for various online selection problems under different $(\delta, \Delta)$ settings and arrival models.
All results concern the design of posted-price or threshold-based algorithms.  The \textsc{OSCC} problem, studied in~\cite{sun2024static,jazi2025posted}, introduces a production cost associated with producing each additional unit of the item.
The \textsc{kSearch} problem, studied in~\cite{ChristiansonRisk2024,kmaxSearch}, is a variant of online selection where the decision maker is notified upon the arrival of the last buyer, thereby limiting the uncertainty about total demand.
In the \textsc{IID-Prophet} setting studied in \cite{prophetK,prophetStatic,prophetlimited}, buyers’ values are drawn independently from a common distribution known to the decision maker, allowing the use of distributional information in decision-making.}
\label{tab:optimality-summary}
\end{table}

\subsection{Our Contribution and Techniques}
This paper introduces the \oksrisk{} problem and develops a unified framework for studying risk-sensitive online selection under bounded adaptivity. Our theoretical contributions begin with the risk-neutral setting (i.e., $\delta = 1$), where we develop a family of posted-price mechanisms (PPMs) that achieve the optimal competitive ratio for every price-change cap $\Delta \in \{0,1,\dots,k-1\}$, significantly extending prior work on static pricing. We then turn to the risk-sensitive regime (i.e., $\delta \in (0,1)$). For the fully-static case (i.e., $\Delta = 0$), we characterize an optimal risk-sensitive static pricing scheme and establish the best possible $\cvar$-competitive ratio among single-price algorithms. Finally, we develop a general $\Delta$-level pricing framework for arbitrary adaptivity and risk levels. This framework reveals a systematic trade-off between allowable adaptivity, tail-risk sensitivity, and competitive performance, yielding monotone improvements in competitiveness as either $\Delta$ or $\delta$ increases. In the fully-dynamic regime ($\Delta = k-1$), our framework further attains asymptotically optimal performance in the large-inventory limit ($k \to \infty$). Table~\ref{tab:optimality-summary} summarizes our main results and compares them with existing works; additional related work is deferred to Appendix~\ref{apx:related:work}.

Our technical contributions center on two interrelated ideas. First, we develop a correlated PPM in which pricing decisions across time are coupled through a single random seed. This correlation induces a monotonic ordering of pricing profiles across sample paths, substantially improving lower-tail performance while simultaneously serving as a lossless online rounding scheme for fractional allocations. More broadly, our results highlight correlation as a structural mechanism for controlling tail risk in randomized online algorithms. Second, we establish performance guarantees through a novel risk-sensitive randomized online primal-dual (R-OPD) framework. This framework utilizes a dual program tailored to the $\Delta$-capped setting and restricts dual updates to the worst $\delta$-fraction of sample paths, thereby aligning the dual objective with the algorithm's $\cvar$ performance. The resulting analysis naturally leads to systems of delay differential equations~\cite{DDE}, capturing a memory effect induced by tail realizations and correlated sample paths.

Beyond the specific setting studied in this paper, both the correlated pricing scheme and the risk-sensitive R-OPD framework appear broadly applicable. We expect these techniques to provide useful tools for studying risk sensitivity and adaptivity constraints in a wider range of  online decision-making problems.

\section{Problem Setting}
\label{sec:setting}
We introduce the problem of \oksrisk as follows: A seller has $k$ identical units of an item, and faces $T$ buyers arriving one by one. When buyer $t$ arrives, the seller posts a price $p_t$. Buyer $t$ has a private value $v_t$ and accepts the price if $v_t \ge p_t$; otherwise the buyer leaves. The price may change at most $\Delta$ times over the horizon, i.e., $\sum_{t=1}^{T-1} \mathbf{1}_{\{p_t \ne p_{t+1}\}} \le \Delta$, where $ \Delta \in \{0, 1, \cdots, k-1\}$.

Let $x_t \in \{0,1\}$ indicate whether buyer~$t$ purchases an item. 
Buyer~$t$’s utility is given by $u_t = (v_t - p_t) x_t$, the seller’s revenue is $r = \sum_{t=1}^T p_t x_t$, and the total social welfare, which is defined as the sum of buyer utilities and seller revenue, is 
$r + \sum_{t=1}^T u_t = \sum_{t=1}^T v_t x_t$.
Let $P = \{p_t\}_{t=1}^{T}$ denote the vector of prices posted by an online algorithm~$\alg$.
In the online setting, the seller must determine each price~$p_t$ without knowing the values of future arrivals $\{v_{t'}\}_{t' > t}$ or even the total number of arrivals~$T$.
Since the algorithm may randomize its pricing decisions to manage uncertainty in buyer values and total demand, $P$ is treated as a random vector.

For an instance $I = \{v_t\}_{t=1}^T$ of \oksrisk, let $\alg(I, P)$ denote the random variable representing the total social welfare achieved by algorithm~$\alg$ on instance~$I$ under the random price vector~$P$. Let $F_{\alg(I, P)}$ be the cumulative distribution function (CDF) of this random variable.
We use CVaR as our risk metric as it is tail-sensitive, coherent, and convex, which makes it tractable for optimization~\cite{bookFinance,stochasticBook}.
Following the standard definition in~\cite{bookFinance,stochasticBook}, we define \cvar\ as follows:
\begin{definition}
Given $ \delta \in (0,1]$ and for a reward-type random variable~$X$, we define \cvar as
\begin{align*}
    \cvar[X] =\sup_{\tau \in \mathbb{R}} \Big\{ \tau - \tfrac{1}{\delta}\,\mathbb{E}\big[(\tau - X)_{+}\big] \Big\},
\end{align*} 
where $\delta $ specifies the risk level (tail probability) and $(x)_{+} = \max\{x,0\}$.
\end{definition}

We adopt the reward-based formulation of \cvar since the objective in this work represents a reward (social welfare) rather than a loss.
Moreover, if the CDF $F_{\alg(I, P)}$ corresponding to the random objective value of an online algorithm~$\alg$ is strictly increasing and continuous, the \cvar performance of that algorithm is given by
$\cvar[\alg(I,P)]
=\frac{1}{\delta} \int_{0}^{\delta} F_{\alg(I,P)}^{-1}(\eta)\, d\eta,
$ where $F^{-1}$ denotes the inverse cumulative distribution (quantile) function.
Intuitively, \cvar measures the algorithm’s expected performance over the worst $\delta$-fraction of its sample paths.

The objective is to design an online algorithm that minimizes its \cvar-competitive ratio, denoted by $\crcvar$, defined as
\begin{align*}
\crcvar(\alg) 
= \sup_{I \in \mathcal{I}} 
\frac{\opt(I)}{\cvar[\alg(I,P)]},
\end{align*}
where $\opt(I)$ denotes the offline clairvoyant optimum computed as the summation of the $ k$ highest {values}, 
i.e., $ 
\opt(I) = \max_{x_t} \sum_{t=1}^T v_t x_t  $, subject to $ \sum_{t=1}^T x_t \le k, \; x_t \in \{0,1\}, \; \forall\, t $.

Without any additional information regarding buyers’ values, no online algorithm can attain a bounded competitive ratio for \oksrisk, {even in the risk-neutral case with full-adaptivity \cite{sun2024static}}. We therefore follow the standard assumption in the literature that buyers’ values lie within a known bounded range.

\begin{assumption}\label{ass:bounded-value}
In \oksrisk, buyer values satisfy $v_t \in [L,U]$ for all $t \in [T]$, and we denote the ratio $U/L$ by $\theta$.
\end{assumption}
In the following section, we present a PPM that specifies how the posted prices at different price levels are generated and correlated across these levels.

\subsection{The Algorithm: Correlated PPMs with Limited Price Changes}
\label{sec:alg}

\begin{figure}     
\centering\includegraphics[width=0.8\linewidth]{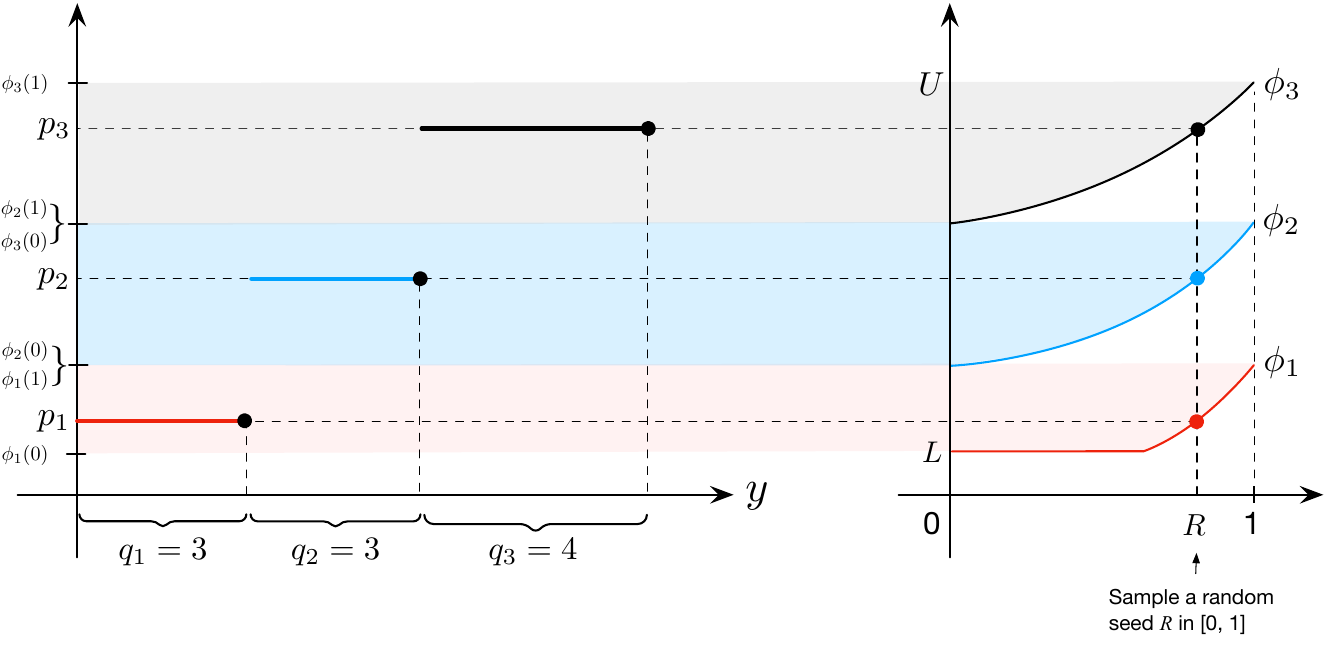}
  \caption{Illustration of \algname with $ \Delta = 2 $ (i.e., dynamic pricing with three total price levels), total units $ k = 10 $, and reservation vector $ \{q_1 = q_2 = 3,\, q_3 = 4\}$. When a random seed $ R \sim \mathcal{U}(0,1) $ is sampled, the three prices $ p_1 $, $ p_2 $, and $ p_3 $ are generated according to the pricing functions $ \phi_1 $, $ \phi_2 $, and $ \phi_3 $, respectively. By construction, the pricing functions satisfy $ L = \phi_1(0) \leq \phi_1(1) = \phi_2(0) \leq \phi_2(1) = \phi_3(0) \leq \phi_3(1) = U $, which ensures that $ p_1 \leq p_2 \leq p_3 $ always holds.}
    \label{fig:cPPM_illustration} 
\end{figure}

We introduce a correlated PPM, denoted by~\algname, which is formalized in Algorithm~\ref{alg:corr:PPM:limited:price:change}. 
{The mechanism is defined by a reservation vector $\{q_i\}_{i=1}^{\Delta + 1}$ and a corresponding set of $\Delta + 1$ pricing functions $\boldsymbol{\phi} := \{\phi_i\}_{i=1}^{\Delta + 1}$. 
Each $q_i$ represents the number of units reserved to be sold at price level~$i$ that is determined by the pricing function~$\phi_i$. 
Each pricing function $\phi_i$ is nondecreasing on the interval~$[0,1]$, and higher price levels dominate the lower ones throughout the entire range, i.e., $ \phi_i(0) \leq \phi_i(1) \leq \phi_{i+1}(0) \leq \phi_{i+1}(1),\; \forall\, i \in [\Delta]$.}

The mechanism begins by sampling a single random seed~$R$ uniformly from the interval~$[0,1]$. Using this seed, it correlates the prices across different price levels by setting the posted price at level~$j$ to $\phi_j(R)$ for each $j \in [\Delta + 1]$. See Figure~\ref{fig:cPPM_illustration} for an illustration of \algname when $\Delta = 2$.

The algorithm starts by allocating units reserved for the first price level $\phi_1(R)$. For each arriving buyer~$t$, it determines the corresponding price level $j_t$ based on the number of sold units $y_t$, and posts the price $p_t = \phi_{j_t}(R)$.
The buyer $t$ accepts the price if $v_t \ge p_t$, and declines the price otherwise. Then the mechanism updates the number of sold units accordingly.

In contrast to the algorithm proposed in \cite{jazi2025posted}, which samples an independent random price for each unit of the item,
Algorithm~\ref{alg:corr:PPM:limited:price:change} employs a single random seed to correlate prices across all levels. This ensures that pricing profiles $\{\phi_j(R)\}_{j \in [\Delta+1]}$ become gradually more aggressive as $R$ increases from $0$ to $1$, imposing a global order on the pricing profiles across sample paths. This in turn induces a monotonic behavior in resource utilization that simplifies the analysis. More importantly, this correlation synchronizes posted prices such that the total welfare varies only slightly across sample paths. As a direct consequence, when evaluating the \cvar objective, the algorithm incurs a minimal loss, since the remaining $(1-\delta)$-fraction of outcomes does not yield significantly better total welfare than the worst $\delta$-fraction. 

The core of \algname lies in the design of pricing function $\phi$. In the following sections, we study the design and analysis for different variants of the \oksrisk\ problem. We refer to \algname with a single pricing function (i.e., $\Delta = 0$) as \textit{fully-static pricing}, with $k$ pricing functions (i.e., $\Delta = k-1$) as \textit{fully-dynamic pricing}, and for any $\Delta = 1, \cdots, k-2$ as the \textit{$\Delta$-dynamic pricing}.

\begin{algorithm}[H]
\caption{Correlated PPM with Pricing Functions $ \boldsymbol{\phi}$ (\algname)}
\label{alg:corr:PPM:limited:price:change}
\KwIn{A set of pricing functions $ \boldsymbol{\phi} = \{\phi_i\}_{i=1}^{\Delta + 1}$, reservation vector $\{q_i\}_{i=1}^{\Delta + 1}$}
\textbf{Initialize:} $y_1 \gets 0$

Sample a random seed $R \sim \mathcal{U}(0,1)$

\For{each buyer $t = 1, 2, \dots$}{
\If{$y_t < k$}{
    Let $j_t \gets \max \left\{ j \in [\Delta + 1] \,\middle|\, y_t \ge \sum_{l=1}^{j - 1} q_l \right\}$\,

    Post price $p_t = \phi_{j_t}(R)$ to buyer $t$\,

    \eIf{buyer $t$ accepts the price $p_t$ (when $v_t \ge p_t$)}{
        Set $x_t \gets 1$\,
    }{
        Set $x_t \gets 0$\,
    }

    Update $y_{t+1} \gets y_t + x_t$\,
}
}
\end{algorithm}

\section{Main Results}
\label{sec:main}
In this section, we present the main theoretical results concerning the
\cvar-competitive performance of \algname\ for the \oksrisk\ problem. We
progressively develop the pricing functions utilized by \algname, beginning
with the risk-neutral case ($\delta = 1$) as a warm-up, whose intuition and
design serve as a foundation for the analysis that follows, and subsequently
moving toward the general risk-aware formulation. Each subsection introduces
the motivation for the specific design, states the key result, and discusses
the underlying analytical structure when applicable.

\paragraph{Risk-Neutral Posted Pricing with Limited Price Changes.}
We begin with the risk-neutral case ($\delta = 1$), where the \cvar objective
reduces to the standard expected-performance guarantee. While \cite{sun2024static}
establishes an optimal static pricing algorithm for $\Delta = 0$, their analysis
leaves open the question of how to design posted-price schemes that achieve
optimal competitive ratios when a higher number of price changes is permitted.
Studying this question in the simpler risk-neutral setting provides structural
insights that guide the design and analysis of posted prices in the more
challenging risk-sensitive and bounded-adaptivity settings. The following
theorem provides the optimal design for \algname\ that achieves the tightest
possible \cvar-competitive ratio for the \oksrisk\ problem.
\begin{theorem}
\label{thm:limited:price:change:optimality}
Consider \oksrisk with $ \delta = 1 $ and any given price-change cap $\Delta \in \{0,1,\dots,k-1\}$.
{Let 
$\{q_j\}_{j \in [\Delta+1]}$ be any reservation vector satisfying 
$q_1 \le q_2 \le \dots \le q_{\Delta+1}$ and $\sum_{j=1}^{\Delta+1} q_j = k$}. 
\algname is $ \alpha^{\star}$-competitive, where {$ \alpha^{\star} = 1 + \ln(\theta) $}, if for all $j \in [\Delta+1]$, $ \phi_j $ is given by
\begin{align} \label{eq:phi:design:limited:price}
\phi_j(x) = 
\begin{cases}
L  &\quad \text{if } 
\dfrac{\sum_{l=1}^{j-1} q_l + q_j x}{k} \in \left[0, \tfrac{1}{\alpha^{\star}}\right), \\
L \cdot \exp\!\Big( \alpha^{\star} \cdot \dfrac{\sum_{l=1}^{j-1} q_l + q_j x}{k} - 1 \Big)  
&\quad \text{if } 
\dfrac{\sum_{l=1}^{j-1} q_l + q_j x}{k} \in \left[\tfrac{1}{\alpha^{\star}}, 1\right].
\end{cases}
\end{align}
\end{theorem}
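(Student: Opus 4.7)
The plan is to establish the $\alpha^\star$-competitive ratio through a randomized online primal-dual (R-OPD) argument that exploits the fact that all $\Delta+1$ pricing functions in~\eqref{eq:phi:design:limited:price} are restrictions of a single underlying threshold curve
\begin{equation*}
\phi(u) = \begin{cases} L, & u \in [0, 1/\alpha^\star), \\ L\,\exp(\alpha^\star u - 1), & u \in [1/\alpha^\star, 1]. \end{cases}
\end{equation*}
Writing $u_j(x) = (\sum_{l=1}^{j-1} q_l + q_j x)/k$, one checks directly that $\phi_j(x) = \phi(u_j(x))$, and that since $u_j(1) = u_{j+1}(0) = \sum_{l=1}^{j} q_l/k$, the level sub-intervals tile $[0,1]$ in order. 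The first step is to verify the ODE $\phi(u) = \alpha^\star \int_0^u \phi(w)\,dw$ on the active regime $u \in [1/\alpha^\star, 1]$, together with $\phi(0)=L$ and $\phi(1)=U$, which is the functional identity that encodes $\alpha^\star$-competitiveness for this family of exponential threshold schemes.

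Next, I would set up the primal LP for $k$-selection and the $\Delta$-level dual introduced earlier in the paper, with one capacity variable $\mu_j$ per level (weighted by $q_j$) and one buyer variable $\beta_t$ per arrival, subject to constraints of the form $\mu_j + \beta_t \ge v_t$. Following the R-OPD recipe, I would construct these dual variables from the algorithm's execution under the single seed $R$: $\mu_j$ as a function of $\phi_j(R)$, and $\beta_t = (v_t - p_t)_+$ as the utility of buyer $t$ at the posted price $p_t = \phi_{j_t}(R)$. Feasibility should reduce to a case split on whether $t$ is served, using the ordering $\phi_1 \le \phi_2 \le \cdots \le \phi_{\Delta+1}$ and $\phi_{\Delta+1}(1) = U$ to force every rejected valuation below the maximum level price. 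A reservation-independent change of variables $x \mapsto u_j(x)$ then collapses the capacity contribution,
\begin{equation*}
\sum_{j=1}^{\Delta+1} q_j \int_0^1 \phi_j(x)\,dx \;=\; k \int_0^1 \phi(u)\,du,
\end{equation*}
and combining this with the ODE for $\phi$ and a telescoping over consumed levels should yield $\mathbb{E}_R[\text{Dual}] \le \alpha^\star \cdot \mathbb{E}_R[\alg(I,P)]$; weak LP duality $\opt(I) \le \mathbb{E}_R[\text{Dual}]$ then closes the argument.

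The hardest step, and what makes the design simultaneously optimal for every ordered reservation summing to $k$, is the telescoping identity that ties the expected capacity cost to the algorithm's expected revenue on a per-level basis: we need the conditional distribution over $R$ of which buyers are assigned to level $j$ to interact with the ODE for $\phi$ in such a way that the level-$j$ capacity term cancels $\alpha^\star$ times the level-$j$ expected revenue plus the buyer utility collected there. This alignment is precisely what the correlated single-seed design is engineered to deliver, because $u_j(R)$ is uniform on the $j$-th sub-interval of $[0,1]$ and the piecewise-exponential form of $\phi$ allows the per-level contributions to compose into a single global identity. Pinning down this level-wise cancellation cleanly, while separately handling the flat regime $[0, 1/\alpha^\star]$ where $\phi \equiv L$ and the boundary between levels whose cutpoint $\sum_{l \le j} q_l/k$ straddles $1/\alpha^\star$, is the main technical hurdle of the argument.
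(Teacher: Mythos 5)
Your overall skeleton matches the paper's: an R-OPD argument over the single seed $R$, a level-indexed dual with one capacity variable per price level weighted by $q_j$, and the observation that every $\phi_j$ is a restriction of the single curve $\phi(u)$ satisfying $\phi(u) = \alpha^{\star}\int_0^u \phi(w)\,dw$ on the active regime (this is exactly the paper's Section~3.1 intuition and its Eq.~\eqref{eq:phi-function-definition-neutral}). However, there is a genuine gap at the step you yourself flag as the ``main technical hurdle.'' Your global identity $\sum_j q_j \int_0^1 \phi_j(x)\,dx = k\int_0^1\phi(u)\,du$ only pays for the capacity dual variables if \emph{all} $k$ units are sold under \emph{every} realization of $R$; in general the algorithm sells a realization-dependent number of units $y^{(r)}$, and the dual accounting must be truncated at the highest utilization actually reached. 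The paper handles this with two structural lemmas that your proposal never supplies: (i) $y^{(r)}_t$ is nonincreasing in $r$ (a consequence of the correlated seed making all level prices move together), and (ii) for every $r\in[0,1]$ the algorithm fully sells all reserved units up to level $i^{*}-1$, where $i^{*}$ is the deepest level exhausted under the best seed. Lemma (ii) is precisely where the hypothesis $q_1\le q_2\le\dots\le q_{\Delta+1}$ enters --- your proposal never uses this ordering, which is a strong signal the argument is incomplete, since the theorem is false without some control of this kind (the dual capacity variables $\lambda_j$ for unsold levels must be set to zero, and one must certify that the levels charged for are genuinely sold on the relevant seed intervals).

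The second gap is dual feasibility. Your proposed case split (``served vs.\ not served, and $\phi_{\Delta+1}(1)=U$ bounds rejected valuations'') only disposes of the trivial boundary. The hard cases are buyers with $\phi_{i^{*}}(r^{*}) < v_t \le U$, who are served on some seed intervals and rejected on others: for these one must lower-bound $\mathbb{E}_R\bigl[u_t^{(R)} + \tfrac{1}{k}\sum_j \lambda_j^{(R)} q_j\bigr]$ by combining the buyer's utility on the seeds where they are served with the capacity contributions of the fully-sold lower levels, and then verify an integral inequality of the form
\begin{align*}
v_t - \int_{r^{*}}^{1}\phi_{i^{*}-1}(\eta)\,d\eta - \int_{0}^{r^{*}}\phi_{i^{*}}(\eta)\,d\eta
\;\ge\; \int_{\sum_{l<i^{*}}q_l/k + (q_{i^{*}}/k)r^{*}}^{\sum_{l<i^{*}}q_l/k + (q_{i^{*}}/k)\phi_{i^{*}}^{-1}(v_t)}\phi(\eta)\,d\eta,
\end{align*}
which the paper proves via explicit exponential estimates (its Lemmas~\ref{lemma:limit:dual:feasibility:1} and~\ref{lemma:limit:dual:feasibility:2}). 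These do not follow from a telescoping over levels; they are genuine per-buyer inequalities whose validity depends on the specific exponential form of $\phi$ and on the bound $1-e^{-x}\le x$. To complete your proof you would need to (a) state and prove the monotonicity and utilization lower-bound lemmas, making explicit where $q_1\le\dots\le q_{\Delta+1}$ is used, and (b) replace the asserted global cancellation with the truncated, realization-dependent accounting and the case-by-case feasibility estimates.
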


In the above design of pricing functions, we require the reservation vector to be non-decreasing from index $1$ to $\Delta+1$, meaning that more units are reserved at higher price levels.
This condition ensures that across different sample paths, the number of price levels whose reserved units are fully exhausted differs by at most one (see Lemma~\ref{lem:lower-bound:yr}). This is a key structural property underlying our primal-dual analysis. Since the number of exhausted price levels governs how many units are sold and at what prices, having this differ by at most one limits the oscillations in the total welfare across sample paths. This in turn reduces the risk-sensitivity of \algname. From a practical
perspective, this condition is also natural: to maximize social welfare in practice, one typically prefers to sell more units at higher price levels, since buyers at those levels generate greater surplus.

\begin{proof}[Proof Sketch of Theorem \ref{thm:limited:price:change:optimality}]
We utilize the randomized online primal-dual (R-OPD) framework for this proof.
We formulate a dual linear program
that upper-bounds the offline clairvoyant
optimum, with one set of dual variables tracking buyer utilities and another
tracking prices at each price level. For each realization of the random seed,
dual variables are updated
so that the accumulated dual objective exactly equals
the algorithm's realized social welfare. To do so, we need to identify, for
each realization of the random seed, how the reserved units at each price level
are allocated and which price levels are fully exhausted.

To this end, we establish that {(i) the number of units sold is non-increasing in the random seed $R$ (see Lemma~\ref{lem:monotonicity:1}), since the pricing profiles become more aggressive as $R$ increases, and (ii) for any realization of the random seed, the algorithm is guaranteed to sell all reserved units up to a certain price level (see Lemma~\ref{lem:lower-bound:yr}), which follows from the correlated rounding scheme and the monotonicity assumption on the reservation vector.}
These two properties together allow us to determine exactly which price levels are exhausted under each realization, and thereby construct the dual updates consistently.
The correlated pricing scheme is what makes this possible: because all price levels are generated from a single random seed, sample paths are globally ordered by aggressiveness, making the utilization structure tractable.

Using the pricing functions given in Eq.~\eqref{eq:phi:design:limited:price}, one can verify that the resulting dual solution is feasible up to the factor~$\alpha^\star$.
By weak duality, \textup{cPPM-}$\phi$ achieves at least a $1/\alpha^\star$ fraction of the offline optimum.
Since $\alpha^\star = 1 + \ln\theta$ matches the established lower bound~\cite{sun2024static}, the competitive ratio is optimal for every $\Delta \in \{0,1,\ldots,k-1\}$.
See Appendix~\ref{sec:expected} for more intuitions and the full proof.
\end{proof}

\paragraph{Risk-Sensitive Fully-Static Pricing.}  To build-up the results for the general risk-sensitive case of \oksrisk, we first consider the fully-static case with no price changes, i.e., $\Delta = 0$.  Below, we derive the static pricing that obtains the tightest \crcvar\ across all static pricing algorithms.

\begin{theorem}[\textsc{Risk-Sensitive Fully-Static Pricing}]
\label{thm:design:cvar:static}
Consider \oksrisk with  $ \delta \in (0, 1] $ and $\Delta=0$.  \algname achieves the optimal~\crcvar, denoted by  $\alphaFSP $, among all fully-static pricing schemes if $q_1 =  k$ and the single price function $\phi$ is given by
\begin{align*}
\phi_1(x) =
\begin{cases}
L & x \in \big[0, \tau_\delta\big] \\ 
\displaystyle
L\Big[1 + \sum\nolimits_{j=1}^{N(x)}
\frac{\left({\alphaFSP}/{\delta}\right)^{j}}{j!}
\left( x - 1 + \delta(1 - \tfrac{1}{\alphaFSP}) - (j-1)\cdot (1 - \delta) \right)^{j}
\Big] & x \in \big[\tau_\delta, 1\big],
\end{cases}
\end{align*}
where $\alphaFSP$ is the unique solution to the equation $\phi_1(1) = U$, $ N(x)
=
\left\lfloor
\frac{x-1+\delta-\delta/\alphaFSP}{1-\delta}
\right\rfloor+1$ and $\tau_\delta =  1 - \delta + \delta \cdot  \frac{1}{\alphaFSP}$. 
\end{theorem}

Since $\phi_1(1)$ is monotonically increasing with respect to~$\alphaFSP$, 
the equation $\phi_1(1) = U$ admits a unique solution. 
As $\delta \rightarrow 1$, the value of $\alphaFSP$ converges to $1 + \ln \theta $.
Consequently, the fully-static pricing scheme designed according to Theorem~\ref{prop:cvar:static:design:phi} coincides with the optimal pricing design of the online algorithm for the \oksrisk  problem when $\delta = 1$ and $\Delta = 0$, as established in Theorem~\ref{thm:limited:price:change:optimality}. 

\begin{proof}[Proof Sketch of Theorem \ref{thm:design:cvar:static}.]
The design of $\phi_1$ in Theorem~\ref{thm:design:cvar:static} is motivated
by identifying the hard family of instances for fully-static pricing
algorithms (see Proposition~\ref{prop:lb:static}).
These instances consist of buyers arriving in groups of $k$,
where all buyers within a group share the same value, and values
increase across groups.
Since the sequence may stop after any group, the
algorithm cannot distinguish between instances that differ only in when the
sequence terminates, and must therefore post a price that performs well
across all possible stopping points simultaneously.

We show that any static price distribution must satisfy a set of constraints in order to be $\alpha$-\crcvar competitive on this class (see Proposition~\ref{prop:lb:static}). These constraints yield a system of delay differential equations that characterizes the optimal pricing function on these hard instances. Solving this system for
the smallest possible competitive ratio yields a lower bound on the smallest \crcvar achievable by any fully-static algorithm for \oksrisk .

Following a bottom-up approach~\cite{sun2024static}, \algname uses the optimal pricing function obtained from this lower-bound construction as its static pricing rule, whose closed-form solution is given in Theorem~\ref{thm:design:cvar:static}. The matching upper bound is then
proved using a risk-sensitive R-OPD argument (see Proposition~\ref{prop:cvar:static:design:phi}), where dual variables are
updated only over the worst $\delta$-fraction of seed realizations, so that the expected dual objective matches the \cvar objective of \algname. The pricing function design obtained from the lower-bound analysis ensures that the resulting expected dual solution is feasible up to the factor $\alpha^{SP}_{\delta}$ (see Proposition~\ref{prop:cvar:static:design:phi}). By weak duality, this gives $\alpha^{SP}_{\delta}$-competitiveness. See Appendix~\ref{apx:static} for the intuition behind this pricing design and the full proof.
\end{proof}

\paragraph{Risk-Sensitive $ \Delta$-Dynamic Pricing: A General Framework.} Building on the insights from the above two special cases, we extend the framework to the general risk-sensitive case with $\Delta \ge 1$.  The pricing functions are now recursively determined through a system of delay differential equations  that capture the effect of the tail probability~$\delta$ and correlation among pricing levels.

\begin{theorem}[\textsc{Risk-Sensitive $\Delta$-Dynamic Pricing}]
\label{thm:general:cvar:design:phi}
Consider \oksrisk with {$\delta \in (0,1)$} and any number of price changes $\Delta \in \{1,\dots,k-1\}$.
Let $\{q_j\}_{j \in [\Delta+1]}$ be a reservation vector satisfying $q_1 = \lceil \tfrac{k}{\alphaDDP} \rceil$ and $q_2 \le \dots \le q_{\Delta+1}$.
Then the \crcvar of \algname is $\alphaDDP$ if the following two conditions hold.
\textnormal{(i)} $\alphaDDP \ge 1$ is the unique solution of $\alphaDDP
=
\frac{k U}{ \frac{1}{2 \cdot \delta}
\sum_{j=1}^{\Delta+1} q_j \cdot \int_0^\delta\,\phi_j(\eta)\,d\eta }.$
\textnormal{(ii)} Let $\tau := 1-\delta$ and $c := \frac{\alphaDDP}{2k\delta}$.
The pricing functions $\boldsymbol{\phi}=\{\phi_i\}_{i\in[\Delta+1]}$ are recursively designed as follows: set $\phi_1(x)=L$ for all $x\in[0,1]$; and for each $i\in\{2,\dots,\Delta+1\}$, set $ \phi_i $ according to
\begin{equation}
\label{eq:phi_piecewise_fde_general}
\phi_i'(x)
=
\begin{cases}
\displaystyle
c\sum\nolimits_{j=2}^{i-1} q_j\bigl(\phi_j(x+\delta)-\phi_j(x)\bigr),
& x \in [0, \tau]
\\[3mm]
\displaystyle
c\big(
\sum\nolimits_{j=2}^{i-1} q_j\bigl(\phi_j(x-\tau)-\phi_j(x)\bigr)
+
q_i\,\phi_i(x-\tau)
\big),
& x \in (\tau, 1]
\end{cases}
\end{equation}
with initial value $
\phi_i(0)
=
c\big(
\lceil \tfrac{k}{\alphaDDP} \rceil L\delta
+
\sum_{j=2}^{i-1}\int_0^\delta q_j\,\phi_j(\eta)\,d\eta
\big).$
\end{theorem}

The proof of Theorem~\ref{thm:general:cvar:design:phi} is provided in Appendix~\ref{sec:general}. Here, we discuss the key intuition behind the design of the pricing functions $\boldsymbol{\phi}=\{\phi_i\}_{i\in[\Delta+1]}$ based on the system of \textit{delayed} ODEs in Eq. \eqref{eq:phi_piecewise_fde_general}.
Recall that the objective of the
algorithm is averaged only over the worst $\delta$-fraction of seed
realizations. Thus, the price chosen at a seed $x$ must be calibrated relative to the
prices at nearby worse seed realizations, which are typically shifted by
$1-\delta$. This creates a memory effect: the marginal change $\phi'(x)$ depends
on delayed values of the pricing function, such as $\phi(x-(1-\delta))$, rather
than only on its current value. Consequently, the system of equations that
determines the pricing design naturally takes the form of delay differential
equations. The fact that, at each seed $x$, we can trace back the worst $\delta$-fraction
of realizations and look at the prices posted at shifted locations is an effect
generated by the correlated scheme used by \algname. The correlated scheme
creates a monotonicity effect in the aggressiveness of the pricing profile, and
therefore creates monotonicity in the number of units sold across different seed
realizations. As a result, the worst $\delta$-fraction of realizations can be
identified as a contiguous subinterval of the seed range $[0,1]$, which is what
allows us to trace back the shifted locations.
More detailed intuitions  are provided in Appendix~\ref{sec:general}.

\textit{Case Study of Theorem \ref{thm:general:cvar:design:phi}: $\Delta=2$ with Balanced Reservation.}
We argue that solving Eq. \eqref{eq:phi_piecewise_fde_general} with closed-form expressions of $\boldsymbol{\phi} $ is generally impossible. Nevertheless,  for the special case of $\Delta = 2$, we derive such analytical results and obtain a tight upper bound for the \crcvar competitiveness of \algname up to a constant factor. Consider \oksrisk\ with $\delta \le \tfrac12$, and suppose the reservation vector is balanced across the last two price levels, namely, $q_2=q_3$.
In this case, there exists a pricing scheme according to Theorem~\ref{thm:general:cvar:design:phi} for \algname\ whose \crcvar\ is $\alphaDDP$, where $\alphaDDP$ admits the following approximation up to absolute constant factors: 
$$
\alphaDDP \lesssim
\begin{cases}
\sqrt{\theta}, & 0<\delta\le \frac13,\\
\frac{\delta}{3\delta-1}\theta^{1/3}+1, & \frac13<\delta\le \frac12 .
\end{cases}
$$ 
Details of the derivation are provided in Appendix~\ref{apx:corrolary:theorem:4}.

\begin{wrapfigure}{r}{0.4\textwidth}
    \centering
    \includegraphics[width=0.95\linewidth]{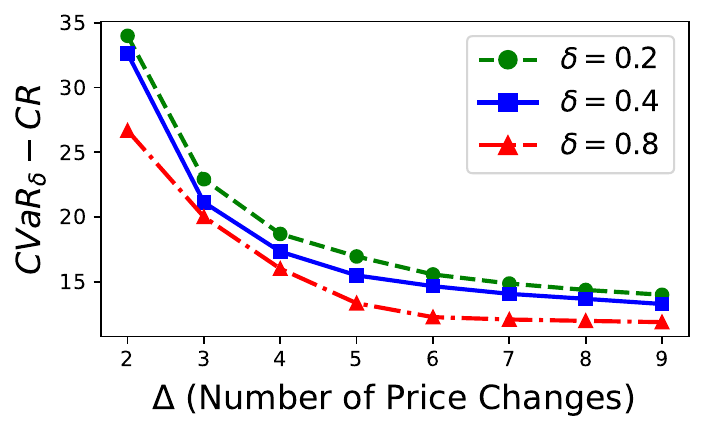}
    \caption{\crcvar of \algname, where $\phi$ is designed based on Thm~\ref{thm:general:cvar:design:phi}.}
    \label{fig:general}
\end{wrapfigure}

\textit{Numerical Results of Theorem \ref{thm:general:cvar:design:phi}.} From a numerical perspective,  established methods for solving delay differential equations have shown that  each pricing function $\phi_i$ can be computed recursively from the preceding pricing functions $\phi_{i'}$ for all indices $i' < i$ \cite{DDE}.
In particular, the smallest feasible value of $\alpha$ that admits a feasible design of the pricing functions satisfying Theorem \ref{thm:general:cvar:design:phi} can be computed via binary search. Figure~\ref{fig:general} illustrates the worst-case \crcvar of \algname, where the pricing functions are designed according to the theorem above, for three representative cases with $\delta \in \{0.2, 0.4, 0.8\}$ and the units are uniformly distributed among different price levels. 
The curves correspond to the setting with $L = 1$, $U = 100$, and $k = 40$. As observed, increasing the number of allowed price changes leads to tighter performance guarantees. Moreover, for higher allowable risk levels (i.e., larger values of~$\delta$), \algname achieves a better worst-case competitive ratio, since the algorithm becomes less risk-sensitive and can therefore adopt a more aggressive pricing design.

\paragraph{Risk-Sensitive Fully-Dynamic Pricing.}
For the extreme case $\Delta = k - 1$, where the price-change constraint in \oksrisk is fully relaxed, we can provide a stronger pricing design than the one in Theorem~\ref{thm:general:cvar:design:phi}, which in turn yields an optimality result:

\begin{theorem}
    [\textsc{Risk-Sensitive Fully-Dynamic Pricing}]
\label{thm:k:cvar:design:phi}
Consider \oksrisk with $\delta \in (0,1)$ and $\Delta = k-1$. Then there exists a pricing-function design
$\boldsymbol{\phi}=\{\phi_i\}_{i\in[k]}$ such that \algname\ achieves a \crcvar\ of $\alphaFDP$, where
$\alphaFDP \ge 1$ is the unique value satisfying $\alphaFDP = \frac{kU\delta}{\sum_{i=1}^{k} \int_{0}^{\delta} \phi_i(\eta)\, d\eta}$.
Moreover, the \crcvar of
\algname is asymptotically optimal as $k \rightarrow \infty$, namely, \algname  attains the smallest possible \crcvar among all online algorithms for any confidence level $\delta \in (0,1)$.
\end{theorem}

\begin{proof}[Proof Sketch of Theorem \ref{thm:k:cvar:design:phi}.]
In the fully-dynamic setting, the proof departs from the primal-dual framework
and instead interprets \algname as a lossless online rounding scheme.
We
compare \algname with a fractional algorithm that uses the same pricing
functions. The key property is that, for each buyer, the probability that
\algname allocates a unit to that buyer is equal to the fractional allocation
made by this fractional algorithm (see Lemma~\ref{lemma:ppm:frac:relation}). This equivalence is induced by the correlated
pricing scheme, which generates all prices using a single random seed.

Therefore, it is sufficient to analyze the fractional algorithm: once its
performance is upper-bounded against optimal clairvoyant algorithm, the same bound transfers to
\algname because \algname closely tracks the fractional allocation. In
addition, the single-seed correlation creates a monotonicity property: as the
seed increases, pricing profiles correlated through the random seed become more aggressive and the number of allocated units
decreases monotonically (see Lemma~\ref{lem:monotonicity:1}). This makes the \cvar analysis tractable,
because the worst $\delta$-fraction of sample paths can be identified as a
subinterval of the seed range $[0,1]$. Combining these observations yields the
claimed \crcvar guarantee for the fully-dynamic pricing
scheme. The detailed construction of the pricing functions is again recursive and follows a system of delay differential equations. The details regarding pricing design and the full proof of the theorem is deferred to Appendix~\ref{apx:sec:full:dynamic}.
\end{proof}

\begin{figure}
    \centering 
    \begin{minipage}[t]{0.48\linewidth}
        \centering
        \includegraphics[width=\linewidth]{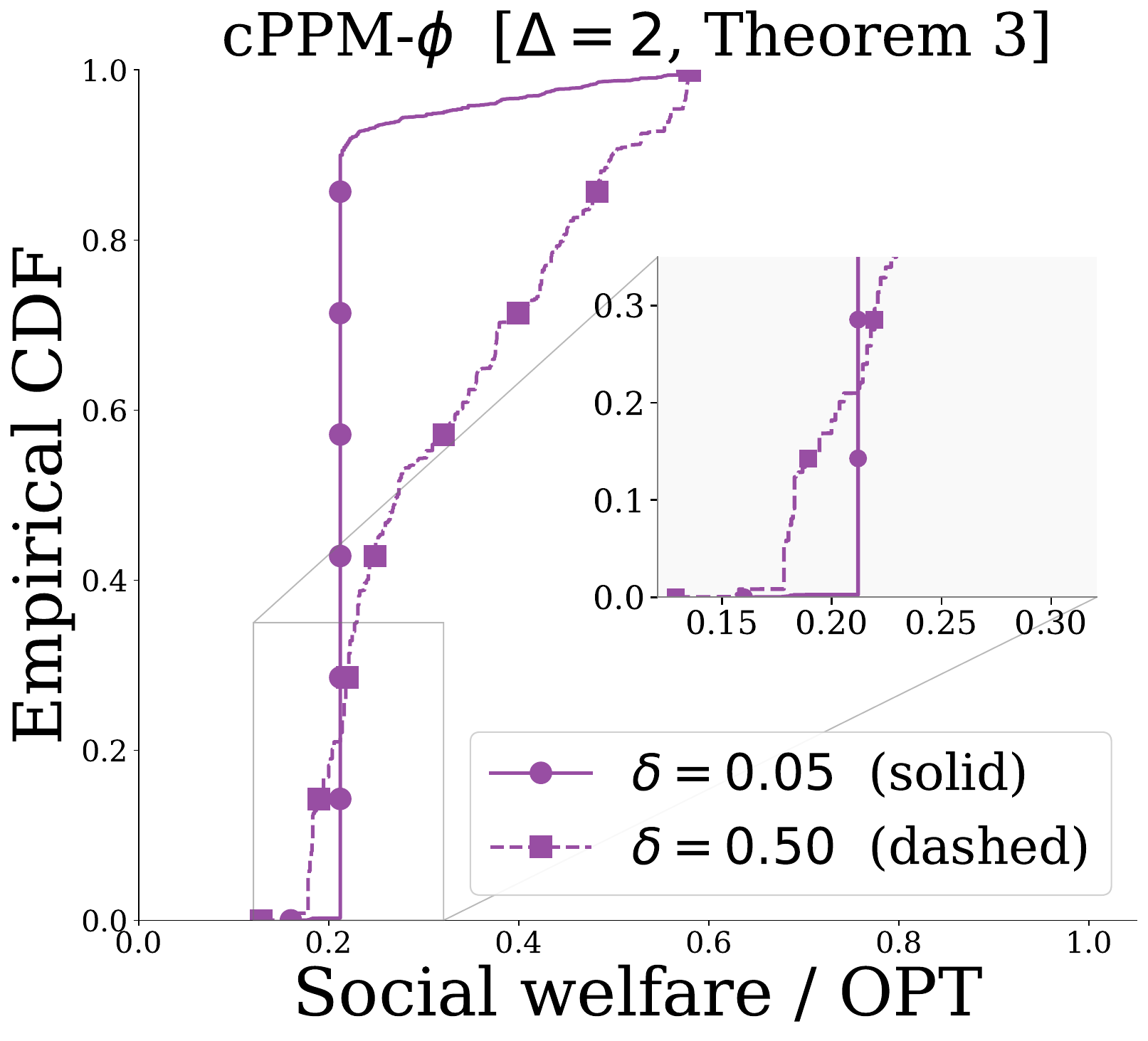}
    \end{minipage}%
    \hfill
    \begin{minipage}[t]{0.48\linewidth}
        \centering
        \includegraphics[width=\linewidth]{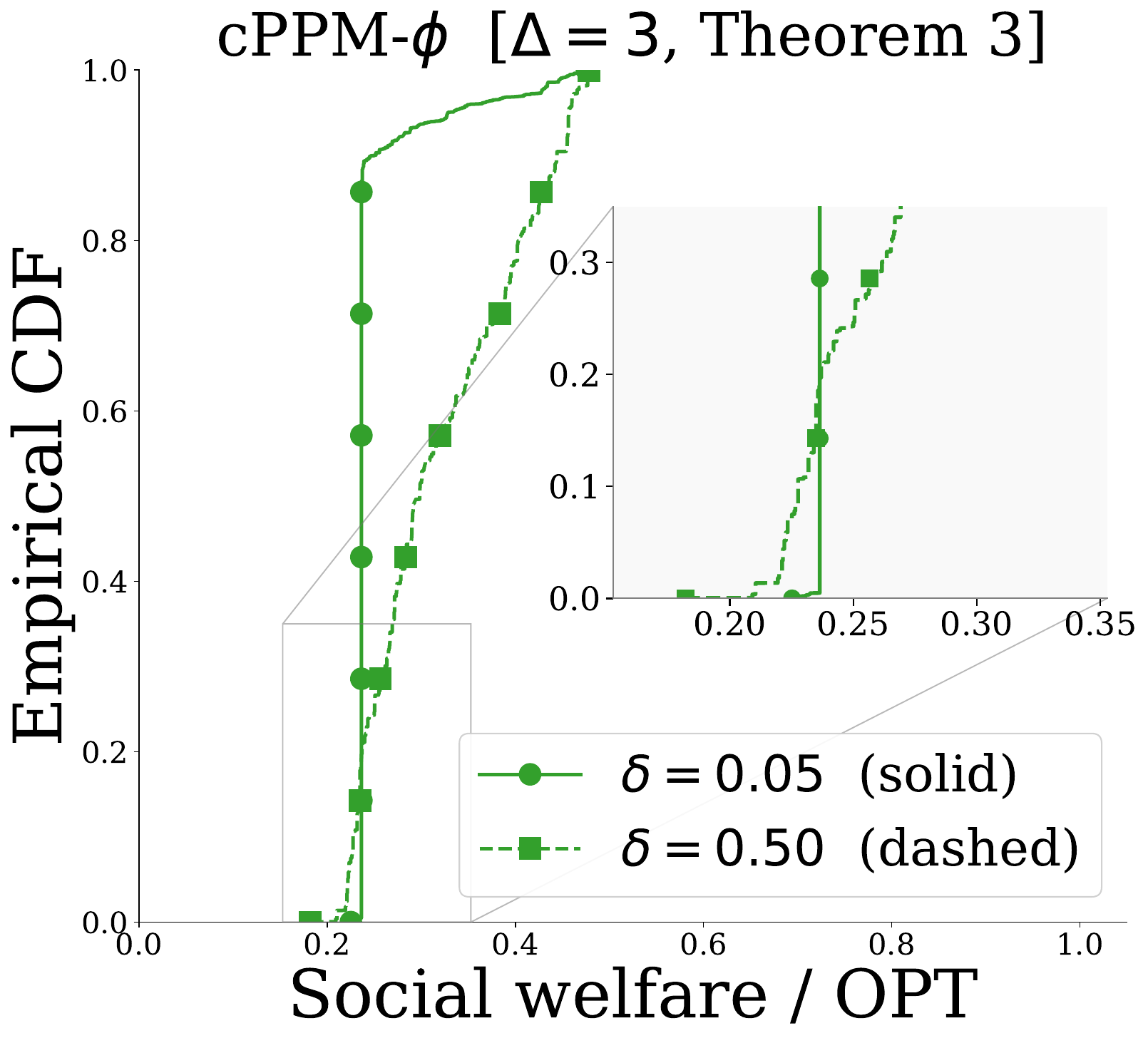}
    \end{minipage}

    \vspace{+1cm}

    \includegraphics[width=\linewidth]{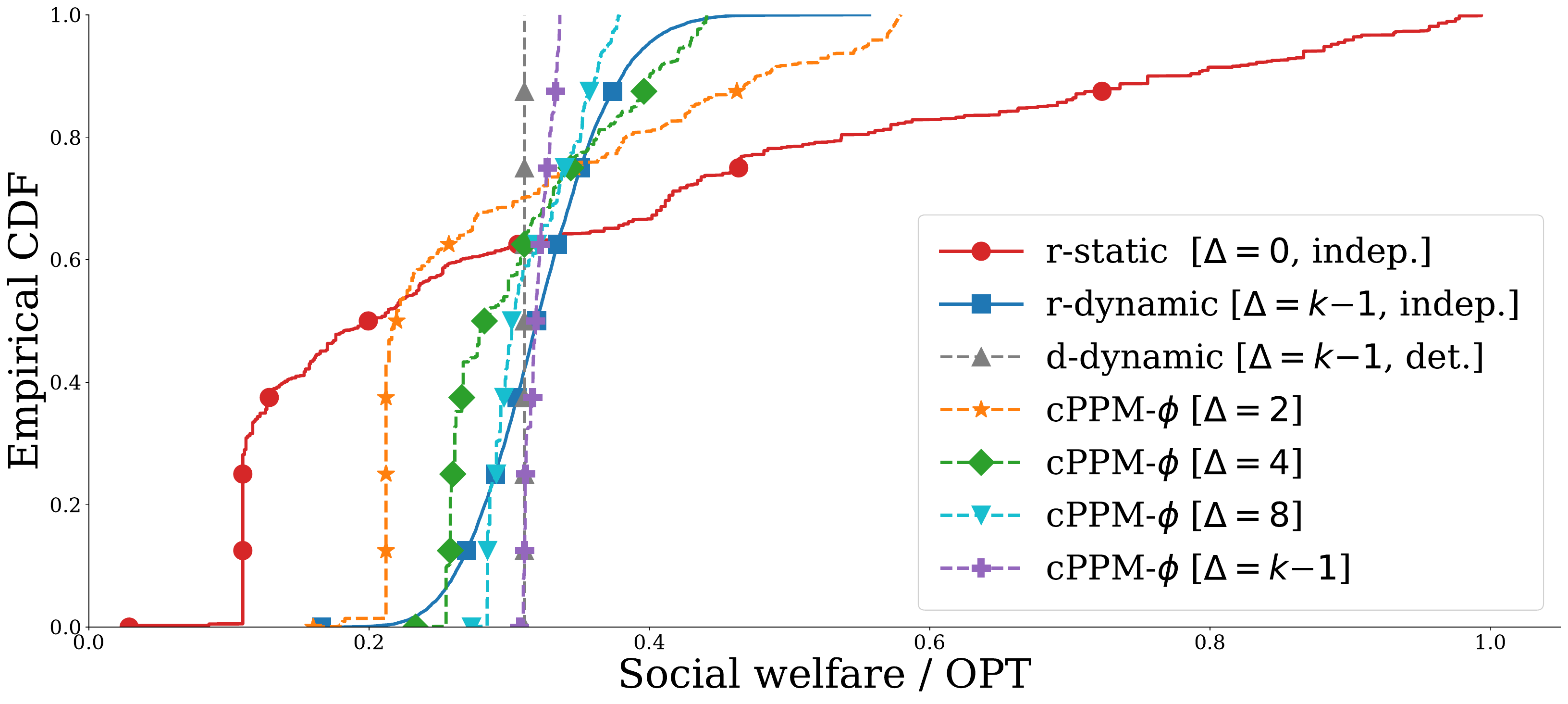}

    \caption{Empirical CDFs of performance ratios on the JFK--LAX Q1~2023 instance ($N=10{,}000$ runs).}
    \label{fig:experiment_combined}
\end{figure}






\section{Experiments}
\label{sec:experiment}

We evaluate \algname on real pricing data from the BTS Airline Origin
and Destination Survey (DB1B)~\cite{BTS_DB1B,Williams2022}, a publicly
available 10\% sample of domestic airline tickets.
To construct a single-product instance consistent with \oksrisk, we
restrict attention to one-way American Airlines JFK-LAX tickets in
Q1~2023, yielding 617 fares in $[L,U]=[\$140,\$1{,}996]$ with $U/L=14.3$.
We interpret each fare as the buyer value, and set $k=40$ and $T=380$.
Following standard revenue management
practice~\cite{TallurvanRyzin2004,Williams2022}, we sort values within
each instance in ascending order, reflecting the empirical pattern that
more price-sensitive leisure travelers tend to book earlier than less
elastic business travelers.
We draw one instance once and hold it fixed, varying only the internal
random seeds of the randomized algorithms across $N=10{,}000$ runs.
Throughout, the $x$-axis reports the ratio of the total social welfare of algorithms over optimal clairvoyant algorithm; this differs from the \crcvar studied
theoretically, which measures the ratio of optimal to achieved \cvar of social welfare.

The top two panels in Figure~\ref{fig:experiment_combined} compare the empirical CDFs of
\algname for $\Delta\in\{2,3\}$ under two risk levels
$\delta\in\{0.05,0.50\}$, where the pricing functions are according to
Theorem~\ref{thm:general:cvar:design:phi} and the units are equally distributed among different price levels.
When a stricter risk level is imposed (i.e., a smaller $\delta$), the pricing becomes more conservative. Consequently, the algorithm achieves stronger lower-tail performance but yields  a smaller expected social welfare. 
In addition, when adaptivity is more constrained (i.e., a smaller price-change cap $\Delta$), the pricing has less flexibility, and thus the algorithm attains weaker tail performance.

The bottom panel in Figure~\ref{fig:experiment_combined} shows the empirical CDFs of multiple benchmarks and \algname under a fixed risk level $\delta=0.30$. The results reinforce our previous observation that, with greater adaptivity (i.e., larger $\Delta$), the welfare achieved by \algname is more concentrated, with stronger tail performance and higher expected welfare.
Furthermore, the results show that a relatively small $\Delta$ already achieves a favorable balance: \algname with $\Delta=4$ closely tracks the fully-dynamic
variant ($\Delta=k-1$) in both tail and expected welfare. This suggests
that only a handful of price changes suffices to obtain near-optimal
risk-adjusted performance on this instance.
Finally, comparing \textsf{r-dynamic} \cite{jazi2025posted} (independent seeds, $k$ draws) against
\algname with $\Delta=k-1$ (single correlated seed), our algorithm
achieves strictly better lower-tail performance while attaining
comparable expected welfare. This highlights the importance of introducing correlation to improve the trade-off between risk and performance.

\section{Conclusion}
\label{sec:conclusion}
We investigated the design of posted-price mechanisms (PPMs) for online adversarial selection under two intertwined considerations: bounded adaptivity and risk-sensitive performance. Specifically, we studied pricing policies subject to a cap on the number of allowable price changes (denoted by $\Delta$) and evaluated performance using the \cvar objective, where $\delta$ captures tail risk. Our primary contribution is the development of \algname, a correlated posted-pricing framework that uses a single random seed to coordinate pricing decisions across time. This correlation induces a monotonic ordering of pricing profiles across sample paths, substantially improving lower-tail performance while respecting the adaptivity constraint. More broadly, our results highlight correlation as a structural mechanism for controlling tail risk in randomized online algorithms. Our theoretical analysis establishes competitive guarantees for several regimes of the problem and reveals a systematic trade-off between allowable adaptivity, risk sensitivity, and competitive performance.

Several open problems remain for future research. One direction is to develop a tighter analysis of the \crcvar competitiveness of \algname in the most general setting; in particular, we conjecture that the pricing design in Theorem~\ref{thm:general:cvar:design:phi} can be further refined to establish optimality. Another promising direction is to extend the correlation-based framework developed in this paper to other online decision-making problems, such as online matching and combinatorial auctions. It would also be interesting to investigate alternative correlation or rounding schemes, as well as other classes of risk-sensitive objectives beyond \cvar. Finally, an important practical direction is to study these problems in data-driven settings where value distributions are unknown and must be learned online.

\bibliographystyle{plain}
\bibliography{reference}

\begin{thebibliography}{10}

\bibitem{singleLegBall}
Michael~O. Ball and Maurice Queyranne.
\newblock Toward robust revenue management: Competitive analysis of online
  booking.
\newblock {\em Operations Research}, 57(4):950--963, 2009.

\bibitem{singlelegAdvide}
Santiago Balseiro, Christian Kroer, and Rachitesh Kumar.
\newblock Single-leg revenue management with advice.
\newblock In {\em Proceedings of the 24th ACM Conference on Economics and
  Computation}, EC '23, page 207, New York, NY, USA, 2023. Association for
  Computing Machinery.

\bibitem{bandits2}
Dorian Baudry, Romain Gautron, Emilie Kaufmann, and Odalric Maillard.
\newblock Optimal thompson sampling strategies for support-aware cvar bandits.
\newblock In Marina Meila and Tong Zhang, editors, {\em Proceedings of the 38th
  International Conference on Machine Learning}, volume 139 of {\em Proceedings
  of Machine Learning Research}, pages 716--726. PMLR, 18--24 Jul 2021.

\bibitem{BTS_DB1B}
{Bureau of Transportation Statistics}.
\newblock Airline origin and destination survey ({DB1B}).
\newblock U.S. Department of Transportation, 2023.

\bibitem{orLimited}
Boxiao Chen and Xiuli Chao.
\newblock Parametric demand learning with limited price explorations in a
  backlog stochastic inventory system.
\newblock {\em IISE Transactions}, 51(6):605--613, 2019.

\bibitem{chen2020LimitedPrice}
Boxiao Chen, Xiuli Chao, and Yining Wang.
\newblock Technical note—data-based dynamic pricing and inventory control
  with censored demand and limited price changes.
\newblock {\em Operations Research}, 68(5):1445--1456, 2020.

\bibitem{priceDiscri2017}
Wang~Chi Cheung, David Simchi-Levi, and He~Wang.
\newblock Technical note—dynamic pricing and demand learning with limited
  price experimentation.
\newblock {\em Operations Research}, 65(6):1722--1731, 2017.

\bibitem{MDPrisk2}
Yinlam Chow, Aviv Tamar, Shie Mannor, and Marco Pavone.
\newblock Risk-sensitive and robust decision-making: a cvar optimization
  approach.
\newblock In {\em Proceedings of the 29th International Conference on Neural
  Information Processing Systems - Volume 1}, NIPS'15, page 1522–1530,
  Cambridge, MA, USA, 2015. MIT Press.

\bibitem{ChristiansonRisk2024}
Nicolas Christianson, Bo~Sun, Steven Low, and Adam Wierman.
\newblock Risk-sensitive online algorithms (extended abstract).
\newblock In Shipra Agrawal and Aaron Roth, editors, {\em Proceedings of Thirty
  Seventh Conference on Learning Theory}, volume 247 of {\em Proceedings of
  Machine Learning Research}, pages 1140--1141. PMLR, 30 Jun--03 Jul 2024.

\bibitem{prophetK}
Jos\'{e} Correa, Patricio Foncea, Ruben Hoeksma, Tim Oosterwijk, and Tjark
  Vredeveld.
\newblock Posted price mechanisms and optimal threshold strategies for random
  arrivals.
\newblock {\em Mathematics of Operations Research}, 46(4):1452--1478, 2021.

\bibitem{devanur2013}
Nikhil~R. Devanur, Kamal Jain, and Robert~D. Kleinberg.
\newblock Randomized primal-dual analysis of ranking for online bipartite
  matching.
\newblock SODA '13, page 101–107, USA, 2013. Society for Industrial and
  Applied Mathematics.

\bibitem{MDPrisk}
Rui Ding and Eugene~A. Feinberg.
\newblock {CVaR} optimization for {MDPs}: Existence and computation of optimal
  policies.
\newblock {\em ACM SIGMETRICS Performance Evaluation Review}, 50(2):39--41,
  2022.

\bibitem{Dinitz2024}
Michael Dinitz, Sungjin Im, Thomas Lavastida, Benjamin Moseley, and Sergei
  Vassilvitskii.
\newblock Controlling tail risk in online ski-rental.
\newblock In {\em Proceedings of the 2024 Annual ACM-SIAM Symposium on Discrete
  Algorithms (SODA)}, pages 4247--4263. SIAM, 2024.

\bibitem{eden2021economics}
Alon Eden, Michal Feldman, Amos Fiat, and Kineret Segal.
\newblock An economics-based analysis of ranking for online bipartite matching.
\newblock In {\em Symposium on Simplicity in Algorithms (SOSA)}, pages
  107--110. SIAM, 2021.

\bibitem{bookFinance}
Hans F{\"o}llmer and Alexander Schied.
\newblock {\em Stochastic Finance: An Introduction in Discrete Time}.
\newblock De Gruyter, 5 edition, 2025.

\bibitem{golrezaei2014}
Negin Golrezaei, Hamid Nazerzadeh, and Paat Rusmevichientong.
\newblock Real-time optimization of personalized assortments.
\newblock {\em Management Science}, 60(6):1532--1551, 2014.

\bibitem{prophetStatic}
T.~P. Hill and Robert~P. Kertz.
\newblock Comparisons of stop rule and supremum expectations of i.i.d. random
  variables.
\newblock {\em The Annals of Probability}, 10(2):336--345, 1982.

\bibitem{jazi2025posted}
Hossein~Nekouyan Jazi, Bo~Sun, Raouf Boutaba, and Xiaoqi Tan.
\newblock Posted price mechanisms for online allocation with diseconomies of
  scale.
\newblock In {\em Proceedings of the ACM Web Conference 2025 (WWW '25)}, New
  York, NY, USA, April 2025. ACM.

\bibitem{kmaxSearch}
Julian Lorenz, Konstantinos Panagiotou, and Angelika Steger.
\newblock Optimal algorithms for k-search with application in option pricing.
\newblock In {\em Proceedings of the 15th Annual European Conference on
  Algorithms}, ESA'07, page 275–286, Berlin, Heidelberg, 2007.
  Springer-Verlag.

\bibitem{maSingleLeg}
Will Ma, David Simchi-Levi, and Chung-Piaw Teo.
\newblock On policies for single-leg revenue management with limited demand
  information.
\newblock {\em Operations Research}, 69(1):207--226, 2021.

\bibitem{ni2024risk}
Xinyi Ni, Guanlin Liu, and Lifeng Lai.
\newblock Risk-sensitive reward-free reinforcement learning with {CV}a{R}.
\newblock In {\em Proceedings of the 41st International Conference on Machine
  Learning}, volume 235 of {\em Proceedings of Machine Learning Research},
  pages 37999--38017. PMLR, 2024.

\bibitem{prophetlimited}
Sebastian Perez-Salazar, Mohit Singh, and Alejandro Toriello.
\newblock The {I.I.D.} prophet inequality with limited flexibility.
\newblock {\em Mathematics of Operations Research}, 51(1):218--254, 2025.

\bibitem{stochasticBook}
Alexander Shapiro, Darinka Dentcheva, and Andrzej Ruszczy{\'n}ski.
\newblock {\em Lectures on Stochastic Programming: Modeling and Theory}.
\newblock MOS-SIAM Series on Optimization. Society for Industrial and Applied
  Mathematics, Philadelphia, PA, 2009.

\bibitem{simchilevi2023stochastic}
David Simchi-Levi, Zeyu Zheng, and Feng Zhu.
\newblock Stochastic multi-armed bandits: Optimal trade-off among optimality,
  consistency, and tail risk.
\newblock In {\em Advances in Neural Information Processing Systems},
  volume~36, pages 35619--35630, 2023.

\bibitem{DDE}
Hal Smith.
\newblock {\em An Introduction to Delay Differential Equations with
  Applications to the Life Sciences}, volume~57 of {\em Texts in Applied
  Mathematics}.
\newblock Springer, New York, NY, 1 edition, 2011.

\bibitem{submodular}
Tasuku Soma and Yuichi Yoshida.
\newblock Online risk-averse submodular maximization.
\newblock In Zhi-Hua Zhou, editor, {\em Proceedings of the Thirtieth
  International Joint Conference on Artificial Intelligence, {IJCAI-21}}, pages
  2988--2994. International Joint Conferences on Artificial Intelligence
  Organization, 8 2021.
\newblock Main Track.

\bibitem{sun2024static}
Bo~Sun, Hossein~Nekouyan Jazi, Xiaoqi Tan, and Raouf Boutaba.
\newblock Static pricing for online selection problem and its variants.
\newblock In {\em Proceedings of the 20th Conference on Web and Internet
  Economics ({WINE} 2024)}. Springer, 2024.

\bibitem{TallurvanRyzin2004}
Kalyan~T. Talluri and Garrett~J. van Ryzin.
\newblock {\em The Theory and Practice of Revenue Management}, volume~68 of
  {\em International Series in Operations Research \& Management Science}.
\newblock Springer, New York, 2004.

\bibitem{bandits}
Alex Tamkin, Ramtin Keramati, Christoph Dann, and Emma Brunskill.
\newblock Distributionally-aware exploration for {CVaR} bandits.
\newblock In {\em NeurIPS 2019 Workshop on Safety and Robustness in
  Decision-making}, Vancouver, Canada, Dec 2019.
\newblock 33rd Conference on Neural Information Processing Systems (NeurIPS
  2019), Workshop.

\bibitem{tan2023threshold}
Xiaoqi Tan, Siyuan Yu, Raouf Boutaba, and Alberto Leon-Garcia.
\newblock Threshold policies with tight guarantees for online selection with
  convex costs.
\newblock {\em ACM Transactions on Economics and Computation}, 13(2):1--49,
  2025.

\bibitem{wang2023near}
Kaiwen Wang, Nathan Kallus, and Wen Sun.
\newblock Near-minimax-optimal risk-sensitive reinforcement learning with
  {CV}a{R}.
\newblock In {\em Proceedings of the 40th International Conference on Machine
  Learning}, volume 202 of {\em Proceedings of Machine Learning Research},
  pages 35864--35907. PMLR, 2023.

\bibitem{Williams2022}
Kevin~R. Williams.
\newblock The welfare effects of dynamic pricing: Evidence from airline
  markets.
\newblock {\em Econometrica}, 90(2):831--860, 2022.

\bibitem{pricechangeCost}
Mark~J. Zbaracki, Mark Ritson, Daniel Levy, Shantanu Dutta, and Mark Bergen.
\newblock Managerial and customer costs of price adjustment: Direct evidence
  from industrial markets.
\newblock {\em The Review of Economics and Statistics}, 86(2):514--533, 2004.

\bibitem{knpasackFractional2008}
Yunhong Zhou, Deeparnab Chakrabarty, and Rajan Lukose.
\newblock Budget constrained bidding in keyword auctions and online knapsack
  problems.
\newblock In {\em Proceedings of the 17th International Conference on World
  Wide Web}, WWW '08, page 1243–1244, New York, NY, USA, 2008. Association
  for Computing Machinery.

\end{thebibliography}

\newpage
\appendix


\section{Further Related Work}

\paragraph{Single-Leg Revenue Management.}
The \oksrisk model can be interpreted as a single-leg revenue management problem, in which a seller allocates a fixed-capacity resource (e.g., airline seats or advertising impressions) over time. A substantial body of work (e.g., \cite{maSingleLeg,singleLegBall,singlelegAdvide}) studies such settings through competitive analysis, developing booking-limit and posted-price controls with provably optimal or near-optimal worst-case guarantees. Our framework is closely related to this line of work. In particular, we study \oksrisk through posted-price mechanisms that reserve resource units across multiple price levels and analyze their competitive guarantees relative to the offline optimum.

\paragraph{Pricing with Limited Price Changes.}
Prior work has studied dynamic pricing under constraints on the number of allowable price updates. In particular, \cite{priceDiscri2017} consider dynamic pricing with unknown demand and a limited number of price levels, designing policies that strategically balance exploration and exploitation while respecting a cap on price changes. Their motivation stems from the observation that, in many markets, firms cannot adjust prices arbitrarily frequently due to operational frictions, customer reactions, or adjustment costs~\cite{pricechangeCost}. Motivated by these considerations, several subsequent works have explored pricing problems with limited adaptivity in both theoretical and applied settings (e.g., \cite{chen2020LimitedPrice,orLimited}).

\paragraph{Risk-Aware Randomized Algorithms.}
Recent works have begun studying randomized online algorithms under tail-risk objectives. For example, \cite{Dinitz2024} study the online ski rental problem under a tail-risk constraint specified by the pair $(\gamma, \delta)$, designing algorithms such that the probability of the competitive ratio exceeding $\gamma$ is at most $\delta$. Another recent work, \cite{ChristiansonRisk2024}, studies several online problems (including ski rental and 1-max search) under the Conditional Value-at-Risk (CVaR) metric. More broadly, risk-sensitive algorithm design has attracted increasing attention in machine learning and learning theory, where CVaR-based objectives have been explored in settings including Markov decision processes~\cite{MDPrisk,MDPrisk2}, online submodular optimization~\cite{submodular}, and multi-armed bandit problems~\cite{bandits,bandits2}.

\label{apx:related:work}

\section{Revisiting Theorem~\ref{thm:limited:price:change:optimality}: Intuition behind Pricing Design and Proof for \oksrisk with $\delta = 1$}
\label{sec:expected}

In this section, we revisit the risk-neutral setting where $\delta = 1$.
We provide a detailed discussion of the pricing design introduced in Theorem~\ref{thm:limited:price:change:optimality}, explain the intuition behind this construction, and present a complete proof of the theorem.

\subsection{Intuition Behind the Design in Theorem~\ref{thm:limited:price:change:optimality}}
\label{sec:intuition:netural}
To elaborate on the intuition behind the pricing design in Theorem~\ref{thm:limited:price:change:optimality}, consider the fractional relaxation of \oksrisk, where the integrality constraint $x_t \in \{0,1\}$ is relaxed to $\hat x_t \in [0,1]$.  In this relaxation, the optimal online algorithm proposed by~\cite{knpasackFractional2008} determines the fractional allocation~$\hat{x}_t$ for buyer~$t$, given their value~$v_t$, according to the following utility maximization rule:
\begin{align}
\label{eq:phi-function-definition-neutral}
\hat{x}_t
= \arg\max_{x \in [0,1]}
\Biggl\{
v_t x
- k \int_{\hat{y}_t/k}^{(\hat{y}_t + x)/k} \phi(\eta)\, d\eta
\Biggr\},
\quad \text{where } \phi(x) =
\begin{cases}
L, & \text{if } x \in \bigl[0, \tfrac{1}{\alpha}\bigr), \\
L \cdot \exp(\alpha x - 1), & \text{if } x \in \bigl[\tfrac{1}{\alpha}, 1\bigr].
\end{cases}
\end{align}
Here, $\hat y_t := \sum_{s < t} \hat x_s$ denotes the algorithm’s cumulative (fractional) allocation just before buyer $t$ arrives.
Intuitively, $\phi(\eta)$ represents the marginal price at the normalized utilization level $\eta \in [0,1]$, and the integral term captures the total cost of allocating an additional $x$ units when the current utilization is $\hat y_t/k$.

To see how pricing design arise naturally from this perspective, partition the inventory into two quotas, $q_1$ and $q_2$, such that $q_1 + q_2 = k$. There exists an equivalent form of the above maximization expression that yields the same fractional decision. By an appropriate change of variables, one can verify that $\hat x_t$ can also be expressed as follows:
\begin{align*}
    \hat x_t & = \arg\max_{x \in [0,1]} \bigg\{ v_t x  -  q_1 \cdot \int_{\min\{\hat y_t/q_1,\,1\}}^{\min\{1,(\hat y_t + x)/q_1\}} \phi\big((q_1/k)\,\eta\big)\, d\eta  -  \\
    &\qquad \mathbf{1}\{\hat y_t + x > q_1\} \cdot q_2 \cdot \int_{\max\{0,\frac{\hat y_t-q_1}{q_2}\}}^{\min\{1,(\hat y_t + x - q_1)_+/q_2\}} \phi\big((q_1/k) + (q_2/k)\,\eta\big)\, d\eta \bigg\}.
    \end{align*}
The first integral prices the portion of the allocation that lies within the first $q_1$ units, using the curve $\eta \mapsto \phi\big((q_1/k)\eta\big)$. Once the cumulative allocation reaches $q_1$ (i.e., when $\hat y_t \ge q_1$), this term becomes inactive. The second integral prices the spillover into the next $q_2$ units using a shifted curve $\eta \mapsto \phi\big((q_1/k) + (q_2/k)\eta\big)$, which only contributes when the decision interval $(\hat y_t, \hat y_t + x)$ crosses the boundary at $q_1$. Moreover, the pricing of each segment of the items is normalized so that the fraction of that segment lies within the unit interval $[0,1]$. For example, for the first $q_1$ units, the integration over the curve $\eta \mapsto \phi\big((q_1/k)\eta\big)$ is normalized to the range $[0,1]$. Consequently, the design of the pricing functions in Theorem~\ref{thm:limited:price:change:optimality} is inspired by this structure—where the pricing for the first $q_1$ units and the subsequent $q_2$ units follows directly from the pricing functions used in utility-maximization formulation described above.

\subsection{Proof of Theorem~\ref{thm:limited:price:change:optimality}}

Before detailing the proof, we first introduce a set of notations that will be utilized throughout the remainder of this paper.

\paragraph{Notation.}
Let $y^{(r)}_T$ denote the number of units sold when the random seed $R$ realizes to $r \in [0,1]$. This quantity is deterministic given some instance $I$ of the \oksrisk problem as input. For simplicity we drop the subscript $T$ and write $y^{(r)}$.
Let $y^* \coloneqq \max_{r \in [0,1]} y^{(r)}$ be the maximum over realizations of the random seed $R$, and let
\begin{align*}
i^* \coloneqq 
\begin{cases}
0, & \text{if } y^* < q_1, \\
\max \left\{ i \in \{1,2,\dots,\Delta+1\} \ \middle|\  y^* \ge \sum_{l=1}^{i} q_l \right\}, & \text{otherwise},
\end{cases}
\end{align*}
which denotes the highest price level such that, once the utilization of algorithm is equal to $y^*$, 
the algorithm fully allocates all reserved units from price levels $1$ through~$i^*$. 
Let $r^* \in [0,1]$ denote the largest value of the random seed~$R$ under which the algorithm fully utilizes the reserved units from the first up to the $i^*$-th price level, i.e., 
$r^* = \max\{\, r \in [0,1] : y^{(r)} \ge \sum_{i=1}^{i^*} q_i \,\}$. 
Furthermore, let us define the function $\phi_i^*:[L,U]\rightarrow [0,1]$ where $\phi_i^*(v) = \sup\{x\in[0,1] | \phi_i(x) \leq v\}$ is defined as the general inverse of the $\phi_i$ function.

\paragraph{Proof Overview.}
We first establish two key structural properties of Algorithm~\ref{alg:corr:PPM:limited:price:change}. 
The first is a monotonicity property concerning the number of items sold, and the second provides a lower bound on the utilization level of the algorithm across all realizations of the random seed~$R$.
We then formulate a dual linear program whose optimal objective value serves as an upper bound on the offline optimal welfare. 
Next, following a randomized online primal-dual (R-OPD) approach, we define dual variable updates as functions of the realized random seed and construct a candidate dual solution. 
We then show that the dual constraints are $\alpha^{\star}$-feasible in expectation under these updates and that each buyer contributes at most $v_t$ in expectation to the dual objective. 
By weak duality, this implies that the algorithm achieves at least a $1/\alpha^{\star}$ fraction of the offline optimal value, thereby establishing its $\alpha^{\star}$-competitiveness.

We now proceed with the detailed proof of Theorem~\ref{thm:limited:price:change:optimality}.

\begin{proof}
Let us fix an input instance $I$ of the \oksrisk problem and prove Algorithm~\ref{alg:corr:PPM:limited:price:change} is $\alpha^{\star} = 1+\ln(U/L)$-competitive on this instance of the problem. Although Algorithm~\ref{alg:corr:PPM:limited:price:change} is randomized, its correlated pricing scheme ensures that the number of units sold does not vary significantly across different sample paths. 
In other words, the total number of sold units remains close to the maximum $\sum_{i=1}^{i^*} q_i$ achieved across all realizations of the random seed. 
This observation is made precise through the following two properties.

\paragraph{Monotonic Utilization.} 
We establish a monotonicity property for the number of units sold by Algorithm~\ref{alg:corr:PPM:limited:price:change} as a function of the realized random seed~$R$. 
Specifically, we show that the utilization of the algorithm by the arrival of the $t$-th buyer, denoted $y^{(r)}_{t}$, is \emph{nonincreasing} in~$r$. 
This monotonicity arises from the fact that the pricing profiles posted by the algorithm become progressively more aggressive as the random seed~$R$ increases from~$0$ to~$1$.

\begin{lemma}
\label{lem:monotonicity:1}
For any $r_1, r_2 \in [0,1]$ with $r_1 \le r_2$, and any buyer $t \in [T]$, we have $y_t^{(r_1)} \ge y_t^{(r_2)}$.
\end{lemma}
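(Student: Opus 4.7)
The plan is to proceed by induction on $t$, showing that the monotonicity is preserved one arrival at a time. The base case $t=1$ is immediate because $y_1^{(r)} = 0$ for every realization $r$. For the inductive step, fix $r_1 \le r_2$ and assume $y_t^{(r_1)} \ge y_t^{(r_2)}$. Since the map $y \mapsto \max\{j : y \ge \sum_{l=1}^{j-1} q_l\}$ is nondecreasing in $y$, it follows that $j_t^{(r_1)} \ge j_t^{(r_2)}$, where $j_t^{(r)}$ denotes the price level selected by \algname at step $t$ under seed $r$. I would then split on whether the two levels coincide.

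In the first case, $j_t^{(r_1)} = j_t^{(r_2)} = j$. Because $\phi_j$ is nondecreasing on $[0,1]$ and $r_1 \le r_2$, the posted prices satisfy $p_t^{(r_1)} = \phi_j(r_1) \le \phi_j(r_2) = p_t^{(r_2)}$. Consequently, whenever buyer $t$ accepts under seed $r_2$ (i.e.\ $v_t \ge p_t^{(r_2)}$), they also accept under seed $r_1$, so the indicator satisfies $x_t^{(r_1)} \ge x_t^{(r_2)}$ and the inductive inequality propagates to $t+1$.

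In the second case, $j_t^{(r_1)} > j_t^{(r_2)}$, and in particular $j_t^{(r_2)} \le \Delta$ so that $j_t^{(r_2)}+1$ is a valid level. By maximality in the definition of $j_t^{(r_2)}$, we have $y_t^{(r_2)} < \sum_{l=1}^{j_t^{(r_2)}} q_l$, while by definition of $j_t^{(r_1)}$ we have $y_t^{(r_1)} \ge \sum_{l=1}^{j_t^{(r_1)}-1} q_l \ge \sum_{l=1}^{j_t^{(r_2)}} q_l$. Since the $y_t^{(\cdot)}$ are integer-valued, this chain yields the strict slack $y_t^{(r_1)} \ge y_t^{(r_2)} + 1$. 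Even in the worst sub-case, where buyer $t$ rejects under $r_1$ but accepts under $r_2$, we obtain $y_{t+1}^{(r_1)} = y_t^{(r_1)} \ge y_t^{(r_2)} + 1 = y_{t+1}^{(r_2)}$, preserving the desired inequality.

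The only delicate point is the second case: the more aggressive pricing profile at $r_1$ (higher index $j_t^{(r_1)}$ and potentially higher posted price) could conceivably cause buyer $t$ to be rejected under $r_1$ but accepted under $r_2$, which would decrease $y_t^{(r_1)} - y_t^{(r_2)}$ by one. The argument hinges on showing that the jump in price level is possible only when there is already at least one unit of slack between $y_t^{(r_1)}$ and $y_t^{(r_2)}$. This is precisely what the maximality of $j_t^{(r_2)}$ plus the integrality of inventory counts delivers, so I expect no further obstacles beyond making this accounting rigorous.
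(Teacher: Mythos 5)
Your proof is correct and takes essentially the same route as the paper's: an inductive argument showing that the utilization under the smaller seed never falls behind, driven by the observation that equal utilizations force equal price levels and hence a lower posted price under $r_1$. The paper organizes its induction over the subsequence of buyers who accept under $r_2$ and only spells out the tight case, so your per-arrival induction with the explicit integrality/slack argument for the case $j_t^{(r_1)} > j_t^{(r_2)}$ is, if anything, slightly more complete.
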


The proof of above lemma can be found in Appendix~\ref{apx:lem:monotonicity:1}.
Thus, following the above lemma, we can see that $y^{(r)} \ge \sum_{i=1}^{i^*} q_i$ for all values of $ r \in [0,r^*]$. 

\paragraph{Lower Bound on the Utilization $y^{(r)}$.} 
Given the constraint on the reservation vector $q_1 \le q_2 \le \dots \le q_{\Delta+1}$, and noting that the level-wise pricing functions are nondecreasing across levels—i.e., $\phi_j(\cdot) \le \phi_{j+1}(\cdot)$ by design—we obtain the following lower bound on the total utilization~$y^{(r)}$.

\begin{lemma}
\label{lem:lower-bound:yr}
If $i^* \ge 2$, then for all $r \in [0,1]$, we have
$
y^{(r)} \ge \sum_{i=1}^{i^*-1} q_i,
$
under the reservation-vector constraint $q_1 \le q_2 \le \dots \le q_{\Delta+1}$.
\end{lemma}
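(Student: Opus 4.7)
My plan is to split the analysis into the two regimes $r\le r^*$ and $r>r^*$. The first regime is immediate from Lemma~\ref{lem:monotonicity:1}, which makes $y^{(\cdot)}$ non-increasing in the seed: for $r\le r^*$,
\begin{equation*}
y^{(r)} \;\ge\; y^{(r^*)} \;\ge\; \sum_{i=1}^{i^*} q_i \;\ge\; \sum_{i=1}^{i^*-1} q_i,
\end{equation*}
where the middle inequality is the definition of $r^*$. All the real work is in the regime $r>r^*$, where level $i^*$ is no longer fully sold under seed $r$ and one must nevertheless argue that the first $i^*-1$ levels are.

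For that regime, I would introduce the completion time $\tau_j^{(s)}$, defined as the arrival index at which the running utilization under seed $s$ first reaches $\sum_{l\le j}q_l$, with the convention $\tau_0^{(s)}=0$. The proposal is to prove by induction on $j\in\{1,\dots,i^*-1\}$ that $\tau_j^{(r)}\le\tau_{j+1}^{(r^*)}$. This suffices because taking $j=i^*-1$ gives $\tau_{i^*-1}^{(r)}\le\tau_{i^*}^{(r^*)}\le T$, which is exactly the statement that level $i^*-1$ is completed under seed $r$, i.e., $y^{(r)}\ge\sum_{i=1}^{i^*-1}q_i$.

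The inductive step combines two structural facts. The level-dominance $\phi_j(1)\le\phi_{j+1}(0)$ implies, for all $r,r^*\in[0,1]$, that $\phi_{j+1}(r^*)\ge\phi_j(1)\ge\phi_j(r)$; the sorted reservation vector gives $q_{j+1}\ge q_j$. Assuming $\tau_{j-1}^{(r)}\le\tau_j^{(r^*)}$, I would argue as follows. The $q_{j+1}$ buyers that complete level $j+1$ under seed $r^*$ arrive in the window $(\tau_j^{(r^*)},\tau_{j+1}^{(r^*)}]$ and have valuations at least $\phi_{j+1}(r^*)\ge\phi_j(r)$. Because $\tau_j^{(r^*)}\ge\tau_{j-1}^{(r)}$, this entire window lies after level $j-1$ has been completed under seed $r$; hence under $R=r$, each such buyer, so long as it arrives before level $j$ has already been completed, is offered the level-$j$ price $\phi_j(r)$ and accepts. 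Either level $j$ is completed under $R=r$ strictly before any of these buyers arrive (in which case $\tau_j^{(r)}\le\tau_j^{(r^*)}\le\tau_{j+1}^{(r^*)}$), or the $q_{j+1}\ge q_j$ of them are more than enough to complete level $j$ by time $\tau_{j+1}^{(r^*)}$. Either way, $\tau_j^{(r)}\le\tau_{j+1}^{(r^*)}$. The base case $j=1$ is the same, using $\tau_0^{(r)}=0$.

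The delicate point, in my view, is aligning all three structural ingredients—the seed-monotonicity of the utilizations, the level-dominance of $\boldsymbol{\phi}$, and the sortedness of $\{q_j\}$—inside the induction. Specifically, one must verify that the $q_{j+1}$ high-value buyers exhibited by the $R=r^*$ execution really do land inside the sub-interval during which the $R=r$ execution is still offering level-$j$ prices; this is precisely where the inductive hypothesis $\tau_{j-1}^{(r)}\le\tau_j^{(r^*)}$ is deployed, and the sortedness $q_{j+1}\ge q_j$ is what makes the count of these buyers sufficient to finish level $j$. Dropping any one of the three hypotheses would break this chain.
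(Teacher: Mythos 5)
Your proposal is correct and follows essentially the same route as the paper's proof: both arguments induct over price levels, using the $q_{j+1}\ge q_j$ buyers who complete level $j+1$ under seed $r^*$ (whose valuations are at least $\phi_{j+1}(0)\ge\phi_j(1)\ge\phi_j(r)$ by level-dominance) to force completion of level $j$ under an arbitrary seed $r$, with the inductive hypothesis guaranteeing those buyers arrive after level $j-1$ is already filled. Your completion-time formalization $\tau_j^{(r)}\le\tau_{j+1}^{(r^*)}$ and the separate (and strictly unnecessary) case $r\le r^*$ are just a more explicit packaging of the same induction the paper sketches.
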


The proof of above lemma can be found in Appendix~\ref{apx:lem:lower-bound:yr}. In the subsequent proof, we utilize the two aforementioned properties, in conjunction with the $\text{R-OPD}$ framework, to establish the optimality of the competitive ratio for the pricing design detailed in Theorem~\ref{thm:limited:price:change:optimality}.

Moving forward, we assume that $\phi_{i^*+1}(r^*) > L$. This assumption is without loss of generality. Indeed, if for some instance $I$ we have $\phi_{i^*+1}(r^*) \le L$, then all buyers in that instance are accepted by \algname. To see this, note that the highest posted price used by \algname on instance $I$ is at most $\phi_{i^*+1}(r^*) \le L$. Since every buyer value satisfies $v_t \ge L$ by Assumption~\ref{ass:bounded-value}, every buyer accepts the posted price. Therefore, in this case, \algname accepts all buyers in instance $I$.

Consider the following dual linear program (LP), which upper bounds the offline optimum:
\begin{align}
\label{eq:dual:oks:expected}
\min_{{u_t}, {\lambda_j}} \quad \sum_{t \in [T]} u_t + \sum_{j=1}^{\Delta+1} \lambda_j \cdot  q_j
\qquad
\text{s.t.} \quad
v_t \le u_t + \frac{1}{k} \sum_{j=1}^{\Delta+1} \lambda_j \cdot q_j, \quad \forall t \in [T].
\end{align}
It can be verified that the optimal objective value of this LP provides an upper bound on the performance of the offline clairvoyant algorithm.
Following the economic interpretation of the randomized primal-dual framework presented in \cite{eden2021economics}, we can interpret the variable $\lambda_j$ as the \emph{price} associated with the $j$-th set of reserved units at the $j$-th price level, and the variable $u_t$ as the \emph{utility} of buyer~$t$ resulting from participating in the pricing scheme implied by the primal-dual construction.

Following the R-OPD framework, we construct, for each realization of the random seed $R = r$, a corresponding set of dual variables ${ \lambda_j^{(r)}, u_t^{(r)} }$.
The final dual variables are then defined as their expectations over the random seed such that $\lambda_j = \mathbb{E}_R[\lambda_j^{(R)}]$ and $
u_t = \mathbb{E}_R[u_t^{(R)}],
$ where the expectation is taken with respect to the random seed~$R$.

Initialize all dual variables to zero. Then for a realization of the random seed $R = r$, let us update the dual variables $\lambda_j^{(r)}$ as follows:
\begin{align}
\label{eq:limit:dual:update:lambda}
\lambda_i^{(r)} =
\begin{cases}
\phi_i(r), & i \in \{1,2,\dots,i^*-1\}, \\[2pt]
\phi_i(r), & i = i^*,\, r \in [0,\, r^*], \\[2pt]
0,         & \text{otherwise}.
\end{cases}
\end{align}
Furthermore, if buyer $t$ receives one unit from the $i$-th price level, set
\begin{align}
\label{eq:limit:dual:update:u}
u_t^{(r)} = 
\begin{cases}
v_t - \phi_i(r), & \text{if } i < i^* \text{ or } (i = i^* \text{ and } r \le r^*), \\[2pt]
v_t,             & \text{otherwise}.
\end{cases}
\end{align}

The update in Eq.~\eqref{eq:limit:dual:update:lambda} mirrors the posted price at level~$i$ whenever, under realization~$r$, the reserved units at that level are fully utilized. By the lower bound established in Lemma~\ref{lem:lower-bound:yr}, which guarantees that $y^{(r)} \ge \sum_{l=1}^{i^*-1} q_l$ for all $r \in [0,1]$, and based on the structural monotonicity proved in Lemma~\ref{lem:monotonicity:1}, the reserved units for the $i^*$-th level are fully utilized for all $r \in [0, r^*]$. 
The update in Eq.~\eqref{eq:limit:dual:update:u} sets $u_t^{(r)}$ to the buyer’s utility when buyer~$t$ is allocated a unit from a fully utilized price level (that is, a level $i < i^*$, or level~$i^*$ when $r \le r^*$), according to the price posted at that level and the buyer’s value. Taking expectations of these per-realization dual variables over~$R$ produces the final dual solution $(\{u_t\}, \{\lambda_j\})$ used in the R-OPD analysis.

We next show that the dual objective value of the solution obtained from the above updates equals the expected performance of  
Algorithm~\ref{alg:corr:PPM:limited:price:change} on instance~$I$.  
It suffices to prove that, under any realization $R = r$, $ \sum_{t \in [T]} u^{(r)}_t  +  \sum_{j=1}^{\Delta+1} \lambda^{(r)}_j \, q_j = \alg^{(r)}(I)$. Fix a realization $R = r$.  
Let $B^{(r)}$ denote the set of buyers who are allocated a unit under this realization.  
For each price level~$j$, let $B_j^{(r)} \subseteq B^{(r)}$ denote the subset of buyers served from the $j$-th level’s reserved units, so that $\lvert B_j^{(r)} \rvert$ represents the number of $j$-level units actually sold under~$r$.  

From Lemma~\ref{lem:lower-bound:yr}, we know that the algorithm fully utilizes the first $(i^*-1)$ levels of reserved units for all realizations of~$R$. Furthermore, following the definition of~$r^*$ and the monotonicity established in Lemma~\ref{lem:monotonicity:1}, all reserved units up to the $i^*$-th level are also fully utilized for realized values of the random seed in the range~$[0, r^*]$.  
Thus, we have  $ \lvert B_j^{(r)} \rvert = q_j, \ \forall j \in \{1, \dots, i^*-1\}, 
$ and $ 
\lvert B_{i^*}^{(r)} \rvert = q_{i^*}, \text{ if } r \le r^*.
$

Since the dual variable $u^{(r)}_t$ is updated only for buyers who receive an item under realization~$r$, 
as specified by Eq.~\eqref{eq:limit:dual:update:u}, we can express the total contribution from the $u_t^{(r)}$-variables as
\begin{align*}
\sum_{t \in [T]} u^{(r)}_t 
&= \sum_{j=1}^{\Delta+1} \sum_{t \in B^{(r)}_{j}} v_t \\
&= \sum_{j=1}^{i^*-1} \sum_{t \in B_j^{(r)}} (v_t - \phi_j(r))
  + \mathbf{1}\{r \le r^*\} \sum_{t \in B_{i^*}^{(r)}} (v_t - \phi_{i^*}(r))
  + \mathbf{1}\{r > r^*\} \sum_{t \in B_{i^*}^{(r)}} v_t
 \\
 & \quad + \sum_{j=i^*+1}^{\Delta+1} \sum_{t \in B_{j}^{(r)}} v_t, \\
&=  \sum_{j=1}^{\Delta+1} \sum_{t \in B_{j}^{(r)}} v_t 
   - \sum_{j=1}^{i^*-1} q_j \phi_j(r)
   - \mathbf{1}\{r \le r^*\} \, q_{i^*} \phi_{i^*}(r).
\end{align*}
Here, the second equality follows from the dual update rule in Eq.~\eqref{eq:limit:dual:update:u}, 
by decomposing the summation over all price levels and separating the case of the $i^*$-th price level based on the realized value of the random seed. 
The third equality follows from the sizes of the sets $B^{(r)}_{j}$ established above.  

Using the dual update rule for the $\lambda_j^{(r)}$ variables from Eq.~\eqref{eq:limit:dual:update:lambda}, 
the total dual contribution associated with these variables is given by
\begin{align*}
\sum_{j=1}^{\Delta+1} \lambda^{(r)}_j \, q_j
=  \sum_{j=1}^{i^*-1} q_j \phi_j(r)
  + \mathbf{1}\{r \le r^*\} \, q_{i^*} \phi_{i^*}(r).
\end{align*}
Combining the above two expressions for $\sum_t u_t^{(r)}$ and $\sum_j \lambda_j^{(r)} q_j$, we obtain
\begin{align*}
\sum_{t \in [T]} u^{(r)}_t  +  \sum_{j=1}^{\Delta+1} \lambda^{(r)}_j \, q_j
&= \sum_{i=1}^{\Delta+1} \sum_{t \in B_i^{(r)}} v_t
= \alg^{(r)}(I),
\end{align*}
which shows that the dual objective under realization~$r$ exactly equals the realized welfare of the algorithm.
Taking expectations with respect to~$R$ on both sides of the equality above yields the desired identity.

For every buyer $t \in [T]$, we show that the dual constraint in Eq.~\eqref{eq:dual:oks:expected} corresponding to this buyer is $\alpha^{\star}$-feasible in expectation; that is, $
\mathbb{E}_R \left[\, u^{(R)}_t + \frac{1}{k} \sum_{j=1}^{\Delta+1} \lambda^{(R)}_j \, q_j \,\right]
\ge \frac{v_t}{\alpha^{\star}}.
$
Establishing this inequality completes the R-OPD analysis and proves the $\alpha^{\star}$-competitiveness of
Algorithm~\ref{alg:corr:PPM:limited:price:change} based on the framework used in \cite{devanur2013}.
Consider a buyer $t$ with value $v_t$ such that, for some $i \in [\Delta+1]$, we have 
$\phi_i(0) \le v_t \le \phi_i(1)$. 
Depending on the value of $i$ and $v_t$, we prove the feasibility of the dual constraint under several different scenarios in what follows.

\textbf{Case I:} Either $i \le i^* - 1$, or $i = i^*$ and $\phi_i^*(v_t) \le r^*$. From the dual update rule defined in Eq.~\eqref{eq:limit:dual:update:lambda}, and noting that for all realizations of $R \in [0,1]$, the reserved units corresponding to the first $i^* - 1$ price levels are fully sold, and for $R \in [0, r^*]$, the reserved units at the $i^*$-th price level are also fully sold, we have
\begin{align*}
    \mathbb{E} \left[u_t + \frac{\sum_{l=1}^{\Delta+1} \lambda_l q_l}{k}\right]
    &\ge \sum_{l=1}^{i^*-1} \frac{q_l}{k} \int_{0}^{1} \phi_l(\eta)\, d\eta
    + \frac{q_{i^*}}{k} \int_{0}^{r^*} \phi_{i^*}(\eta)\, d\eta \\
    &= \sum_{l=1}^{i^*-1} \int_{\sum_{m=1}^{l-1} \frac{q_m}{k}}^{\sum_{m=1}^{l} \frac{q_m}{k}} \phi(\eta)\, d\eta
    + \int_{\sum_{l=1}^{i^*-1} \frac{q_l}{k}}^{\sum_{l=1}^{i^*-1} \frac{q_l}{k} + \frac{q_{i^*}}{k} r^*} \phi(\eta)\, d\eta \\
    &= \int_{0}^{\sum_{l=1}^{i^*-1} \frac{q_l}{k} + \frac{q_{i^*}}{k} r^*} \phi(\eta)\, d\eta \\
    &= \frac{\phi \left(\sum_{l=1}^{i^*-1} \frac{q_l}{k} + \frac{q_{i^*}}{k} r^*\right)}{\alpha^{\star}} \\
    &\ge \frac{v_t}{\alpha^{\star}}.
\end{align*}

The first inequality follows from the dual update rule in Eq.~\eqref{eq:limit:dual:update:lambda}.  
The following equalities holds by the construction of the pricing functions $\phi_j$, as described in Theorem~\ref{thm:limited:price:change:optimality} (see Eq.~\eqref{eq:phi:design:limited:price}) and the definition of $\phi$ function according to Eq.~\eqref{eq:phi-function-definition-neutral}.


Let us now continue the proof of the feasibility of the dual constraints for each buyer~$t$ by considering the remaining cases.

\textbf{Case II}: $i = i^*$ and $\phi_i^{-1}(v_t) > r^*$. In this case, we consider the following two subcases.

\textit{Subcase I}: For some realized value of $R \in [r^*,  \phi_i^{-1}(v_t))$, buyer $t$ is allocated a unit from among the first $\sum_{l=1}^{i-1} q_l$ reserved units.
We show that, with probability one, buyer $t$ is always allocated a unit. Consider such an $r' \in [r^*,  \phi_i^{-1}(v_t))$ where buyer $t$ receives a unit from the first $\sum_{l=1}^{i-1} q_l$ reserved units when $R = r'$. 
By Lemma~\ref{lem:lower-bound:yr}, for all $r \in [r',  1]$, the utilization level at the arrival of buyer $t$ satisfies $y_t^{(r)} \le y_t^{(r')}$.
Therefore, for any realization $R \in [r',  1]$, at least one unit from the first $\sum_{l=1}^{i-1} q_l$ units is always available for allocation to buyer $t$.
Moreover, for any realization of $R$ within $[0,  r')$, the utilization level $y_{t-1}^{(R)}$ is strictly less than $\sum_{l=1}^{i} q_l$. 
Otherwise, following an argument similar to that in Lemma~\ref{lem:monotonicity:1}, it would imply that $y_{t-1}^{(r')} \ge \sum_{l=1}^{i-1} q_l$, contradicting the assumption that under $r'$, buyer $t$ is allocated a unit from the first $\sum_{l=1}^{i-1} q_l$ units. Hence, from the above analysis, buyer $t$ is allocated a unit with probability one. 

Using the dual update rules in Eq.~\eqref{eq:limit:dual:update:u} and Eq.~\eqref{eq:limit:dual:update:lambda}, we obtain
\begin{align*}
    \mathbb{E} \left[u_t + \frac{1}{k} \sum_{l=1}^{\Delta+1} \lambda_l q_l \right]
    &\ge \sum_{l=1}^{i^*-1} \frac{q_l}{k} \int_{0}^{1} \phi_l(\eta)  d\eta
    + \frac{q_{i^*}}{k} \int_{0}^{r^*} \phi_{i^*}(\eta)  d\eta \\
    &\quad + v_t - \int_{0}^{r^*} \phi_{i^*}(\eta)  d\eta - \int_{r^*}^{1} \phi_{i^*-1}(\eta)  d\eta \\[1ex]
    &= \sum_{l=1}^{i^*-1} \int_{\sum_{m=1}^{l-1} \frac{q_m}{k}}^{\sum_{m=1}^{l} \frac{q_m}{k}} \phi(\eta)  d\eta
    + \int_{\sum_{l=1}^{i^*-1} \frac{q_l}{k}}^{\sum_{l=1}^{i^*-1} \frac{q_l}{k} + \frac{q_{i^*}}{k} r^*} \phi(\eta)  d\eta \\
    &\quad + v_t - \int_{0}^{r^*} \phi_{i^*}(\eta)  d\eta - \int_{r^*}^{1} \phi_{i^*-1}(\eta)  d\eta \\[1ex]
    &\ge \int_{0}^{\sum_{l=1}^{i^*-1} \frac{q_l}{k} + \frac{q_{i^*}}{k} r^*} \phi(\eta)  d\eta
    + \int_{\sum_{l=1}^{i^*-1} \frac{q_l}{k} + \frac{q_{i^*}}{k} r^*}^{\sum_{l=1}^{i^*-1} \frac{q_l}{k} + \frac{q_{i^*}}{k} \phi_{i^*}^{-1}(v_t)} \phi(\eta)  d\eta \\[1ex]
    &= \frac{1}{\alphathm}   \phi \left(\sum_{l=1}^{i^*-1} \frac{q_l}{k} + \frac{q_{i^*}}{k} r^*\right)
    \ge \frac{v_t}{\alphathm}.
\end{align*}

In the above derivation:
The first two terms follow from the dual update rule in Eq.~\eqref{eq:limit:dual:update:lambda}, together with the facts that $y_T^{(r)} \ge \sum_{l=1}^{i^*-1} q_l$ for all $r \in [0,1]$, and $y_T^{(r)} \ge \sum_{l=1}^{i^*} q_l$ for all $r \in [0, r^*]$. The last three terms on the left-hand side of the first inequality follow from the dual update for buyer $t$ in Eq.~\eqref{eq:limit:dual:update:u}. The second inequality follows from the construction of the functions $\phi_j$ in Theorem~\ref{thm:limited:price:change:optimality}. The final inequality is implied by the following lemma.

\begin{lemma}
\label{lemma:limit:dual:feasibility:1}
For any buyer $t$ with value $v_t \in [\phi_{i^*}(0),  \phi_{i^*}(1)]$, 
such that $\phi_{i^*}^{-1}(v_t) \ge r^*$, the following inequality holds:
\begin{align*}
v_t
- \int_{r^*}^{1} \phi_{i^*-1}(\eta)  d\eta
- \int_{0}^{r^*} \phi_{i^*}(\eta)  d\eta
\ge
\int_{\displaystyle \sum_{l=1}^{i^*-1} \frac{q_l}{k} + \frac{q_{i^*}}{k}  r^*}^{\displaystyle \sum_{l=1}^{i^*-1} \frac{q_l}{k} + \frac{q_{i^*}}{k}  \phi_{i^*}^{-1}(v_t)}
\phi(\eta)  d\eta .
\end{align*}
\end{lemma}

The proof of the above lemma is provided in Appendix~\ref{apx:lemma:limit:dual:feasibility:1}. The argument follows from the construction of the pricing functions described in Theorem~\ref{thm:limited:price:change:optimality}. This concludes the proof of dual constraint feasibility for buyer~$t$, whose valuation satisfies the conditions of Case~2, Subcase~1.

\textit{Subcase 2:} For no value of $r \in [r^*,  \phi_i^{-1}(v_t))$ is a unit from the first $\sum_{l=1}^{i-1} q_l$ reserved units allocated to buyer~$t$. In this case, whenever $r \in [r^*,  w]$, buyer~$t$ is allocated a unit from the reserved units corresponding to the $i^*$-th price level. 
From the dual updates, we have
\begin{align*}
    \mathbb{E} \left[u_t + \frac{1}{k} \sum_{l=1}^{\Delta+1} \lambda_l q_l \right]
    &\ge \sum_{l=1}^{i^*-1} \frac{q_l}{k} \int_{0}^{1} \phi_l(\eta)  d\eta
      + \frac{q_{i^*}}{k} \int_{0}^{r^*} \phi_{i^*}(\eta)  d\eta
      + v_t (w - r^*) \\[1ex]
    &= \sum_{l=1}^{i^*-1} 
       \int_{\sum_{m=1}^{l-1} \frac{q_m}{k}}^{\sum_{m=1}^{l} \frac{q_m}{k}} 
       \phi(\eta)  d\eta
       + \int_{\sum_{l=1}^{i^*-1} \frac{q_l}{k}}^{\sum_{l=1}^{i^*-1} \frac{q_l}{k} + \frac{q_{i^*}}{k} r^*} 
         \phi(\eta)  d\eta
       + v_t (w - r^*) \\[1ex]
    &\ge \int_{0}^{\sum_{l=1}^{i^*-1} \frac{q_l}{k} + \frac{q_{i^*}}{k} r^*} 
           \phi(\eta)  d\eta
       + \int_{\sum_{l=1}^{i^*-1} \frac{q_l}{k} + \frac{q_{i^*}}{k} r^*}^{\sum_{l=1}^{i^*-1} \frac{q_l}{k} + \frac{q_{i^*}}{k} w} 
         \phi(\eta)  d\eta \\[1ex]
    &= \frac{1}{\alphathm}  
       \phi \left(\sum_{l=1}^{i^*-1} \frac{q_l}{k} + \frac{q_{i^*}}{k} w\right)
      = \frac{v_t}{\alphathm}.
\end{align*}

\textbf{Case III:} $i = i^* + 1$ and $\phi_i^{-1}(v_t) < r^*$.  
Let $w = \phi_i^{-1}(v_t)$.

\textit{Subcase 1:} For some $r \in [0,  w]$, one of the first $\sum_{l=1}^{i^*} q_l$ reserved units is allocated to buyer~$t$. Then, following the same reasoning as in Subcase~1 of Case~2, a unit of the item is allocated to buyer~$t$ with probability one.  
Hence, we have
\begin{align*}
    \mathbb{E} \left[u_t + \frac{1}{k} \sum_{l=1}^{\Delta+1} \lambda_l q_l \right]
    &\ge \sum_{l=1}^{i^*-1} \frac{q_l}{k} \int_{0}^{1} \phi_l(\eta)  d\eta
      + \frac{q_{i^*}}{k} \int_{0}^{r^*} \phi_{i^*}(\eta)  d\eta \\
    &\quad + v_t - \int_{0}^{r^*} \phi_{i^*}(\eta)  d\eta
      - \int_{r^*}^{1} \phi_{i^*-1}(\eta)  d\eta \\[1ex]
    &= \sum_{l=1}^{i^*-1} 
       \int_{\sum_{m=1}^{l-1} \frac{q_m}{k}}^{\sum_{m=1}^{l} \frac{q_m}{k}} 
       \phi(\eta)  d\eta
       + \int_{\sum_{l=1}^{i^*-1} \frac{q_l}{k}}^{\sum_{l=1}^{i^*-1} \frac{q_l}{k} + \frac{q_{i^*}}{k} r^*} 
         \phi(\eta)  d\eta \\
    &\quad + v_t - \int_{0}^{r^*} \phi_{i^*}(\eta)  d\eta
      - \int_{r^*}^{1} \phi_{i^*-1}(\eta)  d\eta \\[1ex]
    &\ge \int_{0}^{\sum_{l=1}^{i^*-1} \frac{q_l}{k} + \frac{q_{i^*}}{k} r^*} \phi(\eta)  d\eta
      + \int_{\sum_{l=1}^{i^*-1} \frac{q_l}{k} + \frac{q_{i^*}}{k} r^*}^{\sum_{l=1}^{i^*-1} \frac{q_l}{k} + \frac{q_{i^*}}{k} \phi_{i^*}^{-1}(v_t)} 
        \phi(\eta)  d\eta \\[1ex]
    &= \frac{1}{\alphathm} 
       \phi \left(\sum_{l=1}^{i^*-1} \frac{q_l}{k}
       + \frac{q_{i^*}}{k} r^*\right)
      \ge \frac{v_t}{\alphathm}.
\end{align*}

The first inequality and the second equality follow by the same reasoning as in the preceding cases.  
Furthermore, the second inequality follows from the lemma stated below.

\begin{lemma}
\label{lemma:limit:dual:feasibility:2}
For any buyer $t$ with value $v_t \in [\phi_{i^*+1}(0),  \phi_{i^*+1}(1)]$, 
such that $\phi_{i^*+1}^{-1}(v_t) \leq r^*$, the following inequality holds:
\begin{align*}
v_t 
- \int_{\eta = r^*}^{1} \phi_{i^*-1}(\eta)   d\eta
- \int_{\eta = 0}^{r^*} \phi_{i^*}(\eta)   d\eta
\ge 
\int_{\eta = \sum_{l=1}^{i^*-1} \frac{q_l}{k} + \frac{q_{i^*}}{k} \cdot r^*}^{\sum_{l=1}^{i^*} \frac{q_l}{k} + \frac{q_{i^*+1}}{k} \cdot w} \phi(\eta)   d\eta.
\end{align*}
\end{lemma}

The proof of above lemma can be found in Appendix~\ref{apx:lemma:limit:dual:feasibility:2} which follows from the pricing function design in Theorem~\ref{thm:limited:price:change:optimality}.

\textit{Subcase 2: }
For no value of $r \in [0,  w]$ is any of the first $\sum_{l=1}^{i^*} q_l$ reserved units allocated to buyer~$t$.  
Then, it must be that for $r \in [r^*,  1]$, a unit from the reserved units at the $i^*$-th price level is allocated to buyer~$t$, and for $r \in [0,  w]$, a unit from the $(i^*+1)$-th price level is allocated to buyer~$t$.  
Following the dual updates described above, we obtain
\begin{align*}
    \mathbb{E} \left[u_t + \frac{1}{k} \sum_{l=1}^{\Delta+1} \lambda_l q_l \right]
    &\ge \sum_{l=1}^{i^*-1} \frac{q_l}{k} \int_{0}^{1} \phi_l(\eta)  d\eta
      + \frac{q_{i^*}}{k} \int_{0}^{r^*} \phi_{i^*}(\eta)  d\eta
      + v_t \cdot (1 - r^* + w) \\[1ex]
    &\ge \sum_{l=1}^{i^*-1} 
       \int_{\sum_{m=1}^{l-1} \frac{q_m}{k}}^{\sum_{m=1}^{l} \frac{q_m}{k}} 
       \phi(\eta)  d\eta
       + \int_{\sum_{l=1}^{i^*-1} \frac{q_l}{k}}^{\sum_{l=1}^{i^*-1} \frac{q_l}{k} + \frac{q_{i^*}}{k} r^*} 
         \phi(\eta)  d\eta
       + v_t \cdot (1 - r^* + w) \\[1ex]
    &\ge \int_{0}^{\sum_{l=1}^{i^*-1} \frac{q_l}{k} + \frac{q_{i^*}}{k} r^*} \phi(\eta)  d\eta
      + \int_{\sum_{l=1}^{i^*-1} \frac{q_l}{k} + \frac{q_{i^*}}{k} r^*}^{\sum_{l=1}^{i^*} \frac{q_l}{k} + \frac{q_{i^*+1}}{k} w} 
        \phi(\eta)  d\eta \\[1ex]
    &= \frac{1}{\alphathm} 
       \phi \left(\sum_{l=1}^{i^*} \frac{q_l}{k}
       + \frac{q_{i^*+1}}{k} w\right)
       \ge \frac{v_t}{\alphathm}.
\end{align*}

\textbf{Case IV:} Either $i = i^* + 1$ and $\phi_i^{-1}(v_t) > r^*$, or $i > i^* + 1$.  The proof for this case follows by reasoning analogous to the previous cases. Therefore, considering all four cases, the $\alphathm$-feasibility of the dual constraint corresponding to each buyer~$t$ is verified. Consequently, the $\alphathm$-competitiveness of Algorithm~\ref{alg:corr:PPM:limited:price:change} follows.

Considering all cases together, the $\alpha^{\star}$-feasibility of the dual constraint corresponding to each buyer~$t$ is thus verified. Consequently, the $\alpha^{\star}$-competitiveness of Algorithm~\ref{alg:corr:PPM:limited:price:change} follows.
Based on the lower bound of $1 + \ln(U/L)$ established in prior work~\cite{sun2024static} for the attainable competitive ratio of any online algorithm for the $k$-selection problem, which is a special case of \oksrisk, \algname attains the optimal worst-case competitive ratio when $\delta = 1$ and value of $\Delta$ varies in the range $\{0,1,\dots,k-1\}$.

\end{proof}

\subsection{Proof of Lemma~\ref{lem:monotonicity:1}}
\label{apx:lem:monotonicity:1}
Let $B_{t}^{(r_2)}$ be the set of buyers to whom a unit is allocated by the arrival of buyer~$t$ under realization~$r_2$.
Consider the first buyer~$t'$ in $B_{t}^{(r_2)}$. If $y^{(r_1)}_{t'} = 0$ at the arrival of~$t'$, then $t'$ accepts the posted price
because $\phi_{1}(r_1) \le \phi_1(r_2) \le v_{t'}$, and thus $y^{(r_1)}_{t'}$ increases to~$1$.
Proceeding inductively, suppose that upon the arrival of the $\ell$-th buyer in $B_t^{(r_2)}$ we have
$y^{(r_1)}_{t'} = \ell - 1 < y^{(r_2)}_{t'} = \ell$. Then $v_{t'}$ is at least the posted price for the $\ell$-th unit under realization~$R = r_2$.
Since the posted prices are lower under~$r_1$ than under~$r_2$, following from the nondecreasing property of the pricing functions, buyer~$t'$ also accepts under~$r_1$, implying
$y^{(r_1)}_{t'} = \ell$. Hence, $y^{(r_1)}_{t} \ge y^{(r_2)}_{t}$.

\subsection{Proof of Lemma~\ref{lem:lower-bound:yr}}
\label{apx:lem:lower-bound:yr}

For each $i$, let $B^*_i$ be the set of buyers who, under realization $r^*$, were allocated one of the
$q_i$ reserved units at the $i$-th price level. By definition, for each $i \in [i^*]$, $|B^*_i| \ge q_i$.
From the pricing design in Theorem~\ref{thm:limited:price:change:optimality}, for all $t \in B^*_i$ we have
$v_t \ge \phi_{i-1}(1)$.
Let $t'$ be the first buyer in $B^*_2$. Fix any $r \in [0,1]$ and suppose $y^{(r)}_{t'} < q_1$ at the arrival
of $t'$. Since $|B^*_2| = q_2 \ge q_1$, and each buyer in $B^*_2$ has value at least $\phi_1(1)$ while the posted
price for the first $q_1$ units is $\phi_1(R) \le \phi_1(1)$ (as $\phi_1$ is increasing), every buyer in $B^*_2$
accepts the price for those first $q_1$ units. Thus these $q_1$ units are fully sold by the arrival of the last
buyer in $B^*_2$.
Repeating the same argument inductively for all $i \le i^*$ yields, for any $r \in [0,1]$,
$y^{(r)} \ge \sum_{i=1}^{i^*-1} q_i$, as claimed.

\subsection{Proof of Lemma~\ref{lemma:limit:dual:feasibility:1} }
\label{apx:lemma:limit:dual:feasibility:1}
Set $ A = \sum_{l=1}^{i^{*}-1} \frac{q_l}{k} $ and $
w = \phi_{i^{*}}^{-1}(v_t) $. Then
\begin{align*}
v_t
&- \int_{r^*}^{1} \phi_{i^*-1}(\eta) d\eta
- \int_{0}^{r^*} \phi_{i^*}(\eta) d\eta \\
&= \phi \left(A + w \frac{q_{i^*}}{k}\right)
   - \frac{k}{\alpha q_{i^*-1}}
     \Big(\phi(A) - \phi \big(A - (1-r^*) \frac{q_{i^*-1}}{k}\big)\Big)
   - \frac{k}{\alpha q_{i^*}}
     \Big(\phi \big(A + r^* \frac{q_{i^*}}{k}\big) - \phi(A)\Big),
\end{align*}
where the first equality follows from the definition of $\phi_j$ in Eq.~\eqref{eq:phi:design:limited:price}.

To prove the lemma it suffices to show
\begin{align*}
& \phi \left(A + w   \frac{q_{i^*}}{k}\right)
- \frac{k}{\alpha q_{i^*-1}}
  \Big(\phi(A) - \phi \big(A - (1-r^*) \frac{q_{i^*-1}}{k}\big)\Big)  \\
& \hspace{3mm}  - \frac{k}{\alpha q_{i^*}}
  \Big(\phi \big(A + r^* \frac{q_{i^*}}{k}\big) - \phi(A)\Big)
\ge
\frac{1}{\alpha} \left(\phi \left(A + w   \frac{q_{i^*}}{k}\right) 
- \phi \left(A + r^*  \frac{q_{i^*}}{k} \right)\right),
\end{align*}
where the right-hand side follows from the definition of the $\phi$ function according to Eq.~\eqref{eq:phi-function-definition-neutral}. Define
\begin{align*}
F = {}&
\phi \left(A + w \frac{q_{i^*}}{k}\right)
- \frac{k}{\alpha q_{i^*-1}}
  \Big(\phi(A) - \phi \big(A - (1-r^*) \frac{q_{i^*-1}}{k}\big)\Big)
- \frac{k}{\alpha q_{i^*}}
  \Big(\phi \big(A + r^* \frac{q_{i^*}}{k}\big) - \phi(A)\Big) \\
&\quad
- \frac{1}{\alpha} \left(\phi \left(A + w \frac{q_{i^*}}{k}\right) 
- \phi \left(A + r^* \frac{q_{i^*}}{k}\right)\right).
\end{align*}
We show $F\ge 0$ for all admissible instance-dependent parameters.

\begin{align*}
F
&= \Big(1-\frac{1}{\alpha}\Big) \phi \left(A + w \frac{q_{i^*}}{k}\right)
   + \frac{1}{\alpha} \phi \left(A + r^* \frac{q_{i^*}}{k}\right) \\
&\qquad
   - \frac{k}{\alpha q_{i^*}}
     \Big(\phi \big(A + r^* \frac{q_{i^*}}{k}\big) - \phi(A)\Big)
   - \frac{k}{\alpha q_{i^*-1}}
     \Big(\phi(A) - \phi \big(A - (1-r^*) \frac{q_{i^*-1}}{k}\big)\Big) \\
&= \phi \left(A + r^* \frac{q_{i^*}}{k}\right)
   \Bigg[
      \Big(1-\frac{1}{\alpha}\Big) e^{\alpha(w-r^*)}
      + \frac{1}{\alpha} \\
      & \qquad
      - \frac{k}{\alpha q_{i^*}}\Big(1-e^{-\alpha r^* \frac{q_{i^*}}{k}}\Big)
      - \frac{k}{\alpha q_{i^*-1}}
        \Big(e^{-\alpha r^* \frac{q_{i^*}}{k}}
           - e^{-\alpha\big(r^* \frac{q_{i^*}}{k} + (1-r^*) \frac{q_{i^*-1}}{k}\big)}\Big)
   \Bigg] \\
&\ge \phi \left(A + r^* \frac{q_{i^*}}{k}\right)
   \Bigg[
      1
      - \frac{k}{\alpha q_{i^*}}\Big(1-e^{-\alpha r^* \frac{q_{i^*}}{k}}\Big)
      - \frac{k}{\alpha q_{i^*-1}}
        \Big(e^{-\alpha r^* \frac{q_{i^*}}{k}}
           - e^{-\alpha\big(r^* \frac{q_{i^*}}{k} + (1-r^*) \frac{q_{i^*-1}}{k}\big)}\Big)
   \Bigg] \\
&\ge \phi \left(A + r^* \frac{q_{i^*}}{k}\right)
   \Bigg[
      1
      - \frac{k}{\alpha q_{i^*}}\cdot \alpha r^* \frac{q_{i^*}}{k}
      - \frac{k}{\alpha q_{i^*-1}}\cdot \alpha(1-r^*) \frac{q_{i^*-1}}{k}
   \Bigg] \\
&= \phi \left(A + r^* \frac{q_{i^*}}{k}\right) \big(1 - r^* - (1-r^*)\big)
= 0,
\end{align*}
where the first inequality follows from $w \ge r^*$ (so $e^{\alpha(w - r^*)} \ge 1$), and the second inequality uses the bound $1 - e^{-x} \le x$ for $x \ge 0$. Moreover, we can verify that 
$e^{-\alpha r^* \frac{q_{i^*}}{k}} - e^{-\alpha \left(r^* \frac{q_{i^*}}{k} + (1 - r^*) \frac{q_{i^*-1}}{k}\right)} \le \alpha(1 - r^*) \frac{q_{i^*-1}}{k}$, 
since 
\begin{align*}
    e^{-\alpha r^* \frac{q_{i^*}}{k}} - e^{-\alpha \left(r^* \frac{q_{i^*}}{k} + (1 - r^*) \frac{q_{i^*-1}}{k}\right)} = \int_{\eta = -\alpha\left(r^* \frac{q_{i^*}}{k} + (1 - r^*) \frac{q_{i^*-1}}{k}\right)}^{-\alpha r^* \frac{q_{i^*}}{k}} e^{\eta}  d\eta \le \alpha(1 - r^*) \frac{q_{i^*-1}}{k},
    \end{align*}
where the above inequality holds because the integration interval lies within $\eta \in (-\infty, 0]$, and  $e^{\eta} \le 1$. Thus, $F \ge 0$, which completes the proof of the lemma.

\subsection{Proof of Lemma~\ref{lemma:limit:dual:feasibility:2}}
\label{apx:lemma:limit:dual:feasibility:2}
Let
\begin{align*} 
A=\sum_{\ell=1}^{i^*-1}\frac{q_\ell}{k},
\qquad
w=\phi_{i^*+1}^{-1}(v_t).
\end{align*}
Then, by the definition of the pricing functions in Theorem~\ref{thm:limited:price:change:optimality},
\begin{align*}
v_t
=
\phi \left(
A+\frac{q_{i^*}}{k}
+w\frac{q_{i^*+1}}{k}
\right).
\end{align*}
As in the proof of Lemma~3, subtract the right-hand side of the desired
inequality from the left-hand side and denote the resulting expression by
$F(w)$. Then
\begin{align*}
F(w)
={}&
\phi \left(A+\frac{q_{i^*}}{k}
+w\frac{q_{i^*+1}}{k}\right)
-\frac{k}{\alpha q_{i^*-1}}
\left(
\phi(A)-\phi \left(A-(1-r^*)\frac{q_{i^*-1}}{k}\right)
\right)\\
&-\frac{k}{\alpha q_{i^*}}
\left(
\phi \left(A+r^*\frac{q_{i^*}}{k}\right)-\phi(A)
\right)\\
&-\frac{1}{\alpha}
\left(
\phi \left(A+\frac{q_{i^*}}{k}
+w\frac{q_{i^*+1}}{k}\right)
-\phi \left(A+r^*\frac{q_{i^*}}{k}\right)
\right).
\end{align*}
Equivalently,
\begin{align*}
F(w)
={}&
\left(1-\frac{1}{\alpha}\right)
\phi \left(A+\frac{q_{i^*}}{k}
+w\frac{q_{i^*+1}}{k}\right)
+\frac{1}{\alpha}
\phi \left(A+r^*\frac{q_{i^*}}{k}\right)\\
&-\frac{k}{\alpha q_{i^*-1}}
\left(
\phi(A)-\phi \left(A-(1-r^*)\frac{q_{i^*-1}}{k}\right)
\right)\\
&-\frac{k}{\alpha q_{i^*}}
\left(
\phi \left(A+r^*\frac{q_{i^*}}{k}\right)-\phi(A)
\right).
\end{align*}
Since $\alpha>1$, $\phi$ is nondecreasing, and $q_{i^*+1}>0$, the
first term is nondecreasing in $w$. Therefore,
\begin{align*}
F(w)\ge F(0).
\end{align*}
It remains to show that $F(0)\ge 0$.

Let
\begin{align*}
\beta=\alpha\frac{q_{i^*}}{k},
\qquad
\gamma=\alpha\frac{q_{i^*-1}}{k}.
\end{align*}
Using the exponential form of $\phi$ from Theorem~1, and using
$1-e^{-x}\le x$ for $x\ge 0$, we obtain
\begin{align*}
\frac{F(0)}{\phi(A)}
\ge
\left(1-\frac{1}{\alpha}\right)e^\beta
+\frac{1}{\alpha}e^{r^*\beta}
-(1-r^*)
-\frac{e^{r^*\beta}-1}{\beta}.
\end{align*}
Define
\begin{align*}
h(r)
=
\left(1-\frac{1}{\alpha}\right)e^\beta
+\frac{1}{\alpha}e^{r\beta}
-(1-r)
-\frac{e^{r\beta}-1}{\beta}.
\end{align*}
We show $h(r)\ge 0$ for all $r\in[0,1]$. Its second derivative is
\begin{align*}
h''(r)
=
-\beta\left(1-\frac{\beta}{\alpha}\right)e^{r\beta}
\le 0,
\end{align*}
because $\beta/\alpha=q_{i^*}/k\le 1$. Hence $h$ is concave, so its
minimum over $[0,1]$ is attained at an endpoint. At $r=0$,
\begin{align*}
h(0)
=
\left(1-\frac{1}{\alpha}\right)(e^\beta-1)
\ge 0.
\end{align*}
At $r=1$,
\begin{align*}
h(1)
=
e^\beta-\frac{e^\beta-1}{\beta}
=
\frac{e^\beta(\beta-1)+1}{\beta}
\ge 0,
\end{align*}
where the last inequality follows because the function
$g(\beta)=e^\beta(\beta-1)+1$ satisfies $g(0)=0$ and
$g'(\beta)=\beta e^\beta\ge 0$. Therefore $h(r^*)\ge 0$, and hence
$F(0)\ge 0$. Since $F(w)\ge F(0)$, we conclude that $F(w)\ge 0$,
which proves the lemma.

\section{Revisiting Theorem~\ref{thm:design:cvar:static}: Intuition behind Pricing Design and  the Proof for the Case of \oksrisk with $\Delta = 0$}
\label{apx:static}

In this section, we first give an intuition behind the pricing design in Theorem~\ref{thm:design:cvar:static}. To do so, we first establish a lower bound on the performance of all static pricing mechanisms.
Subsequently, we present a proposition that characterizes the pricing design for an $\alpha$-\crcvar static pricing mechanism, motivated by this lower-bound analysis, and demonstrate how the optimality result established in Theorem~\ref{thm:design:cvar:static} follows from these findings.

\paragraph{Intuition behind Pricing Design in Theorem~\ref{thm:design:cvar:static}.}
The design of the pricing function in Theorem~\ref{thm:design:cvar:static} is inspired by the approach in~\cite{sun2024static,jazi2025posted}, which derives lower bounds on the competitive ratio of all online algorithms by identifying the optimal online algorithm over a class of hard instances.
The behavior of this optimal algorithm on such instances, in turn, motivates the construction of the pricing function for our optimal static algorithm. Thus, following this approach, for some value of $\epsilon \ge 0$, consider the following hard instance~$\mathcal{I}^{(\epsilon)}$ defined as:
\begin{align*}
\mathcal{I}^{(\epsilon)} =
\big\{ &
\underbrace{L, \dots, L}_{k \text{ buyers}}, \,
\underbrace{L + \epsilon, \dots, L + \epsilon}_{k \text{ buyers}},\,
\dots,\, 
\underbrace{L + j \cdot \epsilon, \dots, L + j \cdot \epsilon}_{k \text{ buyers in stage } L + j \cdot \epsilon},
\dots, \,\\
& \qquad
\underbrace{L + \lfloor (U - L)/\epsilon \rfloor \cdot \epsilon, \dots,  L + \lfloor (U - L)/\epsilon \rfloor \cdot \epsilon}_{k \text{ buyers}}
\big\}.
\end{align*}
Let $V^{(\epsilon)} = \{L, L + \epsilon, \dots, L + \lfloor (U - L)/\epsilon \rfloor \cdot \epsilon\}$ denote the set of all possible buyer values appearing in~$\mathcal{I}^{(\epsilon)}$. 
For any $v \in V^{(\epsilon)}$, let $\mathcal{I}_{v}^{(\epsilon)}$ denote the subset of buyers in~$\mathcal{I}^{(\epsilon)}$ consisting of all arrivals up to and including the $k$ buyers with value~$v$. 
We derive the optimal static pricing algorithm on the class of instances $\{\mathcal{I}_{v}^{(\epsilon)}\}_{v \in V^{(\epsilon)}}$. 
Since the instance $\mathcal{I}_{v}^{(\epsilon)}$ is identical to $\mathcal{I}_{v'}^{(\epsilon)}$ up to the arrival of the $k$ buyers with value~$v$, for some values of $v \leq v'$ in~$V^{(\epsilon)}$, the online algorithm cannot distinguish between these instances. 
Therefore, to analyze the performance of an online algorithm on the class $\{\mathcal{I}_{v}^{(\epsilon)}\}_{v \in V^{(\epsilon)}}$, we can equivalently assume that the entire instance~$\mathcal{I}^{(\epsilon)}$ is revealed to the algorithm, although the sequence may stop at any stage. 
In other words, to achieve $\alpha$-\crcvar competitiveness, the algorithm must guarantee an expected welfare of at least $k \cdot v / \alpha$ (i.e., $\frac{1}{\alpha}$ fraction of optimal clairvoyant) by the end of the stage in~$\mathcal{I}^{(\epsilon)}$ where $k$ buyers with value~$v$ arrive. 
This requirement follows because the input sequence may terminate after any stage corresponding to the arrival of $k$ buyers with value~$v \in V^{(\epsilon)}$, thereby yielding an instance from the class $\{\mathcal{I}_{v}^{(\epsilon)}\}_{v \in V^{(\epsilon)}}$.

Let the random variable~$P$ represent the price posted by a static pricing algorithm on the given instance $\mathcal{I}^{(\epsilon)}$ . 
Since the algorithm is static, we have $p_t = P$ for all $t \in \{1,2,\dots,T\}$. 
Define the function $\psi:[L,U] \rightarrow [0,1]$ such that $\psi(x)$ denotes the probability that the algorithm posts a price less than or equal to~$x$, i.e.,
$
\Pr[P \le x] = \psi(x).
$
The following condition must hold for a fully-static pricing algorithm with pricing distribution characterized by~$\psi$ to be $\alpha$-\crcvar on the class of hard instances $\{\mathcal{I}_{v}^{(\epsilon)}\}_{v \in V^{(\epsilon)}}$ :

\begin{proposition}
\label{prop:lb:static}
For the class of hard instances $\{\mathcal{I}_{v}^{(\epsilon)}\}_{v \in V^{(\epsilon)}}$, a fully-static pricing algorithm with price distribution function~$\psi$ must satisfy the following constraints to be $\alpha$-\crcvar competitive:
\begin{align*}
& \psi(L)  \ge 1 - \delta + \delta \cdot \frac{1}{\alpha}, \\
& L  \cdot \frac{1}{\delta} \cdot \min \{ \delta - 1 + \psi(v), \psi(L) \}
+ \int_{\eta = \psi(L)}^{\max\{\psi(L), \delta - 1 + \psi(v)\}}    \frac{1}{\delta} \cdot \, \psi^*(\eta) d\eta
\ge \frac{ v}{\alpha}, \quad \forall v \in V^{(\epsilon)},
\end{align*}
where the function $\psi^*(x) = \sup\{v \in [L,U] | \psi(v) \leq x\} $.
\end{proposition}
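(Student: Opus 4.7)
Fix $v \in V^{(\epsilon)}$ and let $W$ denote the random social welfare produced by the static-pricing algorithm (with price $P$ distributed according to $\psi$) on $\mathcal{I}_v^{(\epsilon)}$. The offline optimum on this instance is clearly $kv$ (serve the $k$ buyers of the last stage), so an $\alpha$-\crcvar guarantee is equivalent to $\cvar[W] \ge kv/\alpha$ for every $v \in V^{(\epsilon)}$. The strategy is to (i) derive the law of $W$ in closed form, (ii) substitute it into the integral-quantile representation of \cvar, and (iii) read off both displayed inequalities from the resulting expression.

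\textbf{Deriving the law of $W$.} Since every stage in $\mathcal{I}_v^{(\epsilon)}$ contains exactly $k$ identical buyers, a realized price $P = p$ with $p \le v$ causes the algorithm to exhaust all $k$ units on the first stage whose valuation is at least $p$, yielding welfare $k \cdot \bar v(p)$, where $\bar v(p) := \min\{u \in V^{(\epsilon)} : u \ge p\}$; if $p > v$, the algorithm sells nothing. Passing to the limit $\epsilon \to 0$ gives $W = kP \cdot \mathbf{1}\{P \le v\}$. Hence $W$ has an atom of mass $1 - \psi(v)$ at $0$ (from $\{P > v\}$), an atom of mass $\psi(L)$ at $kL$ (from the left-endpoint mass of $P$), and a continuous component on $[kL, kv]$ inherited from the law of $kP$ restricted to $(L, v]$.

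\textbf{Computing \cvar\ and matching terms.} From this decomposition, the quantile function $F_W^{-1}(\eta)$ equals $0$ on $[0, 1 - \psi(v)]$, equals $kL$ on $(1 - \psi(v),\, 1 - \psi(v) + \psi(L)]$, and equals $k\,\psi^{*}(\eta - 1 + \psi(v))$ beyond that. Substituting into $\cvar[W] = \frac{1}{\delta} \int_{0}^{\delta} F_W^{-1}(\eta)\, d\eta$ and changing variables $\eta' = \eta - (1 - \psi(v))$ splits the integration into a flat $kL$-region of length $\min\{\psi(L),\, \delta - 1 + \psi(v)\}$ and an upper-tail piece $k \int_{\psi(L)}^{\max\{\psi(L),\, \delta - 1 + \psi(v)\}} \psi^{*}(\eta')\, d\eta'$. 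Dividing by $k$ and demanding the result be at least $v/\alpha$ produces the second displayed inequality for every $v \in V^{(\epsilon)}$. Specializing to $v = L$ makes the upper-tail integral vanish (since $\delta \le 1$ forces $\delta - 1 + \psi(L) \le \psi(L)$), leaving $(L/\delta)(\delta - 1 + \psi(L)) \ge L/\alpha$, which rearranges to $\psi(L) \ge 1 - \delta + \delta/\alpha$: the first displayed inequality.

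\textbf{Main obstacle.} The principal subtlety is bookkeeping for the two disjoint atoms of $W$: depending on whether the cut-off $\delta$ falls below $1 - \psi(v)$, inside the $kL$-plateau, or beyond it, the worst-$\delta$ mass contains either only zero-welfare outcomes (so $\cvar = 0$ and the stated inequality becomes trivially infeasible, as it should), only a slice of the $kL$-plateau, or the plateau plus a portion of the continuous tail. The generalized inverse $\psi^{*}$ introduced in the proposition is precisely the device that unifies these three regimes into a single expression; once one verifies that all three sub-cases agree with the unified formula, the remainder of the argument is direct algebra and the two inequalities drop out.
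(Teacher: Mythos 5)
Your proof is correct and follows essentially the same route as the paper's: compute the \cvar of the static algorithm on $\mathcal{I}_v^{(\epsilon)}$ via the quantile representation of the welfare distribution and require it to be at least $kv/\alpha$ for every $v$, obtaining the first inequality by specializing to $v=L$. You in fact supply more detail than the paper, which simply asserts the \cvar formula as following ``by the definition of $\psi$''; your explicit derivation of the law of $W$ (the atom at $0$, the atom at $kL$, and the continuous tail) and the case analysis over where $\delta$ falls relative to these pieces is an elaboration of the same argument rather than a different one.
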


The proof of the above proposition can be found in Appendix~\ref{apx:prop:lb:static}. 
In the above, the first inequality follows from the fact that, with probability at least $1 - \delta + \delta \cdot \tfrac{1}{\alpha}$, any online algorithm must post the price~$L$.
Otherwise, the \cvar performance of the online algorithm on instance~$I^{(\epsilon)}_{L}$ would fall below $\tfrac{k \cdot L}{\alpha}$, which is the benchmark value required for the algorithm to be $\alpha$-\crcvar.
Furthermore, the left-hand side of the second inequality represents the \cvar objective value of an online algorithm up to the end of the stage in which $k$ buyers with value~$v$ arrive, where the algorithm’s randomness in the posted price is captured by the function~$\psi$.

Without loss of generality, as $\epsilon \rightarrow 0$, we can assume that the function~$\psi$, corresponding to the optimal online algorithm on the class of hard instances $\{\mathcal{I}_{v}^{(\epsilon)}\}_{v \in V^{(\epsilon)}}$, is a continuous strictly increasing function. Obtaining the function~$\psi$ that satisfies the above inequality for all $v \in [L, U]$ with the smallest possible value of $\alpha$ 
naturally motivates a corresponding design for the pricing function~$\phi$.
By setting $\phi = \psi^{-1}$ and enforcing the inequality to hold with equality, 
we obtain the following design of~$\phi$ that achieves $\alpha$-\crcvar.

\begin{proposition}[\textsc{Risk-Sensitive Static Pricing}]
\label{prop:cvar:static:design:phi}
Consider the \oksrisk\ problem with $\delta \in [0,1]$ and $\Delta = 0$. 
The \crcvar\ of \algname\ is $\alphaFSP$ if (i) $\alphaFSP \ge 1 $ is a solution to following equation
\begin{align}
\label{eq:boundary_alpha_static}
 \alphaFSP = \frac{U\delta}{\displaystyle \int_{0}^{\,\delta } \phi(\eta)\, d\eta},
\end{align}
and (ii) the pricing function $\phi$ is designed as
\begin{align*}
\phi(x) =
\begin{cases}
L, & \text{for } x \in \bigl[0,\, 1 - \delta + \delta \cdot 1/\alphaFSP] , \\[6pt]
\displaystyle \frac{\alphaFSP}{\delta} \int_{0}^{\,\delta - 1 + x} \phi(\eta)\, d\eta, 
& \text{for } x \in \bigl[1 - \delta + \delta \cdot 1/\alphaFSP ,\, 1\bigr].
\end{cases}
\end{align*}
\end{proposition}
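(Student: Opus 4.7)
The plan is to adapt the randomized online primal-dual (R-OPD) framework used in the proof of Theorem~\ref{thm:limited:price:change:optimality} to the risk-sensitive regime, so that dual updates are restricted to the worst $\delta$-fraction of sample paths defining \cvar. Fix an instance $I$ of \oksrisk\ with $\Delta = 0$. Since \algname\ posts the single price $\phi(R)$ to every buyer and $\phi$ is nondecreasing on $[0,1]$, the realized welfare $\alg^{(r)}(I)$ is nonincreasing in $r$ (higher seeds produce higher prices and therefore weakly smaller accepting-buyer sets). Consequently the quantile function of $\alg(I,P)$ is $\eta \mapsto \alg^{(1-\eta)}(I)$, so the tail used by \cvar\ corresponds to $R \in [1-\delta,\,1]$, giving $\cvar[\alg(I,P)] = \tfrac{1}{\delta}\int_{1-\delta}^{1} \alg^{(r)}(I)\, dr$.

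I would next work with the $\Delta=0$ specialization of the dual LP in Eq.~\eqref{eq:dual:oks:expected}, which has one capacity dual $\lambda_1$ and buyer utilities $\{u_t\}$, and define per-realization duals $\{\lambda_1^{(r)}, u_t^{(r)}\}$ only for $r \in [1-\delta, 1]$, mirroring Eqs.~\eqref{eq:limit:dual:update:lambda}--\eqref{eq:limit:dual:update:u}: set $\lambda_1^{(r)} = \phi(r)$ when all $k$ units are sold under realization $r$ and $\lambda_1^{(r)} = 0$ otherwise; for a buyer $t$ who receives an item, set $u_t^{(r)} = v_t - \phi(r)$ when the $k$ units are exhausted and $u_t^{(r)} = v_t$ otherwise. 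The final duals are obtained by integrating these over the tail with weight $1/\delta$, yielding $\sum_t u_t + k\lambda_1 = \tfrac{1}{\delta}\int_{1-\delta}^{1}\bigl(\sum_t u_t^{(r)} + k\lambda_1^{(r)}\bigr)\,dr$.

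The verification then splits into two steps. For (i), the same pointwise identity $\sum_t u_t^{(r)} + k\lambda_1^{(r)} = \alg^{(r)}(I)$ used in the proof of Theorem~\ref{thm:limited:price:change:optimality} holds on each tail realization, and integrating over $r \in [1-\delta,1]$ yields exactly $\cvar[\alg(I,P)]$. For (ii), the dual feasibility condition for a generic buyer $t$ with valuation $v_t$ demands $\tfrac{1}{\delta}\int_{1-\delta}^{1}\bigl(u_t^{(r)} + \lambda_1^{(r)}\bigr)\,dr \ge v_t/\alphaFSP$. Decomposing this integral according to whether $v_t \ge \phi(r)$ and combining the flat and recursive regimes of $\phi$, a change of variables on the non-flat part reduces the check to the integral identity $\phi(x) = \tfrac{\alphaFSP}{\delta}\int_0^{\delta-1+x}\phi(\eta)\,d\eta$ evaluated at $x = \phi^{-1}(v_t)$, which is precisely the recursive definition given in the proposition, plus the constant $L$-mass contributed by the flat segment. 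Weak duality applied to the upper-bounding LP then produces $\opt(I) \le \alphaFSP\cdot\cvar[\alg(I,P)]$ for every instance $I$.

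The main obstacle is the case analysis around the transition point $x^{\ast} := 1-\delta+\delta/\alphaFSP$ where $\phi$ switches from flat to recursive. For buyers with $v_t = L$, feasibility must be verified against the tail mass $\delta/\alphaFSP$ placed on the price $L$ within the tail window $[1-\delta,1]$; for buyers with $v_t = \phi(x)$ and $x \in (x^\ast, 1]$, feasibility follows from the integral identity above; at the endpoints, one must check continuity at $x = x^\ast$ (where $\phi(x^\ast) = L$) and the boundary value at $x = 1$, which recovers $\phi(1) = U$ and thereby Eq.~\eqref{eq:boundary_alpha_static} as the equation pinning down $\alphaFSP$. Putting (i) and (ii) together with weak duality then yields the claimed $\alphaFSP$-\crcvar\ guarantee for \algname.
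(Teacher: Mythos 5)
The central structural claim in your first paragraph---that $\alg^{(r)}(I)$ is nonincreasing in $r$, so that the \cvar tail is $R\in[1-\delta,1]$ and $\cvar[\alg(I,P)]=\tfrac{1}{\delta}\int_{1-\delta}^{1}\alg^{(r)}(I)\,dr$---is false once the capacity constraint binds, and the rest of the argument is built on it. What is monotone in $r$ is the \emph{utilization} $y^{(r)}$ (Lemma~\ref{lem:monotonicity:1}), not the welfare: a lower seed posts a lower price, so low-value early arrivals can exhaust the $k$ units before high-value buyers appear, whereas a higher seed rejects them and serves the high-value buyers instead. With $k=1$ and the arrival sequence $(L,U)$, the welfare is $L$ for small $r$ and $U$ for larger $r$, so welfare \emph{increases} in $r$ on part of the range. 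Consequently $\tfrac{1}{\delta}\int_{1-\delta}^{1}\alg^{(r)}(I)\,dr$ can strictly exceed $\cvar[\alg(I,P)]$, your step (i) does not equate the dual objective with the \cvar, and weak duality against that quantity does not deliver the claimed $\alphaFSP$-\crcvar guarantee.

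This is exactly why the paper's proof (a) isolates the case $y^*<k$, where capacity never binds, welfare genuinely is nonincreasing in $r$, and a direct computation like yours works; and (b) in the case $y^*=k$ defines the tail set $\mathcal{S}$ through the welfare CDF as in Eq.~\eqref{eq:S-worst-delta} and performs the $1/\delta$-scaled dual updates only for $r\in\mathcal{S}$. The feasibility check must then be carried out against the \emph{worst-case placement} of $\mathcal{S}$: since $\mathcal{S}$ has measure $\delta$, its intersection with $[0,r^*]$ has measure at least $r^*-(1-\delta)$, and because $\phi$ is increasing the adversarial placement is $[0,\,r^*-(1-\delta)]$. That is precisely where the delayed integral $\tfrac{\alphaFSP}{\delta}\int_{0}^{\delta-1+x}\phi(\eta)\,d\eta$ in the design of $\phi$ is consumed; your change of variables lands on the same identity only because you substituted the wrong tail. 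To repair the proof, replace $[1-\delta,1]$ by $\mathcal{S}$ throughout, keep the $1/\delta$ scaling in the per-realization duals, and redo the feasibility cases ($w\le r^*$ and $w>r^*$) against an arbitrary measure-$\delta$ subset, as the paper does.
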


It is worth noting that differentiating both sides of the equality given in above proposition for the design of $\phi$ function
yields the delay differential equation $
    \phi'(x) = \frac{\alphaFSP}{\delta} \cdot \phi(\delta - 1 + x),$ 
with the initial condition $\phi(x) = L$ for all $x \in [0,\, 1 - \delta + \delta \cdot \tfrac{1}{\alphaFSP}] $.

\paragraph{Putting Everything Together.} 
By employing the \emph{method of steps}~\cite{DDE} and Taylor expansion, we can derive the closed-form design of the pricing function~$\phi$ presented in Theorem~\ref{thm:design:cvar:static}, based on the system of delay differential equations described in the proposition above.
Furthermore, we can show that the design established in Theorem~\ref{thm:design:cvar:static} achieves the \emph{optimal} \crcvar by identifying the smallest value of~$\alpha$ for which a feasible auxiliary function~$\psi$ exists, subject to the constraints in Proposition~\ref{prop:lb:static}.

In what follows, we first give the proof of Theorem~\ref{thm:design:cvar:static} based on Proposition~\ref{prop:cvar:static:design:phi}. Then, we will prove  
Proposition~\ref{prop:cvar:static:design:phi} in Section~\ref{apx:proof:main:proposition:static}
using the risk-sensitive R-OPD framework.

\subsection{Proof of Theorem~\ref{thm:design:cvar:static}}

\label{apx:thm:design:cvar:static}
We first derive the closed-form expression of the pricing function~$\phi$ using the design specified in Proposition~\ref{prop:cvar:static:design:phi} for an arbitrary value of~$\alpha$, and in particular for the smallest possible value according to that design, which is~$\alphaFSP$. 
We then apply Proposition~\ref{prop:lb:static} to establish that~$\alphaFSP$ serves as a lower bound on the performance of any fully-static pricing scheme.

\paragraph{Closed-form Design of~$\phi$ for the Upper-Bound~$\alphaFSP$.} 
We begin by defining the parameters $\tau = 1-\delta$ and $c = \frac{\alpha}{\delta}$. 
Consider the function $\phi:[0,1]\to\mathbb{R}_{\ge0}$, defined implicitly based on Proposition~\ref{prop:cvar:static:design:phi} as follows:
\begin{align*}
    \phi(x) =
\begin{cases}
L, & x\in[0,b],\\[4pt]
\displaystyle \frac{\alpha}{\delta}\int_{0}^{x-\tau}\phi(\eta) d\eta, & x\in[b,1],
\end{cases}
\end{align*}
where the breakpoint is
\begin{align*}
    b = 1-\delta+\frac{\delta}{\alpha}.
\end{align*}
The definition is continuous at $x=b$ because
\begin{align*}
    \frac{\alpha}{\delta}\int_{0}^{b-\tau}\phi(\eta)d\eta
    =
    \frac{\alpha}{\delta}\int_{0}^{\delta/\alpha}L\,d\eta
    =
    L.
\end{align*}

For $x\in[b,1]$, the function $\phi$ satisfies the following delay differential equation (DDE):
\begin{align}
    \phi'(x) = \frac{\alpha}{\delta} \phi(x-\tau) = c \phi(x-\tau).
\label{eq_dde_static}
\end{align}
This is a linear DDE with constant delay $\tau$ and constant history $\phi(x)=L$ for all $x\le b$.
We use the method of steps to derive an explicit closed-form expression for $\phi(x)$. Define the delay exponential function as
\begin{align*}
E_c(t) = 1 + \sum_{j=1}^{\lfloor t/\tau\rfloor+1} \frac{c^{j}}{j!} \big(t-(j-1)\tau\big)^{j}, \qquad (t \ge 0).
\end{align*}
Equivalently, the same expression can be written as
\begin{align*}
E_c(t) = 1 + \sum_{j\ge 1} \frac{c^{j}}{j!} \big(t-(j-1)\tau\big)_{+}^{j}, \qquad (t \ge 0),
\end{align*}
where $(u)_+=\max\{u,0\}$. Given the constant history $\phi=L$ on $(-\infty,b]$, the unique solution of Eq.~\eqref{eq_dde_static} on $[b,1]$ is
\begin{align*}
\phi(x) = \phi(b) E_c(x-b) = L E_c(x-b).
\end{align*}
Expanding $E_c$ yields a finite, piecewise-polynomial form:
\begin{align*}
\phi(x)=
\begin{cases}
L, & 0 \le x \le b,\\[6pt]
\displaystyle L\left[ 1
+ \sum_{j=1}^{\left\lfloor \frac{x-b}{\tau}\right\rfloor+1}
\frac{c^{j}}{j!}\big(x-b-(j-1)\tau\big)^{j}
\right], & b \le x \le 1.
\end{cases}
\end{align*}
Equivalently, substituting $b=1-\delta+\frac{\delta}{\alpha}$, $\tau=1-\delta$, and $c=\frac{\alpha}{\delta}$, we obtain
\begin{align*}
\phi(x)=
\begin{cases}
L, & 0 \le x \le 1-\delta+\frac{\delta}{\alpha},\\[6pt]
\displaystyle L\left[ 1
+ \sum_{j=1}^{N_\alpha(x)}
\frac{\left(\frac{\alpha}{\delta}\right)^{j}}{j!}
\left(x-1+\delta-\frac{\delta}{\alpha}-(j-1)(1-\delta)\right)^{j}
\right], & 1-\delta+\frac{\delta}{\alpha} \le x \le 1,
\end{cases}
\end{align*}
where
\begin{align*}
N_\alpha(x)
=
\left\lfloor
\frac{x-1+\delta-\delta/\alpha}{1-\delta}
\right\rfloor+1.
\end{align*}

To determine the optimal value of $\alphaFSP$ for which a feasible design of the pricing function~$\phi$ exists according to the construction in Proposition~\ref{prop:cvar:static:design:phi}, we impose the boundary condition $\phi(1)\ge U$. 
Hence, $\alphaFSP$ is defined as the smallest value of~$\alpha$ satisfying this condition. 
By substituting $x=1$ into the expression for~$\phi(x)$, and noting that the right-hand side of the equation is monotonically increasing in~$\alpha$, we obtain that $\alphaFSP$ is the smallest solution to the following equation:
\begin{align*}
\phi(1)
= L\left[1+\sum_{j=1}^{\left\lfloor \frac{\delta(1-1/\alpha)}{1-\delta}\right\rfloor+1}
\frac{\left(\frac{\alpha}{\delta}\right)^{j}}{j!} 
\left(\delta\left(1-\frac{1}{\alpha}\right)-(j-1)(1-\delta)\right)^{j}\right] = U.
\end{align*}

\paragraph{Lower Bound on $\alphaFSP$.}
Based on Proposition~\ref{prop:lb:static}, any fully-static pricing scheme with CVaR$_\delta$ competitive ratio at most~$\alpha$ must satisfy the corresponding static lower-bound constraints. 
Without loss of generality, assume that the price quantile function~$\psi$ is nondecreasing and that $\psi(1)=U$.

The constraints from Proposition~\ref{prop:lb:static} imply that, for every 
$x\in\left[1-\delta+\frac{\delta}{\alpha},1\right]$,
\begin{align}
\label{eq:random:lb:1}
\psi(x)
\le
\frac{\alpha}{\delta}\int_{0}^{x-\tau}\psi(\eta)d\eta.
\end{align}
Moreover, since all valuations lie in $[L,U]$, we have $\psi(x)\ge L$ for all $x\in[0,1]$.

Let $\phi_\alpha$ be the solution of the equality version of Eq.~\eqref{eq:random:lb:1} with constant history $L$, namely
\begin{align*}
\phi_\alpha(x) =
\begin{cases}
L, & x\in\left[0,1-\delta+\frac{\delta}{\alpha}\right],\\[4pt]
\displaystyle \frac{\alpha}{\delta}\int_{0}^{x-\tau}\phi_\alpha(\eta)d\eta, 
& x\in\left[1-\delta+\frac{\delta}{\alpha},1\right].
\end{cases}
\end{align*}
This function is exactly the pricing function characterized above.

We now show that any feasible $\psi$ is pointwise upper-bounded by $\phi_\alpha$. 
For $x\le b$, this holds because $\phi_\alpha(x)=L$ and $\psi(x)\ge L$ only determines the common lower history in the tight construction. 
For $x\ge b$, the inequality in Eq.~\eqref{eq:random:lb:1} and the equality defining $\phi_\alpha$ imply, by the standard comparison argument for this monotone integral equation, that the maximal feasible function is obtained when all inequalities are tight and the history is set to its minimum value $L$. 
Thus,
\begin{align*}
    \psi(x) \le \phi_\alpha(x), \qquad \forall x\in[0,1].
\end{align*}
In particular,
\begin{align*}
    U = \psi(1) \le \phi_\alpha(1).
\end{align*}
Therefore, a fully-static pricing scheme with competitive ratio~$\alpha$ can exist only if
\begin{align*}
    \phi_\alpha(1)\ge U.
\end{align*}
By the definition of~$\alphaFSP$ as the smallest value of~$\alpha$ satisfying this condition, we must have
\begin{align*}
    \alpha \ge \alphaFSP.
\end{align*}

Constructing~$\phi$ with $\alpha=\alphaFSP$ gives a fully-static cPPM-$\phi$ whose pricing function satisfies $\phi(1)=U$ and whose CVaR$_\delta$ competitive ratio is~$\alphaFSP$. 
The lower-bound argument above shows that no fully-static pricing scheme can achieve a smaller ratio. 
Thus, $\alphaFSP$ is the optimal CVaR$_\delta$ competitive ratio among all fully-static pricing schemes.

\subsection{Proof of Proposition~\ref{prop:cvar:static:design:phi}: A Risk-Sensitive R-OPD Approach}
\label{apx:proof:main:proposition:static}

The proof uses a modified version of the R-OPD method. The central idea of R-OPD for problems that aim to maximize the expected reward is to specify, for each sample path of the randomized algorithm, an update rule for the dual variables of the LP as a function of the algorithm’s decisions along that path. These path-wise dual updates are chosen so that the dual objective accumulated on each realized path is equal to the social welfare achieved by the online algorithm on that same path.

\paragraph{Key Idea of Risk-Sensitive R-OPD.}
In the \oksrisk problem, where the objective is to maximize the \cvar of the algorithm’s social welfare, the standard R-OPD update scheme is modified. 
Specifically, dual updates are performed only along the worst $\delta$-fraction of sample paths—those realizations of the algorithm’s randomization for which the objective value is minimized and smaller than that of all other sample paths. 
Each sample path of Algorithm~\ref{alg:corr:PPM:limited:price:change} corresponds to a particular realization of the random seed~$R$. 
Once $R$ is realized as $R = r$, the algorithm’s performance $\alg^{(r)}$ becomes deterministic. 
Accordingly, to adapt the R-OPD approach for establishing the \crcvar-competitiveness of Algorithm~\ref{alg:corr:PPM:limited:price:change}, we perform dual updates only for those realizations $R = r$ that belong to the set~$\mathcal{S}$. 
When the function~$F_{\alg(I)}$, corresponding to the CDF of the algorithm’s social welfare on an input instance $I$ of \oksrisk, is continuous, this set is defined as:
\begin{align*}
    \mathcal{S} = \big\{\, r \in [0,1] \;\big|\; F_{\alg(I)}(\alg^{(r)}) \le \delta \,\big\},
\end{align*}
where $\delta \in (0,1)$ is the risk level associated with the \cvar metric. 
However, if this function is not continuous, we must carefully handle potential discontinuities of the CDF around the tail probability~$\delta$.  
In that case, we define  
\begin{align}
\label{eq:S-worst-delta}
    \mathcal{S} = \big\{\, r \in [0,\tilde{r}] \;\big|\; F_{\alg(I)}\left((\alg^{(r)})^{-}\right) \le \delta \,\big\},
\end{align}
where $\tilde{r}$ is given by
\begin{align*}
    \displaystyle \tilde{r} = \inf_{r \in [0,1]} 
    \left\{
        \int_{x=0}^{r} 
        \mathbf{1}_{\big\{\,F_{\alg(I)}\left((\alg^{(x)})^{-}\right) \le \delta\,\big\}}\, dx 
        \ge \delta
    \right\}.
\end{align*}
The quantity $\tilde r$ is the smallest cutoff in the random-seed interval $[0,1]$ such that the total measure of seed values in $[0,\tilde r]$ that produce the worst $\delta$-fraction of welfare outcomes is exactly $\delta$.

We now proceed with the formal proof of Proposition~\ref{prop:cvar:static:design:phi}.

\begin{proof}
We define $y^* $ and $r^*$ consistent with the notation used in the proof of Theorem~\ref{thm:limited:price:change:optimality}. The monotonicity condition established in Lemma~\ref{lem:monotonicity:1} holds for Algorithm~\ref{alg:corr:PPM:limited:price:change}, following the design specified in Proposition~\ref{prop:cvar:static:design:phi}. Furthermore, let $\phi^*:[L,U]\rightarrow [0,1]$ be defined as the generalized inverse of the function $\phi$, where $\phi^*(v) = \sup\{x\in[L,U] | \phi(x) \leq v\}$.
Based on the value of $y^*$, let us consider the following cases:

\textbf{Case I}: $y^* < k$. 
In this case, the total number of buyers in instance~$I$ must be exactly equal to~$k^*$. 
Otherwise, when the random seed~$R$ is realized as $R = 1 - \delta + \delta / \alphaFSP$, 
the posted price becomes~$L$. By Assumption~\ref{ass:bounded-value}, all buyers in instance~$I$ 
have values at least equal to~$L$ and therefore will accept the posted price. 
Consequently, under the realization $R = 1 - \delta + \delta / \alphaFSP$, 
more than~$k^*$ units of the resource would be allocated, 
which contradicts the definition of~$y^*$. 
Hence, there must exist exactly~$k^*$ buyers in instance~$I$. 

Furthermore, for any buyer~$t$ with value~$v_t$, a unit of the item is allocated whenever 
$R \in [0, \phi^*(v_t)]$, since in this range the posted price remains below~$v_t$ 
and the total number of units sold never reaches~$k$ for any realized value of~$R$ as there are less than $k$ buyers in instance $I$. 
Therefore, the worst $\delta$-fraction of realizations of~$R$—those that minimize the objective 
value of Algorithm~\ref{alg:corr:PPM:limited:price:change},correspond to the range~$[1 - \delta, 1]$. 
Thus, we have:
\begin{align*}
    \cvar[\alg(I,P)] & = \sum_{i= 1}^{k^*} v_i \cdot \left(\phi^*(v_i) - (1-\delta) \right)  \ge \sum_{i=1}^{k^*} \int_{\eta = 0 }^{\phi^*(v_i) - 1 + \delta} \phi(\eta) =  \sum_{i=1}^{k^*} \frac{v_i}{\alphaFSP},
\end{align*}
where in above the second equality follows from the design of $\phi$ function given in Proposition~\ref{prop:cvar:static:design:phi}. On the other hand, since the total number of buyer in instance $I$ is equal to $k^* < k$,  $\opt(I) = \sum_{i=1}^{k^*} v_i$, the $\alphaFSP$-\crcvar of Algorithm~\ref{alg:corr:PPM:limited:price:change} is established.

\textbf{Case II}: $y^* = k$. The following LP provides an upper bound on the offline optimal value for the \oksrisk problem with $\Delta = 0$:
\begin{align*}
\min_{{u_t}, \lambda} \quad  \sum_{t \in [T]} u_t + k \cdot \lambda 
\qquad
\text{s.t.}\quad  v_t \le u_t + \lambda, \quad \forall t \in [T].
\end{align*}

To establish the $\alphaFSP$-competitiveness of Algorithm~\ref{alg:corr:PPM:limited:price:change}, we employ the modified R-OPD framework where the dual update rules for the dual variables are performed for the values of the random seed belonging to the set $\mathcal{S}$, ensuring that the dual constraints hold in expectation.

Following this framework, we construct, for each realization of the random seed $R = r$, a corresponding set of dual variables $\{ \lambda^{(r)}, u_t^{(r)}\}$.
The final dual variables are then defined as their expectations over the random seed, where $\lambda = \mathbb{E}_R[\lambda^{(R)}],
\,
u_t = \mathbb{E}_R[u_t^{(R)}],
$
where the expectation is taken with respect to the random seed~$R$. We begin by initializing all dual variables to zero.
For each realized value of the random seed~$R$, we perform the following dual updates.

\paragraph{Dual Update Rules.} 
Suppose the random seed is realized as $R = r$.
If $r \in \mathcal{S}$, we proceed with the dual update; otherwise, we skip it.
The variable~$\lambda^{(r)}$ is updated as follows:
\begin{align}
\label{eq:static:cvar:dual:update:lambda}
\lambda^{(r)} =
\begin{cases}
\frac{\phi(r)}{\delta}, & \text{if } r \leq r^*, r \in \mathcal{S} \\
0, & \text{otherwise.}
\end{cases}
\end{align}
Next, for each buyer $t$, we update the dual variable $u_t^{(r)}$ as follows.
If buyer $t$ receives an allocation of one unit then, we have
\begin{align}
\label{eq:static:cvar:dual:update:u}
u_t^{(r)} =
\begin{cases}
\frac{1}{\delta} \cdot \left (v_t - \phi(r)\right), & \text{if } r \le r^*, \; r \in \mathcal{S} \\
\frac{v_t}{\delta}, & \text{otherwise.}
\end{cases}
\end{align}

Under these updates, and noting that the total number of sold units equals~$k$ for all $r \in [0, r^*]$, the total dual objective for any $r  \in \mathcal{S}$, equals the algorithm’s objective under that realization: $
 \big( \sum_{t \in [T]} u_t^{(r)} + k \cdot \lambda^{(r)} \big)  = \frac{1}{\delta} \cdot \alg^{(r)}(I).
$
Also, since dual updates occur only for realizations of~$R$ within intervals in~$\mathcal{S}$, we have $
\mathbb{E}_{R} [ \sum_{t \in [T]} u_t^{(R)} + k \cdot \lambda^{(R)} ] = \cvar[\alg^{(R)}(I)]
$.

\paragraph{Ensuring $\alphaFSP$-Feasibility of Dual Constraint in Expectation.} 
We prove that the dual constraint in the dual LP corresponding to each buyer $t \in [T]$ is $\alphaFSP$-feasible in expectation; that is,
$\mathbb{E}_{R}[u_{t}^{(R)} + \lambda^{(R)}] \ge \frac{v_t}{\alphaFSP}.
$ 
Combining this $\alphaFSP$-feasibility with the fact that the expected dual objective equals the \cvar performance of Algorithm~\ref{alg:corr:PPM:limited:price:change},
$\mathbb{E}_{R} [ \sum_{t \in [T]} u_t^{(R)} + k \cdot \lambda^{(R)} ] = \cvar[\alg^{(R)}(I)],
$
it follows by weak duality that the algorithm achieves the $\alphaFSP$-\crcvar guarantee.

Let us consider the following two subcases to prove the $\alphaFSP$-feasibility of the dual 
constraints. Let $
    w = \inf \{\, r \in [0,1] \mid \phi(r) \ge v_t \,\}.$
Following from Eq.~\eqref{eq:boundary_alpha_static}, which enforces $\phi(1) = U$, such a value 
of $w$ always exists. Depending on the value of $w$, we analyze two cases.  
First, consider the case where $w \le r^*$:
\begin{align*}
    \mathbb{E}_{R} \left[  u_t^{(R)} + k \cdot \lambda^{(R)} \right] \ge  \frac{1}{\delta} \cdot  \int_{\eta = 0}^{r^* - (1 - \delta)} \phi(\eta) d\eta  = \frac{\phi(r^*)}{\alphaFSP} \ge \frac{v_t}{\alphaFSP}.
\end{align*}
The first inequality above holds since there must exist a subrange of size $r^* - (1-\delta)$ within the range $[0,r^*]$ inside the set $\mathcal{S}$, as the tail probability is set to be equal to $\delta$. Furthermore, based on the dual updates above and the fact that the $\phi$ function is increasing, in the worst case this subrange corresponds to the interval $[0,r^* - (1-\delta)]$. 
Then, in this case, since by the design of the $\phi$ function we have $w \ge 1 - \delta + \delta \cdot (\frac{1}{\alphaFSP})$, and $r^* \ge w$, the inequality follows. 
Thus, based on the dual updates defined in Eq.~\eqref{eq:static:cvar:dual:update:lambda}, the first inequality holds. The last equality follows from the design of the $\phi$ function.

For the second case, consider $w > r^*$. A unit is allocated to buyer $t$ for all realizations $R \in [r^*, w]$, because the posted price is below $v_t$ on this range and the utilization satisfies $y^{(R)} < y^{(r^*)} = k$ based on Lemma~\ref{lem:monotonicity:1} and the definition of $r^*$. Therefore, by the dual updates in Eqs.~\eqref{eq:static:cvar:dual:update:lambda}-\eqref{eq:static:cvar:dual:update:u}, we have:
\begin{align*}
 u_t^{(r)} + k \cdot \lambda^{(r)} \ge 
\begin{cases}
\frac{\phi(r)}{\delta}, & \text{for all } r \in [0, r^*],\, r \in \mathcal{S}, \\[4pt]
\frac{v_t}{\delta} \ge \frac{\phi(r)}{\delta}, & \text{for all } r \in [r^*, w],\, r \in \mathcal{S}.
\end{cases}
\end{align*}
Thus, in the worst case, we have:
\begin{align*}
\mathbb{E}_{R} \left[ u_t^{(R)} + k \cdot \lambda^{(R)} \right]
\ge \frac{1}{\delta} \left( \int_{\eta = 0}^{w - 1 + \delta} \phi(\eta) d\eta \right)
= \frac{\phi(w)}{\alphaFSP},
\end{align*}
where the final equality follows from the design of the function $\phi$ in Proposition~\ref{prop:cvar:static:design:phi}.
This concludes the proof of $\alphaFSP$-feasibility of the dual constraints for each buyer $t$. 
\end{proof}

\subsection{Proof of Proposition~\ref{prop:lb:static}}
\label{apx:prop:lb:static}

Let \alg be an online algorithm whose distribution over static posted prices on the class of
hard instances is characterized by the function~$\psi$.  Consider an input instance
$I^{(\epsilon)}_{v}$ for some $v \in V^{(\epsilon)}$.  
By the definition of~$\psi$, the \cvar performance of the algorithm on this instance is
\begin{align*}
    \cvar[\alg(I^{(\epsilon)}_{v})]
    &=
    L \cdot k \cdot \frac{1}{\delta} 
    \cdot \min \{  \delta - 1 + \psi(v),\; \psi(L)  \}
    \\
    &\qquad
    + \int_{\eta = \psi(L)}^{\max\{ \psi(L),\; \delta - 1 + \psi(v) \}}
        k \cdot \frac{1}{\delta} \cdot \psi^*(\eta)  d\eta ,
\end{align*}
where $\psi^*(x) = \sup\{  u \in [L,U] \mid \psi(u) \le x  \}$.

On the other hand, for the algorithm to be $\alpha$-competitive on the instance
$I^{(\epsilon)}_{v}$, it must satisfy  
$
\cvar[\alg(I^{(\epsilon)}_{v})] \;\ge\; \frac{k \cdot v}{\alpha}.
$
Therefore, enforcing this inequality for all $v \in V^{(\epsilon)}$ yields precisely the set of
constraints stated in Proposition~\ref{prop:lb:static}.

\section{Revisiting Theorem~\ref{thm:general:cvar:design:phi}: Intuition Behind Pricing Design and Proof of \oksrisk for the General Case}
\label{sec:general}

In this section, we revisit the pricing design for~\algname given in Theorem~\ref{thm:general:cvar:design:phi}, where the price-change cap~$\Delta$ can take any value within the range $\{1, 2, \dots, k-1\}$.
So far, we have analyzed the case of $\delta = 1$ and derived the pricing function design for~\algname that achieves the optimal \crcvar for any number of price changes in the risk-neutral setting.
We then examined two extreme cases: in the first, where $\Delta = 0$, we derived the optimal risk-sensitive static pricing function design for all $\delta \in (0,1)$; and in the second, where $\Delta = k-1$, we studied the fully dynamic setting and obtained an algorithm that achieves exact optimality in large-inventory regimes. Building on the insights from these special cases, Theorem~\ref{thm:general:cvar:design:phi} extends our framework to design the set of pricing functions~$\{\phi_i\}_{i \in [\Delta+1]}$ for the general case where $\Delta \ge 1$.
The construction of pricing functions at each level follows a system of delay differential equations, where each function includes a delayed term that depends on the tail probability~$\delta$ and the pricing behavior of the preceding price levels.
In what follows, we present the intuition behind the design of these pricing functions.

\paragraph{Intuition of Theorem~\ref{thm:general:cvar:design:phi}.}
Let $r^*$, $i^*$, and $y^*$ denote the instance-dependent parameters that characterize the performance of~\algname on a given instance~$I$, as defined in the proof of Theorem~\ref{thm:limited:price:change:optimality}.
Following the pricing design in Theorem~\ref{thm:general:cvar:design:phi}, we can establish both the monotonicity property from Lemma~\ref{lem:monotonicity:1} and a lower bound on the number of sold units similar to Lemma~\ref{lem:lower-bound:yr}.
Under these properties, the algorithm fully allocates all units up to the $i^*$-th price level for all realizations of the random seed within the interval $[0, r^*]$.
Additionally, using the lower bound from Lemma~\ref{lem:lower-bound:yr}, we know that for any realization of the random seed~$R$, the algorithm must sell all reserved units up to the $(i^*-1)$-th price level. These units are allocated at the corresponding prices $\{\phi_i(R)\}_{i \in [i^*]}$.
These two properties together allow us to construct a nontrivial lower bound on the algorithm’s revenue across all sample paths.
Because the performance of \algname is evaluated using the \cvar metric, we only focus on the worst $\delta$-fraction of sample paths.
For realizations of the random seed greater than $r^*$, the algorithm sells no more than the total number of reserved units up to the $i^*$-th price level.
Consequently, at most $\sum_{i=1}^{i^*} q_i$ buyers can have value at least $\phi_{i^*}(r)$. This observation allows us to upper-bound the revenue of the offline optimal benchmark on instance~$I$.
Combining these insights yields a system of delay differential equations, parameterized by~$\delta$, which govern the structure of the pricing functions.
The complete proof of Theorem~\ref{thm:general:cvar:design:phi} appears in Appendix~\ref{apx:proof:general:thm}.
The proof applies a modified version of the randomized online primal–dual (R-OPD) framework, using the dual linear program in Eq.~\eqref{eq:dual:oks:expected}.
Crucially, the dual updates are applied only to the worst $\delta$-fraction of sample paths—those realizations in which the algorithm’s objective value is minimized relative to all others.

\paragraph{Proof Overview of Theorem~\ref{thm:general:cvar:design:phi}.}
The proof generally follows the risk-sensitive R-OPD framework used in the proof of
Proposition~\ref{prop:cvar:static:design:phi}, adapted to the CVaR objective. We first utilize the dual LP established in the proof of Theorem~\ref{thm:limited:price:change:optimality} to upper-bound $\opt(I)$. 
Within the risk-sensitive R-OPD, we then define update rules for the dual variables only on the $\delta$-worst fraction of outcomes of
the algorithm: let $\mathcal{S}$ be the set of seeds $r$ corresponding to the
lower $\delta$-tail of $\alg^{(R)}(I)$, as in the proof of
Theorem~\ref{thm:design:cvar:static} and Eq.~\eqref{eq:S-worst-delta}. For $r \notin \mathcal{S}$, we keep all
dual variables at zero.
For $r \in \mathcal{S}$, let us define the dual updates for each variable $\lambda_i^{(r)}$ as follows:
\begin{align*}
\lambda_i^{(r)} =
\begin{cases}
\frac{\phi_i(r)}{2\delta}, & i<i^* \text{ or } (i=i^*, r\le r^*),\\
0, & \text{otherwise}.
\end{cases}
\end{align*}
Furthermore, for any buyer $t$ who receives a unit from the reserved units of price level $i$ under the random seed $r$, we set
\begin{align*}
u_t^{(r)} =
\begin{cases}
\frac{v_t - \phi_i(r)/2}{\delta}, & i<i^* \text{ or } (i=i^*, r\le r^*),\\
\frac{v_t}{\delta}, & \text{otherwise}.
\end{cases}
\end{align*}
Using the monotonicity and lower-bound lemmas, one checks that for every
$r \in \mathcal{S}$, the dual objective coincides with the algorithm’s revenue, i.e.,
$
\sum_{t} u_t^{(r)} + \sum_{j} \lambda_j^{(r)} q_j
= \alg^{(r)}(I).$
Since the dual variables are nonzero only on $\mathcal{S}$, taking expectation over
$R$ shows that $ \mathbb{E}_R\big[\sum_{t} u_t^{(R)} + \sum_{j} \lambda_j^{(R)} q_j\big]
= \cvar[\alg] $.

It remains to show $\alphaDDP$-feasibility of the dual constraints in expectation.
Fix a buyer $t$ with value $v_t$ and let $i$ be such that
$\phi_i(0) \le v_t \le \phi_i(1)$. Depending on the relative position of $i$
with respect to $i^*$ and on $\phi_i^{-1}(v_t)$ versus $r^*$, we distinguish
cases: (i) $i<i^*$ or $i=i^*$ with
$\phi_i^{-1}(v_t) \le r^*$, (ii) $i=i^*$ with $\phi_i^{-1}(v_t) > r^*$, and
(iii) $i>i^*$. In each case, using the explicit form of the dual updates and the
fact that the reserved units of levels $1,\dots,i^*-1$ (and possibly $i^*$) are
fully sold on appropriate intervals of $r$, we obtain lower bounds on
$u_t^{(r)} + \frac{1}{k}\sum_j \lambda_j^{(r)} q_j$ in terms of the pricing
functions. Integrating these lower bounds over a worst-case $\delta$-measure
subset of $[0,1]$ and using the recursive definition of $\phi_i$ yields
\begin{align*}
\mathbb{E}_R\left[u_t^{(R)} + \frac{1}{k}\sum_{j=1}^{\Delta+1}\lambda_j^{(R)} q_j\right]
\;\ge\; \frac{v_t}{\alphaDDP}
\quad \forall t \in [T].
\end{align*}
Thus, the dual solution
is $\alphaDDP$-feasible in expectation, and by weak duality we have
$\opt(I) \le  \mathbb{E}_R\big[\sum_{t} u_t^{(R)} + \sum_{j} \lambda_j^{(R)} q_j\big] = \cvar[\alg]$. This gives the desired
$\alphaDDP$-competitiveness of \algname, completing the proof sketch.

\subsection{Proof of Theorem~\ref{thm:general:cvar:design:phi}}
\label{apx:proof:general:thm}

Let us define $y^*$, $r^*$, and $i^*$ as in the proof of Theorem~\ref{thm:limited:price:change:optimality}, 
where $y^*$ denotes the highest number of units sold across all sample paths of 
Algorithm~\ref{alg:corr:PPM:limited:price:change} under the pricing scheme specified above. 
The value $r^*$ represents the realization of the random seed~$R$ for which the number of sold units equals~$y^*$, 
that is, $y_T^{(r^*)} = y^*$. 
Similarly, $i^*$ denotes the highest price level such that, under the realization $R = r^*$, 
the algorithm fully allocates all reserved units from price levels $1$ through~$i^*$. 

Furthermore, the same monotonicity and lower-bound results as those stated in 
Lemma~\ref{lem:monotonicity:1} and Lemma~\ref{lem:lower-bound:yr} can be established here 
by following analogous proof arguments, 
given the constraints on the reservation vector~$\{q_i\}_{i \in [\Delta+1]}$ 
and the pricing design in the theorem above. 
The only distinction between the constraint set of reserved quantities~$\{q_i\}_{i \in [\Delta+1]}$ 
and that in Theorem~\ref{thm:limited:price:change:optimality} 
is that here $q_1 = \lceil \tfrac{k}{\alpha} \rceil$, 
and the monotonicity property applies to the remaining reserved quantities 
$\{q_i\}_{i \in \{2, \dots, \Delta+1\}}$. 
However, since the pricing function $\phi_1$ is fixed to the constant value~$L$, 
it follows that the same monotonicity and lower-bound results 
(Lemma~\ref{lem:monotonicity:1} and Lemma~\ref{lem:lower-bound:yr}) continue to hold. 

Given an instance $I$ of the problem, $\opt(I)$ can be upper-bounded using the dual LP give in Eq.~\eqref{eq:dual:oks:expected}. We will restate the LP as follows:
\begin{align*}
\min_{{u_t}, {\lambda_j}} \quad  \sum_{t \in [T]} u_t + \sum_{j=1}^{\Delta+1} \lambda_j \cdot q_j 
\quad \text{s.t.} \qquad  & v_t \le u_t + \frac{1}{k} \sum_{j=1}^{\Delta+1} \lambda_j \cdot q_j, \quad \forall t \in [T]. 
\end{align*}

We begin by initializing all dual variables to zero. 
Let us define the set $\mathcal{S}$ as defined for the proof of Theorem~\ref{thm:design:cvar:static} such that
\begin{align*}
    \mathcal{S} &= \big\{  r \in [0,\tilde{r}] \;\big|\; F_{\alg(I)}((\alg^{(r)})^{-}) \le \delta  \big\},
\end{align*}
where $\tilde{r} = \inf_{r \in [0,1]} 
    \left\{
        \int_{x=0}^{r} 
        \mathbf{1}\{ F_{\alg(I)}((\alg^{(x)})^{-}) \le \delta \}  dx 
        \ge \delta
    \right\} $.

Suppose the random seed is realized as $R = r$. 
If $r \in \mathcal{S}$, we proceed with updating the dual variables; 
otherwise, no update is performed. 
The dual variables $\{\lambda^{(r)}_{i}\}_{i=1}^{\Delta+1}$ are updated as follows:
\begin{align}
\label{eq::general:risk::dual:update:lambda}
\lambda_i^{(r)} =
\begin{cases}
\frac{\phi_i(r)}{2 \cdot \delta}, & r \in \mathcal{S},\; i \in \{1,2,\dots,i^*-1\}, \\[2pt]
\frac{\phi_i(r)}{2 \cdot \delta}, & r \in \mathcal{S},\; i = i^*,\; r \in [0,  r^*], \\[2pt]
0,         & \text{otherwise}.
\end{cases}
\end{align}

Next, consider a buyer~$t$ who receives an allocation of one unit when the random seed is realized as $R = r$. 
Suppose this unit is allocated from the reserved quantity associated with the $i$-th price level. 
Then, under the realization $R = r$, where $r \in \mathcal{S}$, 
we update the dual variable $u_t^{(r)}$ as follows:
\begin{align}
\label{eq:dual:general:risk:update:u}
u_t^{(r)} =
\begin{cases}
\frac{1}{\delta} \cdot (v_t - \frac{\phi_i(r)}{2}), & \text{if } r \in \mathcal{S} \text{ and } \big(i < i^* \text{ or } (i = i^* \text{ and } r \le r^*)\big), \\[2pt]
\frac{v_t}{\delta}, & \text{otherwise}.
\end{cases}
\end{align}

It can be verified that, under these updates—together with the lower bound established in Lemma~\ref{lem:lower-bound:yr} and the monotonicity property in Lemma~\ref{lem:monotonicity:1}—the total objective value of the dual solution equals the algorithm’s objective when the random seed is realized as $r \in \mathcal{S}$. 
In other words, we have 
$\sum_{t \in [T]} u_t^{(r)} + \sum_{j=1}^{\Delta+1} \lambda_j^{(r)} \cdot q_j = \alg^{(r)}(I)$. 
Since the dual updates are performed only for values of $r \in \mathcal{S}$, it follows that 
$\cvar[\alg] = \mathbb{E}_{R}\left[\sum_{t \in [T]} u_t^{(R)} + \sum_{j=1}^{\Delta+1} \lambda_j^{(R)} \cdot q_j\right]$.
Next, we show that for all buyers $t \in [T]$, the dual constraint in the above dual LP is $\alpha$-feasible in expectation; that is,
$
\mathbb{E}_{R} \left[ u_t^{(R)} + \frac{1}{k} \sum_{j=1}^{\Delta+1} \lambda_j^{(R)} \cdot q_j \right] \ge \frac{v_t}{\alpha},
$
thereby completing the primal–dual analysis and establishing the $\alpha$-competitiveness of the algorithm.

Moving forward, we assume that $\phi_{i^*+1}(r^*) > L$. This assumption is without loss of generality. Indeed, if for some instance $I$ we have $\phi_{i^*+1}(r^*) \le L$, then all buyers in that instance are accepted by \algname. To see this, note that the highest posted price used by \algname on instance $I$ is at most $\phi_{i^*+1}(r^*) \le L$. Since every buyer value satisfies $v_t \ge L$ by Assumption~\ref{ass:bounded-value}, every buyer accepts the posted price. Therefore, in this case, \algname accepts all buyers in instance $I$.

Consider a buyer~$t$ with value~$v_t$ such that, for some $i \in [\Delta+1]$, we have $\phi_i(0) \le v_t \le \phi_i(1)$. 
To prove the $\alpha$-feasibility of the dual constraints, we analyze the following cases.

\textbf{Case I:} Either $i \le i^* - 1$, or $i = i^*$ and $\phi_i^{-1}(v_t) \le r^*$.  
Based on the dual updates defined in Eq.~\eqref{eq::general:risk::dual:update:lambda}, and given that, for all realizations of $R \in [0,1]$, the reserved units corresponding to the first $i^*-1$ price levels are fully sold, while for $R \in [0, r^*]$, the reserved units at the $i^*$-th price level are also exhausted, we have:
\begin{align*}
\frac{\sum_{l=1}^{\Delta+1} \lambda_l^{(r)} \cdot q_l}{k} =
\begin{cases}
\displaystyle \frac{ \sum_{l=1}^{i^*} q_l \cdot \phi_l(r)}{2\cdot k \cdot \delta} , & \text{if } r \in [0, r^*] \text{ and } r \in \mathcal{S}, \\[4pt]
\displaystyle \frac{\sum_{l=1}^{i^*-1} q_l \cdot \phi_l(r)}{2\cdot k \cdot \delta}w, & \text{if } r \in [r^*, 1] \text{ and } r \in \mathcal{S}.
\end{cases}
\end{align*}

Thus, in the worst-case, we will have:
\begin{align*}
    \mathbb{E}_{R} \left[u_t^{(R)} + \frac{ \sum_{l=1}^{\Delta+1} \lambda_l^{(R)} \cdot q_l}{k} \right]
    \geq & \frac{1}{2\cdot k \cdot \delta} \left(
        \sum_{r=1}^{i^*-1} \int_{r^*}^{\min\{1,  r^*+\delta\}} q_r\cdot  \phi_r(\eta)  d\eta
        + \sum_{r=1}^{i^*} \int_{0}^{\max\{0,  \delta - 1 + r^*\}} q_r \cdot \phi_r(\eta)  d\eta
    \right) \\
    = & \frac{\phi_{i^*}(r^*)}{\alpha}
    \ge \frac{v_t}{\alpha}.
\end{align*}
The first inequality above follows directly from the design of the pricing functions specified in Theorem~\ref{thm:general:cvar:design:phi}, 
while the second inequality follows from the condition defined for Case~I. 
Therefore, the dual constraint is $\alpha$-feasible in expectation in this case.

\textbf{Case II:} $i = i^*$ and $\phi_i^{-1}(v_t) > r^*$.
Let $w =\phi_i^{-1}(v_t)$. Given the update rules defined in Eqs.~\eqref{eq::general:risk::dual:update:lambda}-\eqref{eq:dual:general:risk:update:u}, we have:
\begin{align*}
u_{t}^{(r)} + \frac{\sum_{l=1}^{\Delta+1} \lambda_l^{(r)} \cdot q_l}{k} \ge
\begin{cases}
\displaystyle \frac{ \sum_{l=1}^{i^*} q_l \cdot \phi_l(r)}{2\cdot k \cdot \delta} , & \text{if } r \in [0, r^*] \text{ and } r \in \mathcal{S}, \\[4pt]
\displaystyle 
\min\left\{
\frac{v_t}{\delta},
\frac{1}{\delta}
\left(
v_t
-\frac{1}{2}\phi_{i^*-1}(r)
+\frac{\sum_{l=1}^{i^*-1} q_l \phi_l(r)}{2k}
\right)
\right\}
\\=
\frac{1}{\delta}
\left(
v_t
-\frac{1}{2}\phi_{i^*-1}(r)
+\frac{\sum_{l=1}^{i^*-1} q_l \phi_l(r)}{2k}
\right), & \text{if } r \in [r^*, w] \text{ and } r \in \mathcal{S},\\
\displaystyle \frac{ \sum_{l=1}^{i^*-1} q_l \cdot \phi_l(r)}{2\cdot k  \cdot \delta} , & \text{if } r \in [w,1] \text{ and } r \in \mathcal{S}.
\end{cases}
\end{align*}

Following the fact that $v_t - \frac{1}{2}\cdot \phi_{i^*-1}(r) \ge \frac{q_{i^*} \cdot \phi_{i^*}(r)}{2 \cdot k}$ for values of $r \in [r^*,w]$, in the worst case, we have:
\begin{align*}
    \mathbb{E}_R \left[u_t^{(R)} + \frac{ \sum_{l=1}^{\Delta+1} \lambda_l^{(R)} \cdot q_l}{k} \right]
    \geq & \frac{1}{2\cdot k \delta} 
        \sum_{j=1}^{i^*-1} \int_{w}^{\min\{1,  w+\delta\}} q_j \cdot \phi_j(\eta)  d\eta
        + \sum_{j=1}^{i^*} \int_{0}^{\max\{0,  \delta - 1 + w \}} q_j \cdot \phi_j(\eta)  d\eta  \\
    = & \frac{\phi_{i^*}(w)}{\alpha} = \frac{v_t}{\alpha}.
\end{align*}
where the first inequality follows from the design of the pricing functions in above theorem and the second inequality follows from the condition set be case one. Thus, the dual constraint holds in this case. 
The proof of $\alpha$-feasibility for the dual constraints corresponding to the remaining cases, where $i > i^*$, follows analogously from the above analysis.

\subsection{Case study of Theorem ~\ref{thm:general:cvar:design:phi} }
\label{apx:corrolary:theorem:4}
\begin{proof}
Set
\[
q_1:=\left\lceil \frac{k}{\alpha}\right\rceil,
\qquad
q_2=q_3=:q=\frac{k-q_1}{2},
\qquad
C:=\frac{\alpha q_1}{2k}L,
\qquad
\lambda:=\frac{\alpha q}{2k}.
\]
We specialize Theorem~\ref{thm:general:cvar:design:phi} to $\Delta=2$.

For $i=2$, the theorem gives
\[
\phi_2(x)=\frac{\alpha}{2k\delta}\left(q_1L\delta+\int_0^{\max\{0,x+\delta-1\}} q\,\phi_2(\eta)\,d\eta\right).
\]
If $x\le 1-\delta$, then the integral vanishes and $\phi_2(x)=C$. If $x>1-\delta$, then
$\max\{0,x+\delta-1\}\le \delta\le 1-\delta$, so $\phi_2(\eta)=C$ on the integration range, and hence
\[
\phi_2(x)=C+\frac{\alpha q}{2k\delta}\,C(x+\delta-1)
=C+\frac{C\lambda}{\delta}(x+\delta-1).
\]
Thus $\phi_2$ has the claimed form with $A_1=C\lambda/\delta$.

For $i=3$, the theorem gives
\[
\phi_3(x)=\frac{\alpha}{2k\delta}\left(
q_1L\delta+\int_x^{\min\{1,x+\delta\}} q\,\phi_2(\eta)\,d\eta
+\int_0^{\max\{0,x+\delta-1\}} q\,\phi_2(\eta)\,d\eta
+\int_0^{\max\{0,x+\delta-1\}} q\,\phi_3(\eta)\,d\eta
\right).
\]
If $0\le x\le 1-2\delta$, then $\phi_2(\eta)=C$ on $[x,x+\delta]$, and both backward integrals vanish, so
\[
\phi_3(x)=C+\lambda C=:B_0.
\]
If $1-2\delta<x\le 1-\delta$, then the backward integrals still vanish, while the interval
$[x,x+\delta]$ intersects the affine part of $\phi_2$ over a segment of length $x+2\delta-1$; thus
\[
\phi_3(x)=B_0+\frac{C\lambda^2}{2\delta^2}(x+2\delta-1)^2,
\]
so $B_2=C\lambda^2/(2\delta^2)$.
Finally, if $1-\delta<x\le 1$, write $t:=x+\delta-1\in(0,\delta]$. Then
$\phi_2(\eta)=C$ on $[0,t]$, and
\[
\int_0^t \phi_3(\eta)\,d\eta
=
B_0 t
+\mathbf{1}_{\{t>1-2\delta\}}\frac{B_2}{3}(t+2\delta-1)^3.
\]
Substituting this into the recursion gives
\[
\phi_3(x)=
C_0+C_1(x+\delta-1)-C_2(x+\delta-1)^2
+\mathbf{1}_{\{x>2-3\delta\}}\,C_3(x+3\delta-2)^3,
\]
where
\[
C_0=C\left(1+\lambda+\frac{\lambda^2}{2}\right),\quad
C_1=\frac{C\lambda(1+\lambda)}{\delta},\quad
C_2=\frac{C\lambda^2}{2\delta^2},\quad
C_3=\frac{C\lambda^3}{6\delta^3}.
\]
This proves the formulas for $\phi_2$ and $\phi_3$.

It remains to derive the equation for $\alpha$. Since $\delta\le \tfrac12$, we have
$\phi_2(\eta)=C$ for all $\eta\in[0,\delta]$, and
\[
\phi_3(\eta)=
B_0+\mathbf{1}_{\{\delta>1/3\}}\,B_2(\eta+2\delta-1)^2
\qquad \text{for }\eta\in[0,\delta].
\]
Hence
\[
\int_0^\delta \phi_2(\eta)\,d\eta=C\delta,
\qquad
\int_0^\delta \phi_3(\eta)\,d\eta
=
B_0\delta+\mathbf{1}_{\{\delta>1/3\}}\,\frac{B_2}{3}(3\delta-1)^3.
\]
Substituting these into the boundary equation in
Theorem~\ref{thm:general:cvar:design:phi} yields
\[
\frac{U}{L}
=
\frac{\alpha q_1}{2k}
\left[
1+2\lambda+\lambda^2
+
\mathbf{1}_{\{\delta>1/3\}}
\frac{\lambda^3}{6}\left(\frac{3\delta-1}{\delta}\right)^3
\right].
\]

Now define
\[
\rho:=\frac{\alpha q_1}{k}.
\]
Since $q_1=\lceil k/\alpha\rceil$, we have
\[
1\le \rho=\frac{\alpha}{k}\left\lceil \frac{k}{\alpha}\right\rceil
\le 1+\frac{\alpha}{k}\le 1+\frac{\alpha}{3},
\]
because $\Delta=2$ implies $k\ge 3$. Moreover, $\lambda=(\alpha-\rho)/4$, and so
\[
\frac{U}{L}
=
\frac{\rho(\alpha-\rho+4)^2}{32}
+
\mathbf{1}_{\{\delta>1/3\}}
\frac{\rho(\alpha-\rho)^3(3\delta-1)^3}{768\,\delta^3}.
\]
Since $\rho\ge 1$ and $\alpha-\rho\ge \frac23\alpha-1$, it follows that

\[
\frac{U}{L}\ge \frac{(\frac23\alpha+3)^2}{32},
\qquad\text{for }0<\delta\le \tfrac13,
\]
and
\[
\frac{U}{L}\ge
\frac{(\frac23\alpha-1)^3(3\delta-1)^3}{768\,\delta^3},
\qquad\text{for }\tfrac13<\delta\le \tfrac12.
\]
Rearranging gives
\[
\alpha \lesssim \left(\frac{U}{L}\right)^{1/2},
\qquad\text{for }0<\delta\le \tfrac13,
\]
and
\[
\alpha \lesssim \frac{\delta}{3\delta-1}\left(\frac{U}{L}\right)^{1/3}+1,
\qquad\text{for }\tfrac13<\delta\le \tfrac12.
\]
This proves the claim.
\end{proof}

\section{Revisiting Theorem~\ref{thm:k:cvar:design:phi}: Pricing Design in Fully-dynamic Setting and the Proof of Optimality}
\label{apx:sec:full:dynamic} 

In this section, we revisit Theorem~\ref{thm:k:cvar:design:phi}, and provide a design for an 
online mechanism that uses $k$ pricing functions and achieves exact optimal performance under the 
\cvar\ metric in the large-inventory regime where $k \rightarrow \infty$.  

\begin{theorem}[\textsc{Risk-Sensitive Fully-Dynamic Pricing}]
\label{apx:thm:k:cvar:design:phi}
Consider \oksrisk with $ \delta = [0, 1] $ and $\Delta = k - 1$. The \crcvar of \algname is $ \alphaFDP $ if (i) $\alphaFDP$ is given by
\begin{align} \label{eq:alpha_fully_dynamic}
\alphaFDP = \frac{kU\delta}{\sum_{i=1}^{k} \int_{0}^{\delta} \phi_i(\eta)\, d\eta},
\end{align}
and (ii) the pricing functions $ \boldsymbol{\phi} = \{\phi_i\}_{i \in [k]}$ are recursively designed as follows:
\begin{itemize}
    \item For all 
    $i \in \bigl\{1, 2, \dots, \bigl\lfloor \tfrac{k}{\alphaFDP} \bigr\rfloor \bigr\}$, 
    the pricing function is a constant: $ \phi_i(x) = L $. 

    \item For 
    $i = \bigl\lfloor \tfrac{k}{\alphaFDP} \bigr\rfloor + 1$, 
    the pricing function $\phi_i(x)$ is defined as
    \begin{align*}
    \phi_i(x) =
    \begin{cases}
    L &\quad x \in \bigl[0,\, 1 - \delta + A \delta \bigr], \\
    \displaystyle 
    \frac{\alphaFDP}{k \delta} \left(
        \lfloor k/\alphaFDP \rfloor \cdot L \cdot \delta +  \int_{0}^{\delta - 1 + x} \phi_i(\eta)\, d\eta
    \right) 
    &\quad  x \in \bigl[1 - \delta + A \delta,\, 1\bigr],
    \end{cases}
    \end{align*}
    where $A = \tfrac{k}{\alphaFDP} - \lfloor \tfrac{k}{\alphaFDP} \rfloor$ and $ \alphaFDP $ is given by Eq. \eqref{eq:alpha_fully_dynamic}.

    \item For all 
    $i \in \bigl\{\bigl\lfloor \tfrac{k}{\alphaFDP} \bigr\rfloor + 2, \dots, k\bigr\}$, 
    the pricing function $\phi_i(x)$ is given by
    \begin{align*} 
    \phi_i(x) = \frac{\alphaFDP}{k \delta} \left(
        \sum_{j=1}^{i-1} \int_{x}^{\min\{1,\, x+\delta\}} \phi_j(\eta)\, d\eta
        + \sum_{j=1}^{i} \int_{0}^{\max\{0,\, \delta - 1 + x\}} \phi_j(\eta)\, d\eta
    \right). 
    \end{align*}
\end{itemize}
\end{theorem}

Theorem~\ref{apx:thm:k:cvar:design:phi} focuses on another special case of the \oksrisk\ problem in 
which the online algorithm is allowed up to $k-1$ price changes. In this setting, the price-change 
cap constraint is effectively relaxed, allowing the decision maker to employ a fully dynamic 
pricing scheme.
The proof of optimality of this design follows from 
Theorem~\ref{apx:thm:k:cvar:design:phi} using simple mathematical arguments, and we defer the full 
proof to Appendix~\ref{apx:prop:order:optimal:k:risk}.
Below, we first outline the proof road map for Theorem~\ref{apx:thm:k:cvar:design:phi}, and then, in the 
subsequent section, we provide the detailed proof.

\paragraph{Proof Overview.}
We next provide a proof of Theorem~\ref{apx:thm:k:cvar:design:phi} that departs from the randomized primal–dual framework used earlier and instead focuses on interpreting the correlated pricing scheme in Algorithm~\ref{alg:corr:PPM:limited:price:change} as a rounding method for fractional allocations. 
This proof approach illustrates how the correlated pricing scheme employed by Algorithm~\ref{alg:corr:PPM:limited:price:change} naturally induces randomized integral decisions that losslessly round those of a fractional algorithm using the same set of pricing functions. 
Hence, the correlated posted-pricing scheme in Algorithm~\ref{alg:corr:PPM:limited:price:change} can be viewed not only as a mechanism for ensuring incentive compatibility but also as a rounding scheme that converts fractional decisions into randomized integral ones without any loss in expected performance.

We now proceed with the detailed proof of Theorem~\ref{apx:thm:k:cvar:design:phi}.

\begin{proof}
Consider the following algorithm, denoted by \algfrac, which uses the set of pricing functions $\{\phi_i\}_{i=1}^{k}$ to generate the fractional allocation $\hat x_t$ for each arriving buyer~$t$ as follows:
\begin{align}
\label{eq:frac:k:risk}
\hat x_t = \argmax_{\{x\in[0,1]\}} \Bigg[ v_t\cdot x
\int_{ \hat y_t - \lfloor \hat y_t \rfloor }^{\min\{1, \hat y_t - \lfloor \hat y_t \rfloor+x \}} \phi_{\kappa_t}(\eta)d\eta + 
\int_{0}^{[\hat y_t - \lfloor \hat y_t \rfloor+x - 1]^{+}} \phi_{\kappa_t+1}(\eta)d\eta \Bigg],
\end{align}
where $\hat y_t$ denotes the cumulative fractional allocation upon the arrival of buyer~$t$, that is, $\hat y_t = \sum_{\tau=1}^{t-1} \hat x_{\tau}$, and $\kappa_t = 1 + \lfloor \hat y_t \rfloor$ is the index of next unit of item a fraction of which is allocated to buyer $t$ in case $\hat x_t \not = 0$. Also,  $\hat y_t - \lfloor \hat y_t \rfloor$ corresponds to the portion of $\kappa_t $-th unit that is already allocated.   
This utility–maximization rule is standard for producing fractional allocations in online selection and matching problems and is similar to the equation introduced in Section~\ref{sec:intuition:netural} for generating fractional allocation.  
At each arrival, the fractional quantity allocated to buyer~$t$ comes from portions of the $\kappa_t$-th and $\kappa_t$-st units of the resource.  
Thus, the pricing functions associated with these units determine the fractional allocation. 

\paragraph{Rounding Fractional Decisions $\hat x_t$ Losslessly.}  
We now show that the correlated pricing scheme used by \algname performs a randomized rounding of the fractional decisions produced by \algfrac such that a unit of item is allocated to each buyer $t$ with probability at least $\hat x_t$ equal to the fractional allocation generated by \algfrac.

\begin{lemma}
\label{lemma:ppm:frac:relation}
\algname allocates a unit of the item to buyer $t$ with probability at least $\hat x_t$. 
More specifically, upon the arrival of buyer $t$ (assuming the random seed $R = r$), the following holds:
\begin{itemize}
    \item If $\hat x_t + \hat y_t \le \kappa_t$ and 
    $r \in \bigl[\hat y_t - \lfloor \hat y_t \rfloor,\; \hat y_t + \hat x_t - \lfloor \hat y_t \rfloor \bigr)$, 
    then a unit of the item is allocated to buyer~$t$.

    \item If $\hat x_t + \hat y_t > \kappa_t$ and 
    $\bigl(r \in \bigl[0,\; \hat y_t + \hat x_t - \lfloor \hat y_t \rfloor - 1 \bigr] 
    \;\text{or}\; 
    r \in \bigl[\hat y_t - \lfloor \hat y_t \rfloor,\; 1 \bigr]\bigr)$, 
    then a unit of the item is allocated to buyer~$t$.
\end{itemize}
\end{lemma}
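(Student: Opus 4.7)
The plan is to prove the lemma by strong induction on $t$, maintaining a structural invariant that precisely describes how the integral utilization $y_t^{(r)}$ of \algname depends on the random seed $r$. Writing $\beta_t = \hat y_t - \lfloor \hat y_t \rfloor$ for the fractional part of the cumulative fractional allocation of \algfrac, I will show that for every $t$ and every $r \in [0,1]$,
\[
y_t^{(r)} = \begin{cases} \lceil \hat y_t \rceil & \text{if } r < \beta_t, \\ \lfloor \hat y_t \rfloor & \text{if } r \ge \beta_t. \end{cases}
\]
This invariant captures the key consequence of using a single correlated seed: at every step the integer state of \algname deviates from the fractional state $\hat y_t$ by at most one unit, with the deviation determined monotonically by $r$, consistently with Lemma~\ref{lem:monotonicity:1}. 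The base case $t=1$ is immediate, since $\hat y_1 = y_1^{(r)} = 0$ for all $r$.

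For the inductive step, suppose the invariant holds at step $t$, and let buyer $t$ arrive with valuation $v_t$. The first-order optimality condition of the utility maximization in Eq.~\eqref{eq:frac:k:risk} characterizes $\hat x_t$ through the pricing functions: in the no-spillover case ($\hat x_t + \hat y_t \le \kappa_t$) it forces $v_t = \phi_{\kappa_t}(\beta_t + \hat x_t)$, while in the spillover case it forces $v_t = \phi_{\kappa_t+1}(\hat x_t + \beta_t - 1)$. Using the invariant, for $r \ge \beta_t$ the current level of \algname is $\kappa_t$ and the posted price is $\phi_{\kappa_t}(r)$, while for $r < \beta_t$ the level is $\kappa_t + 1$ and the posted price is $\phi_{\kappa_t+1}(r)$. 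Comparing these posted prices against $v_t$ and invoking the cross-level dominance $\phi_j(1) \le \phi_{j+1}(0)$ built into the pricing profile immediately pins down the set of $r$ for which buyer $t$ accepts: in the no-spillover case this set is the single interval $[\beta_t,\, \beta_t + \hat x_t)$, while in the spillover case it is $[0,\, \hat x_t + \beta_t - 1] \cup [\beta_t,\, 1]$, exactly matching the two bullets of the lemma. Each set has total Lebesgue measure $\hat x_t$, which gives the desired probability guarantee.

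To close the induction, I will verify that after updating the state the invariant continues to hold at step $t+1$; this is a short bookkeeping step, using that $\hat x_t \in [0,1]$ guarantees spillover advances the fractional state by at most one integer unit. The main obstacle I anticipate is handling boundary situations cleanly: the cases $\hat x_t = 0$, $\beta_t = 0$, and $\hat x_t + \beta_t = 1$ (where $\hat y_{t+1}$ lands exactly on an integer) need to be reconciled with the half-open versus closed intervals in the lemma, and the first-order condition has to be replaced by a supremum argument whenever the relevant pricing function $\phi_j$ is flat at the value $v_t$. Once these edge cases are dispatched via the monotonicity of each $\phi_j$ together with the cross-level dominance, the induction closes and both allocation-range characterizations stated in the lemma follow directly.
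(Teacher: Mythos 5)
Your proposal is correct in its essential logic but takes a genuinely different route from the paper. You prove the lemma by maintaining, inductively over buyers, an \emph{exact} coupling invariant $y_t^{(r)} = \lceil \hat y_t\rceil$ for $r<\beta_t$ and $y_t^{(r)}=\lfloor\hat y_t\rfloor$ for $r\ge\beta_t$, from which the acceptance set of buyer $t$ is read off by comparing $v_t$ against the posted price at the level the invariant dictates. The paper instead argues per buyer and only in one direction: it shows that for every $r$ in the stated interval the relevant unit is still \emph{available}, via a contradiction argument (if \algname had already exhausted the first $\kappa_t$ units at seed $r$, the responsible earlier buyers would force \algfrac's utilization past $\hat y_t$), and then notes that Eq.~\eqref{eq:frac:k:risk} guarantees $v_t\ge\phi_{\kappa_t}(x)$ on that interval so the price is accepted. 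Your approach buys more --- an exact characterization of $y_t^{(r)}$, which is essentially what the paper later asserts informally as ``Fact 2'' when computing \cvar --- but it is also more fragile than what the lemma needs: to close the induction you must show buyers are \emph{rejected} outside the stated intervals, and this fails on segments where $\phi_{\kappa_t}$ is flat (e.g.\ the constant-$L$ levels and the initial plateau of level $\lfloor k/\alphaFDP\rfloor+1$) unless the argmax in Eq.~\eqref{eq:frac:k:risk} breaks ties toward the largest maximizer, consistently with the weak-inequality acceptance rule $v_t\ge p_t$. Note that for the lemma itself only the \emph{upper} bound $y_t^{(r)}\le$ (invariant value) is needed --- a smaller $y_t^{(r)}$ only lowers the posted price and cannot cause a spurious rejection --- and that upper bound is exactly what the paper's contradiction argument delivers directly, without tracking the lower bound at all. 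If you either adopt the largest-maximizer convention explicitly or weaken your invariant to the one-sided inequality, your induction closes and yields a valid (indeed slightly stronger) proof.
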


\begin{proof}
Consider a buyer $t$ in instance $I$ whose fractional allocation in nonzero, in other words we have $\hat x_t \neq 0$.

\textbf{(Case I)}
Suppose $\hat y_t + \hat x_t \le \kappa_t$.
We first show that if the random seed $R$ lies in
$\bigl[\hat y_t - \lfloor \hat y_t \rfloor,\; \hat y_t + \hat x_t - \lfloor \hat y_t \rfloor \bigr]$,
then the $\kappa_t$-st unit of the item will be available at the arrival of buyer~$t$.
Furthermore, by Eq.~\eqref{eq:frac:k:risk}, we have $v_t \ge \phi_{\kappa_t}(x)$ for all
$x \in \bigl[\hat y_t - \lfloor \hat y_t \rfloor,\; \hat y_t + \hat x_t - \lfloor \hat y_t \rfloor \bigr]$.
Thus, the value of buyer~$t$ exceeds the posted price for the $\kappa_t$-st unit for every realization of the random seed in this interval and since at least one unit among the first $\kappa_t$ units will be available at the arrival of buyer~$t$, a unit will be allocated to buyer~$t$.

We now show, by contradiction, that the $\kappa_t$-st unit is always available at the arrival of buyer~$t$ for every value of the random seed in the specified range.
Assume, to the contrary, that for some $r$ in this interval, \algname has already allocated the first $\kappa_t$ units to buyers who arrived before buyer~$t$.
Then there must exist a sequence of $\kappa_t$ such buyers, where the $j$-th buyer in the sequence has value at least $\phi_j(r)$.
Since these buyers are part of instance~$I$, feeding them to the fractional algorithm~\algfrac would, by Eq.~\eqref{eq:frac:k:risk}, cause the total fractional utilization to exceed $\hat y_t$ and reach at least $\lfloor \hat y_t \rfloor + r > \hat y_t$ prior to the arrival of buyer~$t$, contradicting the definition of~$\hat y_t$.
Therefore, for every
$r \in \bigl[\hat y_t - \lfloor \hat y_t \rfloor, \hat y_t + \hat x_t - \lfloor \hat y_t \rfloor \bigr]$,
the $\kappa_t$-st unit must still be available at the arrival of buyer~$t$.
Thus, with probability at least $\hat x_t$, \algname allocates a unit to buyer~$t$.

\textbf{(Case II)} Suppose now that $\hat y_t + \hat x_t > \kappa_t$. 
Then one of the first $\kappa_t$ units is allocated to buyer~$t$ whenever 
$R \in \bigl[\hat y_t - \lfloor \hat y_t \rfloor,\, 1 \bigr]$, 
and one of the $(\kappa_t + 1)$ units is allocated when 
$R \in \bigl[0,\, \hat y_t + \hat x_t - \lfloor \hat y_t \rfloor - 1 \bigr]$. 
The argument mirrors the reasoning in Case~I. 
Hence, with probability at least $\hat x_t$, buyer~$t$ receives one unit.
\end{proof}

\paragraph{Upper-bounding $\opt(I)$.}  
Consider the following two cases.
\textit{Case 1: $\hat y_T = k$.}  
From Eq.~\eqref{eq:alpha_fully_dynamic}, we have $\phi_{k}(1) = U$, and therefore, we can simply upper-bound $\OPT(I) \le k \cdot U = k \cdot \phi_{\kappa_T}(\hat y_T)$.
\textit{Case 2: $\hat y_T < k$.}  
Since the total utilization of \algfrac never exceeds $\hat y_T$, there cannot be $k$ buyers in instance $I$ with value greater than $\phi_{\kappa_T}(\hat y_T - \lfloor \hat y_T \rfloor) < U$; otherwise, by Eq.~\eqref{eq:frac:k:risk}, the utilization of \algfrac would exceed $\hat y_T$, a contradiction.  
On the other hand, there may be fewer than $k$ such buyers. Thus, we can upper-bound the offline optimum as  
$\OPT(I) \le k \cdot \phi_{\kappa_T}(\hat y_T - \lfloor \hat y_T \rfloor) + \sum_{t \in [T]} v_t - \phi_{\kappa_T}(\hat y_T - \lfloor \hat y_T \rfloor)$. Thus, in both cases, the same general upper bound $ \phi_{\kappa_T}(\hat y_T - \lfloor \hat y_T \rfloor) + \sum_{t \in [T]} v_t - \phi_{\kappa_T}(\hat y_T - \lfloor \hat y_T \rfloor)$ on $\OPT(I)$ holds.

\paragraph{Computing \cvar of {\mdseries \algname} on Instance $I$.} 
In order to obtain a lower-bound for the \cvar social welfare of \algname on instance $I$, we need to establish the following two facts.

\textbf{Fact 1.}
For each buyer $t$ with value $v_t$ greater than $\phi_{\kappa_T }(\hat y_T - \lfloor \hat y_T \rfloor)$, the fractional allocation satisfies $\hat x_t = 1$.
This is because the total utilization level never exceeds $\hat y_T$, and the marginal price of the resource remains below $v_t$, thus based on Eq.~\eqref{eq:frac:k:risk}, $\hat x_t = 1$. Hence, \algfrac allocates a full unit of the resource to every such buyer.
Moreover, for each buyer with $\hat x_t = 1$, Lemma~\ref{lemma:ppm:frac:relation} implies that, for every realized value of the random seed, \algname also allocates a unit of the item to that buyer.
Thus, across all sample paths of the randomized algorithm, the social welfare of \algname is always incremented by $v_t$.

\textbf{Fact 2.}
For any value $r \in [0,1]$, define the subset of buyers $\hat B^{(r)}_T$ as
\begin{align*} 
\hat B^{(r)}_T = \left\{ t \in [T] \,\middle|\, \hat x_t \not = 0, \ \sum_{\substack{t' < t }} \hat x_{t'} \leq i + r < \sum_{\substack{t' \leq t}} \hat x_{t'} \text{ for some } i \in \{0,1,\dots,k\} \right\}. 
\end{align*}

The set $\hat B^{(r)}_T$ consists of those buyers whose fractional allocation is nonzero and for whom $r$ falls inside the fractional portion of a unit allocated to them. By Lemma~\ref{lemma:ppm:frac:relation}, for every realization $R = r$, a unit of item is allocated to all buyers in the set $\hat B^{(r)}_T$ by \algname.
Furthermore, since the total fractional allocation is $\hat y_T$ and each $\hat x_t \in [0,1]$, the size of $\hat B^{(r)}_T$ satisfies
\begin{equation}
\left|\hat{B}^{(r)}_T\right|
=
\begin{cases}
\lfloor \hat{y}_T \rfloor + 1,
& \text{for } r \in \left[0,\; \hat{y}_T - \lfloor \hat{y}_T \rfloor\right], \\[4pt]
\lfloor \hat{y}_T \rfloor,
& \text{for } r \in \left[\hat{y}_T - \lfloor \hat{y}_T \rfloor,\; 1\right].
\end{cases}
\end{equation}

Thus, when the random seed lies in $[0, \hat{y}_T - \lfloor \hat{y}_T \rfloor]$, the first $\lfloor \hat{y}_T \rfloor + 1$ units are sold at the price levels determined by the pricing functions corresponding to levels $1$ through $\lfloor \hat{y}_T \rfloor + 1$, evaluated at the realized value of $r$.
Similarly, when $r \in [\hat{y}_T - \lfloor \hat{y}_T \rfloor, 1]$, the first $\lfloor \hat{y}_T \rfloor$ units are sold at the price levels determined by the pricing functions corresponding to levels $1$ through $\lfloor \hat{y}_T \rfloor$, again evaluated at the realized value of~$r$.

Putting together Facts 1 and 2, we can lower-bound the \cvar performance of \algname over different realization of random seed $R$ as follows:

\begin{align*}
\alg^{(r)}(I) \geq  
\begin{cases}
\displaystyle 
\int_{0}^{\hat y_T - \lfloor \hat y_T \rfloor}
  \left(\sum_{i=1}^{\lfloor \hat y_T \rfloor + 1} \phi_i(r)\right)dr
\;+\;
\int_{\hat y_T - \lfloor \hat y_T \rfloor}^{1}
  \left(\sum_{i=1}^{\lfloor \hat y_T \rfloor} \phi_i(r)\right)dr
\;+\; \\\\
\hspace{+4.6cm} \displaystyle \sum_{t \in [T]} v_t 
-\phi_{\kappa_T+1}(\hat y_T - \lfloor \hat y_T \rfloor),
& r \in [0,\, \hat y_T - \lfloor \hat y_T \rfloor], \\[1.2em]
\displaystyle 
\int_{\hat y_T - \lfloor \hat y_T \rfloor}^{1}
  \left(\sum_{i=1}^{\lfloor \hat y_T \rfloor} \phi_i(r)\right)dr
\;+\;
\sum_{t \in [T]} v_t 
-\phi_{\kappa_T+1}(\hat y_T - \lfloor \hat y_T \rfloor),
& r \in [\hat y_T - \lfloor \hat y_T \rfloor,\,1].
\end{cases}
\end{align*}

Thus, even in the worst case, we can lower-bound \cvar as follows:
\begin{align*}
\cvar[\alg] & \geq \frac{1}{\delta} \cdot \Bigg( 
\int_{r = 0}^{\max\{0,\delta - (1 - \hat y_T + \lfloor \hat{y}_T \rfloor )\}} 
\sum_{i=1}^{\lfloor \hat{y}_T \rfloor + 1} \phi_i(r) \, d r 
+ \\
 & \qquad \int_{r= \hat y_T - \lfloor \hat{y}_T \rfloor}^{\min\{\delta + \hat y_T - \lfloor \hat{y}_T \rfloor, 1\} } 
\sum_{i=1}^{\lfloor \hat{y}_T \rfloor} \phi_i(r) \, d r + \delta \cdot \sum_{t \in [T]} v_t - \phi_{\kappa_T+1}(\hat y_T - \lfloor \hat y_T \rfloor)
\Bigg).
\end{align*}

It can be verified that, based on the design of the $\phi$ functions given in Theorem~\ref{apx:thm:k:cvar:design:phi}, the right-hand side of the above inequality is exactly
$
\frac{k}{\alphaFDP} \cdot \phi_{\kappa_T+1}(\hat y_T - \lfloor \hat{y}_T \rfloor) + \sum_{t \in [T]} v_t - \phi_{\kappa_T+1}(\hat y_T - \lfloor \hat y_T \rfloor).
$
Therefore,
$
\cvar[\alg] \geq \frac{k \cdot \phi_{\kappa_T+1}(\hat y_T - \lfloor \hat{y}_T \rfloor) + \sum_{t \in [T]} v_t - \phi_{\kappa_T+1}(\hat y_T - \lfloor \hat y_T \rfloor)}{\alphaFDP} \ge \frac{1}{\alphaFDP} \cdot \opt(I), 
$ and the $\alphaFDP$ of \algname over all instances of \oksrisk problem is established.
\end{proof}

\subsection{Proof of Optimality of Design in Theorem~\ref{apx:thm:k:cvar:design:phi}}

\label{apx:prop:order:optimal:k:risk}

In the following, we will prove that for values of $\alpha \ge 1 + \ln(\frac{U}{L})$, according to the pricing design in Theorem~\ref{thm:k:cvar:design:phi}, we will have $\phi_{k}(1) \ge U$, and thus there exists a design that obtains $1 + \ln(\frac{U}{L})$-\cvar competitive. Following, the well-established lower-bound $1+\ln(\frac{U}{L})$, the proof of above theorem follows. 

Let us set $m = \lfloor \frac{k}{\alpha} \rfloor$ and for $i\ge m+2$ define
\begin{align*}
S_i:=\sum_{r=1}^{i}\int_{0}^{\delta}\phi_r(\eta) d\eta,\qquad
b_i =\phi_i(0),\qquad c_i =\phi_i(1).
\end{align*}
Evaluating $\phi_i$ at $x=0$ and $x=1$ gives
\begin{equation}\label{eq:bc}
b_i=\frac{\alpha}{k \delta} S_{i-1},\qquad
c_i=\frac{\alpha}{k \delta} S_{i}\qquad (i\ge m+2),
\end{equation}
and since $\phi_r= L$ for $1\le r\le m$, we have the base mass
\begin{equation}\label{eq:base}
S_{m+1}\ \ge\ \sum_{r=1}^{m}\int_{0}^{\delta}L d\eta\ =\ L\cdot m \delta .
\end{equation}

We claim each $\phi_i$ is nondecreasing on $[0,1]$. This is clear for $i\le m$; for $i\ge m+1$ it follows by induction: if all $\phi_r$ with $r<i$ are nondecreasing, then in case $x+\delta \leq 1$, then we will have:
\begin{align*}
    \phi'_{i}(x) = \frac{\alpha}{k\cdot\delta} \cdot \left( \sum_{r=1}^{i-1} \phi_{r}(x+\delta) - \phi_{r}(x)\right) >0,
\end{align*}
where in above the inequality follows from the induction hypothesis and thus $\phi_{i}(x)$ for $x \in [0,1-\delta]$ is nondecreasing. Furthermore, for $x+\delta > 1$, we will have:
\begin{align*}
    \phi'_{i}(x) = \frac{\alpha}{k\cdot\delta} \cdot \left( \sum_{r=1}^{i} \phi_{r}(x+\delta-1) - \sum_{r=1}^{i-1} \phi_{r}(x)\right) >0,
\end{align*}
where the inequality follows from the fact that $\phi_r(1) = \phi_{r+1}(0)$ and the induction hypothesis. 
Thus, $\phi_{i}(x)$ for $x \in [1-\delta,1]$ is nondecreasing. 

If $g:[0,1]\to\mathbb{R}_{\ge0}$ is nondecreasing, then for every $x\in[0,\delta]$,
\begin{equation}\label{eq:cover}
\int_{x}^{\min\{1, x+\delta\}} g(\eta) d\eta\ +\ \int_{0}^{\max\{0, \delta-1+x\}} g(\eta) d\eta
\ \ \ge\ \ \int_{0}^{\delta} g(\eta) d\eta .
\end{equation}
Indeed, if $x\le 1-\delta$ the second integral vanishes and translating a length-$\delta$ window to the right can only increase the integral of nondecreasing $g$. If $x>1-\delta$, the left side equals $\int_{0}^{1}g-\int_{x+\delta-1}^{x}g$, and among all intervals of length $1-\delta$ the integral of $g$ is minimized on $[0,1-\delta]$, giving $\int_{x+\delta-1}^{x}g\le \int_{\delta}^{1}g$ and hence \eqref{eq:cover}.

Applying Eq. \eqref{eq:cover} to each nondecreasing $\phi_r$ ($r<i$) design given in Theorem~\ref{apx:thm:k:cvar:design:phi}, we obtain for any $x\in[0,\delta]$:
\begin{align*}
\phi_i(x)\ \ge\ \frac{\alpha}{k \delta}\sum_{r=1}^{i-1}\int_{0}^{\delta}\phi_r(\eta) d\eta
\ =\ \frac{\alpha}{k \delta} S_{i-1}\ =\ b_i.
\end{align*}
Integrating over $[0,\delta]$ yields
\begin{equation}\label{eq:Si-growth}
S_i=S_{i-1}+\int_{0}^{\delta}\phi_i(\eta) d\eta\ \ge\ S_{i-1}+\delta b_i
= S_{i-1}+\delta\cdot \frac{\alpha}{k \delta} S_{i-1}
=\Bigl(1+\frac{\alpha}{k}\Bigr)S_{i-1}\qquad (i\ge m+2).
\end{equation}

\paragraph{Putting Everything Together.}
Iterating Eq. \eqref{eq:Si-growth} from $i=m+2$ to $i=k$ and using Eq. \eqref{eq:base} gives
\begin{align*}
S_k\ \ge\ S_{m+1}\Bigl(1+\frac{\alpha}{k}\Bigr)^{k-(m+1)}
\ \ge L \cdot m \delta\ \Bigl(1+\frac{\alpha}{k}\Bigr)^{k-m-1}.
\end{align*}
Thus, by Eq. \eqref{eq:bc}, we have
\begin{align*}
\phi_k(1)=c_k=\frac{\alpha}{k \delta} S_k
\ \ge\ L \cdot \frac{\alpha m}{k} \Bigl(1+\frac{\alpha}{k}\Bigr)^{k-m-1}.
\end{align*}
Since $m=\big\lfloor k/\alpha\big\rfloor$, once $k \rightarrow \infty$, we have $\frac{\alpha m}{k}\to 1$ and
\begin{align*}
\lim_{k \rightarrow \infty} \phi_{k}(1) = \lim_{k \rightarrow \infty} L \cdot \Bigl(1+\frac{\alpha}{k}\Bigr)^{k-m-1}
= \lim_{k \rightarrow \infty} L \cdot \exp \Big((k-m-1)\log(1+\alpha/k)\Big)\ = L \cdot\ e^{ \alpha-1}.
\end{align*}
Thus, for values of $\alpha = 1 + \ln(\frac{U}{L})$, we will have $\phi_{k}(1) \rightarrow U$, satisfying Eq.~\eqref{eq:alpha_fully_dynamic}, and thus there exists a feasible solution for the design of $\{\phi_i\}_{i \in [\Delta+1]}$ for  value of $\alpha  = 1 + \ln(\frac{U}{L})$ satisfying the bound condition in Eq.~\eqref{eq:alpha_fully_dynamic}. On the other hand, since the $1 + \ln(\frac{U}{L})$ is an established lower-bound for the performance of any online algorithm for \oksrisk problem, the optimality of \algname based on the design given in Theorem~\ref{apx:thm:k:cvar:design:phi} follows.

\end{document}